\newcommand{\prob}{\operatorname{\mathsf{P}}}
\newcommand{\var}{\operatorname{\mathsf{Var}}}
\newcommand{\cov}{\operatorname{\mathsf{Cov}}}
\newcommand{\diff}{\mathrm{d}}
\newcommand{\abs}[1]{\left|{#1}\right|}
\newcommand{\point}{\,\cdot\,}
\newcommand{\eqd}{\stackrel{d}{=}}
\newcommand{\bR}{\bm{R}}
\newcommand{\bu}{\bm{u}}
\newcommand{\bU}{\bm{U}}
\newcommand{\bV}{\bm{V}}
\newcommand{\bw}{\bm{w}}
\newcommand{\bW}{\bm{W}}
\newcommand{\bx}{\bm{x}}
\newcommand{\bX}{\bm{X}}
\newcommand{\bY}{\bm{Y}}
\newcommand{\bZ}{\bm{Z}}
\newcommand{\by}{\bm{y}}
\newcommand{\bz}{\bm{z}}
\newcommand{\bzero}{\bm{0}}
\newcommand{\bone}{\bm{1}}
\newcommand{\binfty}{\bm{\infty}}
\newcommand{\EE}{\mathbb{E}}
\newcommand{\reals}{\mathbb{R}}
\newcommand{\Rd}{\reals^d}
\newcommand{\eps}{\varepsilon}
\newcommand{\symdiff}{\operatorname{\bigtriangleup}}
\newcommand{\norm}[1]{\lVert {#1} \rVert}
\newcommand{\1}{\operatorname{\mathds{1}}}
\renewcommand{\emptyset}{\varnothing}
\newcommand{\LL}{\mathbb{L}}
\newcommand{\dto}{\rightsquigarrow} 
\newcommand{\mBIC}{\operatorname{mBIC}}
\newcommand{\AIC}{\operatorname{AIC}}
\newcommand{\simplex}{\Delta}    
\newcommand{\Sd}{\simplex_{d-1}} 
\newcommand{\cbr}[1]{\left\{ {#1} \right\}}
\newcommand{\rbr}[1]{\left( {#1} \right)}
\newcommand{\sbr}[1]{\left[ {#1} \right]}
\newcommand{\xdf}{r} 
\newcommand{\xcdf}{R} 
\newcommand{\xqdf}{R^{-1}} 
\newcommand{\expmeas}{\Lambda} 
\newcommand{\bexpmeas}{\overline{\expmeas}} 
\newcommand{\nupdf}{\lambda} 
\newcommand{\IMP}{\bm{Z}} 
\newcommand{\IMPm}{Z} 
\newcommand{\Imp}{\IMPm} 
\newcommand{\imp}{\bm{z}} 
\newcommand{\xdfset}{\mathcal{R}} 
\newcommand{\cpdfset}{\mathcal{C}} 
\newcommand{\dset}{[d]}
\newcommand{\Vine}{\mathcal{V}}
\newcommand{\tree}{T}
\newcommand{\nodes}{N}
\newcommand{\edges}{E}
\newcommand{\cunn}{A}
\newcommand{\cndg}{D}
\newcommand{\cndd}{\mathcal{C}}
\newcommand{\iset}{K}
\newcommand{\half}{\tfrac{1}{2}}
\newcommand{\hF}{\widehat{F}}
\newcommand{\hU}{\widehat{U}}
\newcommand{\hbU}{\widehat{\bU}}
\newcommand{\hbu}{\widehat{\bu}}
\newcommand{\hZ}{\widehat{\Imp}}
\newcommand{\hbZ}{\widehat{\IMP}}
\newcommand{\hbz}{\widehat{\imp}}
\newcommand{\lik}{\mathcal{L}}
\newcommand{\htheta}{\hat{\theta}}
\newcommand{\hchi}{\widehat{\chi}}
\newcommand{\htau}{\widehat{\tau}}
\newcommand{\hGamma}{\widehat{\Gamma}}
\newcommand{\HR}{H\"{u}sler--Reiss}
\renewcommand{\ge}{\geqslant}
\renewcommand{\le}{\leqslant}
\renewcommand{\leq}{\leqslant}
\newcommand{\Sigk}{\Sigma^{(k)}}
\newcommand{\mk}{_{\setminus k}}
\newcommand{\pnt}{} 
\definecolor{lightskyblue}{rgb}{0.53, 0.81, 0.98}
\definecolor{lightgreen}{rgb}{0.56, 0.93, 0.56}
\definecolor{lavender(web)}{rgb}{0.9, 0.9, 0.98}
\definecolor{papayawhip}{rgb}{1.0, 0.94, 0.84}
\definecolor{royalblue(web)}{rgb}{0.25, 0.41, 0.88}
\definecolor{seagreen}{rgb}{0.18, 0.55, 0.34}
\definecolor{lightsalmon}{rgb}{1.0, 0.63, 0.48}
\definecolor{orchid}{rgb}{0.85, 0.44, 0.84}
\definecolor{sienna}{rgb}{0.53, 0.18, 0.09}
\definecolor{brown(web)}{rgb}{0.65, 0.16, 0.16}
\definecolor{navajowhite}{rgb}{1.0, 0.87, 0.68}
\definecolor{darkteal}{rgb}{0,0.35,0.35}
\newcommand{\chng}[1]{#1}
\newtheorem{lemma}{Lemma}[section]
\newtheorem{proposition}[lemma]{Proposition}
\newtheorem{theorem}[lemma]{Theorem}
\newtheorem{assumption}[lemma]{Assumption}
\newtheorem{corollary}[lemma]{Corollary}
\theoremstyle{definition}
\newtheorem{definition}[lemma]{Definition}
\newtheorem{example}[lemma]{Example}
\theoremstyle{remark}
\newtheorem{remark}[lemma]{Remark}
\numberwithin{equation}{section}
\begin{document}

\title{X-Vine Models for Multivariate Extremes}

\author{Anna Kiriliouk\thanks{Namur Institute for Complex Systems, University of Namur, Rue Graf\'{e} 2, 5000 Namur, Belgium. E-mail: anna.kiriliouk@unamur.be, jeongjin.lee@unamur.be}
\and Jeongjin Lee\footnotemark[1]
\and Johan Segers\thanks{LIDAM/ISBA, UCLouvain, Voie du Roman Pays 20, 1348 Louvain-la-Neuve, Belgium. \chng{Corresponding author.} E-mail: johan.segers@uclouvain.be}}

\maketitle

\begin{abstract}
Regular vine sequences permit the organisation of variables in a random vector along a sequence of trees. Regular vine models have become greatly popular in dependence modelling as a way to combine arbitrary bivariate copulas into higher-dimensional ones, offering flexibility, parsimony, and tractability. In this project, we use regular vine structures to decompose and construct the exponent measure density of a multivariate extreme value distribution, or, equivalently, the tail copula density. Although these densities pose theoretical challenges due to their infinite mass, their homogeneity property offers simplifications. The theory sheds new light on existing parametric families and facilitates the construction of new ones, called X-vines. Computations proceed via recursive formulas in terms of bivariate model components. We develop simulation algorithms for X-vine multivariate Pareto distributions as well as methods for parameter estimation and model selection on the basis of threshold exceedances. The methods are illustrated by Monte Carlo experiments and a case study on US flight delay data.
\smallskip

\textbf{Key words:} exponent measure,
graphical model,
multivariate Pareto distribution,
pair copula construction,
regular vine,
tail copula
\end{abstract}

\section{Introduction}
\label{sec:intro}

For multivariate extremes, margin-free tail dependence models based on max-stable distributions arise from classical limit theory for sample extremes \citep{de1977limit}. A question of high interest is the construction of such models that are flexible, parsimonious, and computationally tractable, and scale well as the dimension grows \citep{engelke2021sparse}. To do so, we propose a novel approach based on regular vine tree sequences \citep{bedford2001probability,bedford2002vines}, called X-vines. The models can easily be built in arbitrary dimension by combining bivariate components only. The latter can be chosen independently from one another, giving great flexibility. The pairs are grouped in trees, the number of which determines the complexity of the model. Computations proceed by recursive algorithms.

For copula-based dependence modelling, outside the context of extreme value analysis, vine constructions have grown into a versatile and widely applied approach \citep{czado2019analyzing,czado2022vine,rvinecopulib}. Our contribution is to make vine-based methods operational, in theory and practice, for densities of exponent measures of multivariate extreme value distributions. The challenge to overcome is that these densities do not integrate to one but to infinity. A change of margin transforms \chng{these} into tail copula densities, which still have infinite mass, but with structural properties that resemble copula densities more closely.

Figure~\ref{fig:introvine} shows a regular vine sequence $\Vine$ in dimension $d = 5$. The structure consists of four trees $\tree_1,\ldots,\tree_4$, in which the edges of one tree become nodes in the next one. Each of the $d(d-1)/2 = 10$ pairs of variables appears as the leading pair before the semicolon on exactly one edge. The numbers behind the semicolons refer to conditioning variables. The first tree represents a Markov tree, while the subsequent trees add higher-order dependence relations.

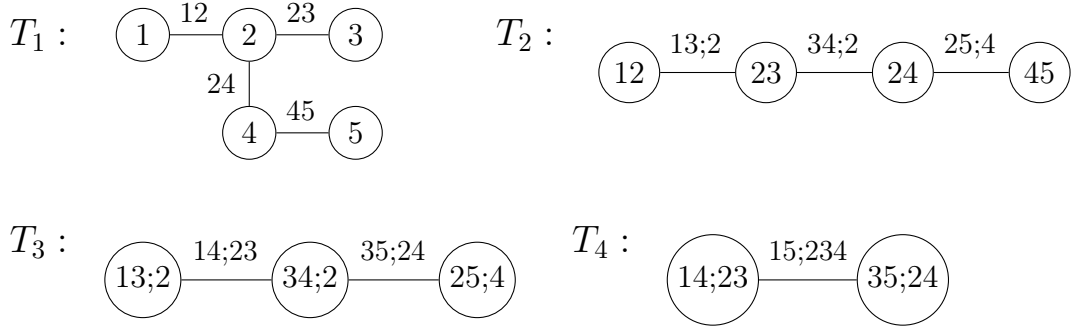
\begin{figure}
	\begin{center}
		\begin{tikzpicture}
			\node [circle,draw=black,fill=white,inner sep=0pt,minimum size=0.7cm] (1) at (0,0) {1};
			\node [circle,draw=black,fill=white,inner sep=0pt,minimum size=0.7cm] (2) at (1.4,0) {2};
			\node [circle,draw=black,fill=white,inner sep=0pt,minimum size=0.7cm] (3) at (2.8,0) {3};
			\node [circle,draw=black,fill=white,inner sep=0pt,minimum size=0.7cm] (4) at (1.4,-1.3) {4};
			\node [circle,draw=black,fill=white,inner sep=0pt,minimum size=0.7cm] (5) at (2.8,-1.3) {5};
			\path [draw,-] (1) edge (2);
			\path [draw,-] (2) edge (3);
			\path [draw,-] (2) edge (4);
			\path [draw] {(4) edge (5)};
			\node [text width=0.7cm] (T1) at (-1.4,0) {\large $\tree_1:$};
			\node [text width=0.7cm] (E1) at (0.85,0.3) {\small 12};
			\node [text width=0.7cm] (E1) at (1.4+0.85,0.3) {\small 23};
			\node [text width=0.7cm] (E1) at (1.2,-0.65) {\small 24};
			\node [text width=0.7cm] (E1) at (1.4+0.85,-1) {\small  45};
			
			\node [circle,draw=black,fill=white,inner sep=0pt,minimum size=0.8cm] (12) at (6+0.4,-0.5) {12};
			\node [circle,draw=black,fill=white,inner sep=0pt,minimum size=0.8cm] (23) at (6+1.8+0.4,-0.5) {23};
			\node [circle,draw=black,fill=white,inner sep=0pt,minimum size=0.8cm] (24) at (6+3.6+0.4,-0.5) {24};
			\node [circle,draw=black,fill=white,inner sep=0pt,minimum size=0.8cm] (45) at (6+5.4+0.4,-0.5) {45};
			\path [draw,-] (12) edge (23);
			\path [draw,-] (23) edge (24);
			\path [draw,] (24) edge (45);
			\node [text width=0.7cm] (T2) at (5,0) {\large $\tree_2:$};
			\node [text width=0.7cm] (E1) at (6.4+0.9,-0.15) {\small 13;2};
			\node [text width=0.7cm] (E1) at (6.4 + 1.8 +0.9,-0.15) {\small 34;2};
			\node [text width =0.7cm] (E1) at (6.4+ 3.6 + 0.9,-0.15) {\small 25;4};
			
			\node [circle,draw=black,fill=white,inner sep=0pt,minimum size=1cm] (13;2) at (0,-3.25) {13;2};
			\node [circle,draw=black,fill=white,inner sep=0pt,minimum size=1cm] (34;2) at (2.2,-3.25) {34;2};
			\node [circle,draw=black,fill=white,inner sep=0pt,minimum size=1cm] (25;4) at (4.4,-3.25) {25;4};
			\path [draw,-] (13;2) edge (34;2);
			\path [draw,-] (34;2) edge (25;4);
			\node [text width=0.7cm] (T3) at (-1.4,-2.75) {\large $\tree_3:$};
			\node [text width=0.7cm] (E1) at (1.025,-2.9) {\small 14;23};
			\node [text width=0.7cm] (E1) at (1.025+2.2,-2.9) {\small 35;24};
			
			\node [circle,draw=black,fill=white,inner sep=0pt,minimum size=1.2cm](14;23) at (7.5,-3.25) {14;23};
			\node [circle,draw=black,fill=white,inner sep=0pt,minimum size=1.2cm] (35;24) at (10,-3.25) {35;24};
			\path [draw,-] (14;23) edge (35;24);
			\node [text width=0.7cm] (T3) at (6,-2.75) {\large $\tree_4:$};    
			\node [text width=0.7cm] (E1) at (8.6,-2.9) {\small 15;234};
		\end{tikzpicture}
	\end{center}
	\caption{\label{fig:introvine} A regular vine sequence $\Vine$ in dimension $d = 5$ consisting of trees $\tree_1,\ldots,\tree_4$, where $\tree_j$ has $d-j+1$ nodes and $d-j$ edges. The nodes of $\tree_1$ are $\cbr{1,\ldots,5}$, while the edges of $\tree_j$ become the nodes of $\tree_{j+1}$.}
\end{figure}

An X-vine specification consists of a regular vine sequence $\Vine$, together with, for each edge in the first tree, a bivariate exponent measure or tail copula density, and, for each edge of the subsequent trees, a bivariate copula density, not necessarily stemming from extreme value theory. For the example in Fig.~\ref{fig:introvine}, four bivariate exponent measure densities ($\tree_1$) and six bivariate copula densities ($\tree_2,\tree_3,\tree_4$) would be required. These bivariate components can be chosen independently from another, without any constraints. Our main result shows how to combine bivariate building blocks into a single multivariate exponent measure or tail copula density. Further, we show that commonly used parametric models in extreme value analysis are actually examples of such X-vine models. We leverage the recursive nature of regular vine sequences both in theory and computations.

The models constructed in this way turn out to be the tail limits of regular vine copulas. For the special case of D-vine copulas, such tail limits were already computed in \citet{joe1996families} and \citet{joe2010tail}, focusing on tail dependence functions rather than densities. 
\chng{\cite{li2013extremal} define tail densities and compute those of D-vine copulas in dimensions three and four.}
In \citet{simpson2021geometric}, tails of D-vine and C-vine copulas were investigated from the perspective of the limiting shapes of sample clouds.

X-vine constructions are related to but different from recently proposed graphical models for extremes, either for directed or undirected graphs \citep{gissibl2018max,engelke2020graphical,engelke2022graphical,engelke2024graphical}. Markov trees \citep{segers2020one,hu2023modelling} are a special case in the intersection of graphical and X-vine models.

After reviewing background on multivariate extremes and regular vine sequences in Section~\ref{sec:background}, we state a version of Sklar's theorem for tail copula densities in Section~\ref{sec:Sklar}. Some parametric examples are worked out in Section~\ref{sec:parametric}, with a focus on the simplifying assumption that the conditional copula densities only depend on the indices of the conditioning variables, but not their actual values. Section~\ref{sec:vine} contains the paper's main results, showing on the one hand how a general tail copula density can be decomposed along any regular vine sequence, and, on the other hand, how to construct a tail copula density from a regular vine sequence and bivariate building blocks. The so-called X-vine models arising in this way are put to work in the subsequent sections, covering simulation algorithms (Section~\ref{sec:simulation}), parameter estimation and model selection (Section~\ref{sec:estimation}), simulation studies (Section~\ref{sec:simulation-study}), and a case study on US flight delay data (Section~\ref{sec:casestudy}) taken from \citet{hentschel2022statistical}. Section~\ref{sec:conc} concludes. The \chng{supplementary material contains a detailed example to illustrate our theory, 
the proofs of the paper's results, 
expressions based on exponent measure densities,
and additional numerical results}. The methods are implemented \chng{in the R \citep{R} package Xvine\footnote{Available from \url{https://github.com/JeongjinLee88/Xvine/tree/main}}}, relying in particular on packages \textsf{graphicalExtremes} \citep{engelke2graphicalextremes} and \textsf{VineCopula} \citep{vinecopula}.

\section{Background}
\label{sec:background}

We write $\dset = \cbr{1,\ldots,d}$ for the index set of the variables. Bold symbols refer to multivariate quantities. For a point $\bx = (x_1,\ldots,x_d) \in \Rd$ and a subset $J \subseteq \dset$, write subvectors as $\bx_J=(x_j)_{j \in J}$ and $\bx_{\setminus J} = (x_j)_{j \in \dset \setminus J}$.
Mathematical operations on vectors such as addition, multiplication and comparison are considered component-wise.

\subsection{Multivariate extreme value theory: tail copulas and their densities}

\paragraph{Tail copulas.}
Classical extreme value theory starts from the assumption that the distribution function $F$ of a random vector $\bX = (X_1,\ldots,X_d)$ is in the max-domain of attraction of a multivariate extreme value distribution \chng{\citep{beirlant2004statistics,de2007extreme}}.
\chng{This assumption concerns the tails of the univariate marginal distribution functions $F_1,\ldots,F_d$ and the tail dependence structure of $\bX$.} 
\chng{Our focus is on 
the latter aspect, that is, \chng{on} probabilities of high values occurring jointly among the variables $X_j$.}
\chng{Let} $C$ be the survival copula of $\bX$: under the standing assumption that $F_1,\ldots,F_d$ are continuous, $C$ is the distribution function of the random vector $\bU = (U_1,\ldots,U_d)$ with uniformly distributed components $U_j = 1-F_j(X_j)$ for $j \in \dset$. Our interest is in high values of $X_j$ and thus in low outcomes of $U_j$.

\begin{definition}[Tail copula and tail copula measure]
\label{def:expmeas}
The \emph{(lower) tail copula} $\xcdf$ of $C$ is the function on $\EE = (0, \infty]^d \setminus \{\binfty\}$ defined by
\begin{equation}
\label{eq:C2Lambda}
	\xcdf(\bx) = \lim_{t \searrow 0} t^{-1} C(t \bx),
	\qquad \bx \in \EE,
\end{equation}	
provided the limit exists. If it does, then the \emph{tail copula measure}, denoted by the same symbol $\xcdf$, is the Borel measure on $\EE$ determined by $\xcdf((\bzero, \bx]) = \xcdf(\bx)$ for $\bx \in \EE$.
\end{definition}

The term `tail copula' stems from \citet{schmidt2006non} in \chng{case $d = 2$. The tail copula measure already appears in \citet{einmahl2001nonparametric} and is closely linked to the exponent measure $\Lambda$ in Eq.~\eqref{eq:Lam2nu} below, introduced in \citet{de1977limit}. The limit~\eqref{eq:C2Lambda} appears in \citet{jaworski2006uniform} and in \citet{joe2010tail}, who call it `tail dependence function'.}

The total mass of the tail copula measure $\xcdf$ equals infinity but $\xcdf(B)$ is finite for Borel sets $B$ contained in $\cbr{ \bx \in \EE : \min \bx \le M }$ for some $M > 0$. Its univariate margins are equal to the one-dimensional Lebesgue measure:
\begin{equation}\label{eq:margin}
	\forall j \in \dset, \; \forall x_j \in (0, \infty), \qquad
	\xcdf\rbr{ \cbr{ \by \in \EE : y_j \le x_j } }
	= x_j.
\end{equation}
Equation~\eqref{eq:C2Lambda} implies that $\xcdf$ is homogeneous of order one, both as a function and as a measure: for $\bx \in \EE$, Borel sets $B \subseteq \EE$, and $s \in (0, \infty)$,
\begin{equation}
\label{eq:Lambdahomo}
	\xcdf(s \bx) = s \, \xcdf(\bx)
	\qquad \text{and} \qquad
	\xcdf(s B) = s \, \xcdf(B).
\end{equation}
Any Borel measure $\xcdf$ on $\EE$ satisfying \eqref{eq:margin} and \eqref{eq:Lambdahomo} is a tail copula measure, i.e., is the limit in \eqref{eq:C2Lambda} for some copula $C$\chng{; see Lemma~\ref{lem:allRok} in the supplement}.

\paragraph{Tail copula densities.}
Throughout, we assume that the tail copula measure $\xcdf$ has no mass on the hyperplanes through infinity, $\xcdf \rbr{ \cbr{ \bx \in \EE : x_j = \infty } } = 0$ for all $j \in \dset$, so $\xcdf$ is supported on $(0, \infty)^d$.
Further, we assume that $\xcdf$ is absolutely continuous with respect to the $d$-dimensional Lebesgue measure with continuous
density $\xdf : (0, \infty)^d \to [0, \infty)$, that is,
$\xcdf(B) = \int_{B} \xdf(\bx) \, \diff \bx$
for all Borel sets $B \subseteq (0, \infty)^d$. 
Choosing $B = \left( \bzero, \bx \right]$ for $\bx \in (0, \infty)^d$, we recover $\xdf$ from $\xcdf$ by $\xdf(\bx) = \frac{\partial^d}{\partial x_1 \cdots \partial x_d} \xcdf(\bx)$ for all $\bx \in (0, \infty)^d$. 
The marginal constraint~\eqref{eq:margin} implies
\begin{equation}
\label{eq:margin:pdf}
	\forall j \in \dset, \; \forall x_j \in (0,\infty), \qquad
	\int_{(0,\infty)^{d-1}} \xdf(\bx) \, \diff \bx_{\setminus j} = 1.
\end{equation}
In view of~\eqref{eq:Lambdahomo}, $\xdf$ is homogeneous of order $1-d$:
\begin{equation}
\label{eq:lambdahomo}
	\forall s \in (0, \infty), \; 
	\forall \bx \in (0, \infty)^d, \qquad
	\xdf(s \bx) = s^{1-d} \, \xdf(\bx).
\end{equation}
Properties~\eqref{eq:margin:pdf} and~\eqref{eq:lambdahomo} characterise the set of $d$-variate \emph{tail copula densities}; see Section~\ref{sec:parametric} for some parametric families.

By Scheffé's lemma, if the copula $C$ has density $c$ and the tail copula measure $\xcdf$ has density $\xdf$, then \eqref{eq:C2Lambda} is implied by $\lim_{t \searrow 0} t^{1-d} c(t \bx) = \xdf(\bx)$ for all $\bx \in (0, \infty)^d$. \chng{\cite{li2013extremal} call $\xdf$ the (lower) tail density function of $C$.} 
A word of caution: for a given copula density $c$, the limit $\lim_{t \searrow 0} t^{1-d} c(t \bx)$ may exist for all $\bx \in (0, \infty)^d$ but not be a tail copula density, as the marginal constraint \eqref{eq:margin:pdf} may fail. A case in point is the independence copula, with density $c \equiv 1$, in which case the said limit is zero.

\paragraph{Margins of tail copulas.}
For non-empty $J \subseteq \dset$, let $\pi_J : (0, \infty)^d \to (0, \infty)^J$ denote the coordinate projection $\bx \mapsto \pi_J(\bx) = \bx_J$ and let $\xcdf_J  = \xcdf \circ \pi_J^{-1}$ denote the $J$-th marginal measure 
\begin{equation}
\label{eq:RJ}
	\xcdf_J(B) 
	= \xcdf\rbr{ \pi_J^{-1}(B) } 
	= \xcdf\rbr{ \cbr{ \bx \in (0, \infty)^d : \bx_J \in B } },
\end{equation}
for Borel sets $B \subseteq (0, \infty)^J$. The choice $B = \prod_{j \in J} \left(0, x_j\right]$ for $x_j \in (0,\infty)$ shows that $\xcdf_J$ is the tail copula measure of the copula $C_J$ of $\bU_J = \pi_J(\bU)$, as in Definition~\ref{def:expmeas}. For $\bx_J \in (0,\infty]^J \setminus \cbr{\binfty}$, we have
$\xcdf_J(\bx_J) = \xcdf(\tilde{\bx}_J)$ with
	$\tilde{x}_j =
	x_j$ if $j \in J$ and $\tilde{x}_j =
	\infty$ if $j \in \dset \setminus J$.
\chng{For $\bx_{J} \in (0, \infty)^J$,} the density $\xdf_J$ of the measure $\xcdf_J$ is
\begin{equation}
\label{eq:lambdaJ}
	\xdf_J ( \bx_J )
	= \frac{\partial^{|J|}}{\prod_{j \in J} \partial x_j} \xcdf_J \left( \prod_{j \in J} \left( 0, x_j \right] \right) 
	= \int_{\bx_{\setminus J} \in (0, \infty)^{d-|J|}} \xdf(\bx) \, \diff \bx_{\setminus J},
\end{equation}
and, provided $J \neq \dset$, is obtained from $\xdf$ by integrating out the variables $x_j$ with indices $j \not\in J$.

\paragraph{Relations between tail copulas and other concepts in multivariate extremes.}

\chng{We review} a number of related terms from multivariate extreme value theory. This paragraph can be skipped at first reading, as the contributions in Sections~\ref{sec:Sklar} to~\ref{sec:vine} do not depend on it.

Equation~\eqref{eq:C2Lambda} is equivalent to the assumption that the random vector $\bV = (V_1,\ldots,V_d)$ with $V_j = 1/U_j$ for $j \in \dset$ satisfies
\begin{equation}
\label{eq:max-stable}
	\lim_{n \to \infty}
	\prob(\bV \le n \bz)^n
	= G(\bz) = \exp \cbr{ - \ell \rbr{1/\bz} }, \qquad \bz \in (0, \infty)^d,
\end{equation}
that is, $\bV$ is in the max-domain of attraction of a \emph{multivariate extreme value} (or \emph{max-stable}) distribution $G$ defined in terms of the \emph{stable tail dependence function} 
$\ell(\bx) = \xcdf \rbr{\EE \setminus [\bx, \binfty]}$ for $\bx \in [0, \infty)^d$.
The margins of $\bV$ are unit-Pareto, $\prob(V_j > x) = 1/x$ for $x \ge 1$, while those of $G$ are unit-Fréchet, $G_j(x) = \exp(-1/x)$ for $x > 0$.
Equivalently, the distribution of $\bV$ is \emph{multivariate regularly varying} \citep{de1977limit, resnick2007heavy} with limit measure
\begin{equation}
	\label{eq:Lam2nu}
	\expmeas(\,\cdot\,) = \xcdf \rbr{ \cbr{ \bx \in \EE : (1/x_1, \ldots, 1/x_d) \in \cdot \, } } \qquad \text{ on } [0, \infty)^d \setminus \cbr{\bzero},
\end{equation}
that is, $t \prob(\bV \in t B) \to \expmeas(B)$ as $t \to \infty$ for Borel sets $B$ contained in $[0, \infty)^d \setminus [0, \eps]^d$ for some $\eps > 0$ and satisfying $\expmeas(\partial B) = 0$, with $\partial B$ the boundary of $B$. The measure $\expmeas$ is called \emph{exponent measure} 
because of the identity $G(\bz) = \exp \cbr{ - \expmeas \rbr{\left[\bzero, \binfty\right) \setminus \sbr{\bzero, \bz}}}$ for $\bz \in (0, \infty)^d$.

Another statement equivalent to~\eqref{eq:C2Lambda} is that the conditional distribution of $\bV/t$ given $\max \bV > t$ is asymptotically \emph{multivariate Pareto} 
that is, we have the weak convergence 
\begin{equation}
\label{eq:Vt2Y}
	\rbr{\bV/t \mid \max \bV > t}
	\dto
	\bY, \qquad t \to \infty,
\end{equation}
where $\bY$ is a random vector supported on
$\LL_{>1} := \cbr{\by \in [0, \infty)^d : \max \by > 1}$ with distribution equal to the restriction of $\expmeas$ in \eqref{eq:Lam2nu} to $\LL_{>1}$ and normalized to a probability measure,  
\begin{equation}
	\label{eq:mgpd}
	\prob(\bY \in B) = \expmeas(B \cap \LL_{>1}) / \expmeas(\LL_{>1})
\end{equation}
for Borel sets $B \subseteq \Rd$ and normalizing constant $\expmeas(\LL_{>1}) = \ell(\bone)$.
For $\by \in \LL_{>1}$, we also have $\prob(\bY \ge \by) = R(1/\by) / \ell(\bone)$.

Up to a location shift, multivariate Pareto distributions are a special case of \emph{multivariate generalised Pareto distributions} introduced in \citet{rootzen2006multivariate}. They play a prominent role in the theory of graphical models for extremes \citep{engelke2020graphical}.

\chng{In practice, it may be convenient to reduce the information contained in the tail copula measure to a dependence coefficient. In this paper, we will encounter the \emph{tail dependence coefficients} $\chi_J = \xcdf(\{\bx \in (0, \infty]^d : \max_{j \in J} x_j < 1\}) = R_J(1,\ldots,1)$ for index sets $J \subseteq \dset$ with two or more elements.}

If the tail copula measure $\xcdf$ is concentrated on $(0, \infty)^d$ and has density $\xdf$, 
the exponent measure and the multivariate Pareto distribution 
have densities too. The exponent measure $\Lambda$ is concentrated on $(0, \infty)^d$ and has density 
\begin{equation}
\label{eq:nupdf}
	\nupdf(\by) = \xdf(1/\by) \prod_{j=1}^d y_j^{-2},
	\qquad \by \in (0, \infty)^d,
\end{equation}
while the multivariate Pareto vector $\bY$ has probability density
$\nupdf(\by) / \ell(\bone)$ 
for $\by \in \LL_{>1}$.

\subsection{Regular vine sequences and the vine telescoping product formula}
\label{sec:Rvine}

Informally, a regular vine sequence on $d$ elements consists of a linked set of $d-1$ trees, where the edges in tree $j-1$ become nodes in tree~$j$, joined by an edge only if they share a common node as edges in tree $j-1$, for $j \in \cbr{2,\ldots,d-1}$ \citep{bedford2002vines}. \emph{Regular vine copulas}, also called \emph{pair copula constructions}, allow for flexible high-dimensional modelling using parametric families of bivariate copulas as building blocks for the edges of each tree.
For recent overviews, see, for example, \chng{\cite{czado2019analyzing} or \cite{czado2022vine}.} Our objective is to do the same for tail copula densities. \chng{All concepts introduced below are illustrated via a five-dimensional example in Section~\ref{app:example} of the supplementary material}. 

A \emph{tree} $\tree = (\nodes, \edges)$ is a connected, acyclic graph comprising a finite node set $\nodes$ and an edge set $\edges \subseteq \cbr{\cbr{x,y}: \; x,y \in \nodes, \, x \ne y}$. Any two distinct nodes in a tree are connected by a unique path.

\begin{definition}[Regular vine \chng{(}tree\chng{)} sequence]\label{def:rvs}
	An ordered set of trees $\Vine = (\tree_1,\ldots,\tree_{d-1})$ is a \emph{regular vine tree sequence} on $d \ge 3$ elements if $\tree_1 = (\nodes_1,\edges_1)$ is a tree with node set $\nodes_1 = \dset$, while for $j \in \{2,\ldots,d-1\}$, the tree $\tree_j = (\nodes_j, \edges_j)$ has node set $\nodes_j = \edges_{j-1}$ and the \emph{proximity condition} holds: 
	for any $\{a,b\} \in \edges_j$, we have $|a \cap b| = 1$, that is, two nodes in $\tree_j$ can be connected only if, as edges in $\tree_{j-1}$, they share a common node.
\end{definition}

\begin{definition}[Complete union, conditioning set, conditioned set]\label{def:ucd}
	Let $\Vine = (\tree_1,\ldots,\tree_{d-1})$ be a regular vine sequence, with $\tree_j = (\nodes_j,\edges_j)$.
	For any edge $e \in \edges_j$, the \emph{complete union} of $e$ is $\cunn_e = e$ if $j = 1$ and $\cunn_e = \cunn_a \cup \cunn_b$ if $e = \{a,b\}$ and $j \in \cbr{2,\ldots,d-1}$. Informally, $\cunn_e$ is the subset of nodes in $\dset$ reachable from $e$ by the membership relation.
	The \emph{conditioning set} of an edge $e = \{a,b\} \in \edges_j$ with $j \ge 2$ is $\cndg_e = \cunn_a \cap \cunn_b$ and the \emph{conditioned set} of $e$ is $\cndd_e = \cndd_{e,a} \cup \cndd_{e,b}$ with $\cndd_{e,a} = \cunn_a \setminus \cndg_e = \cunn_a \setminus \cunn_b$ and $\cndd_{e,b} = \cunn_b \setminus \cndg_e = \cunn_b \setminus \cunn_a$. If $e = \{a,b\} \in \edges_1$, then $\cndd_{e,a} = \{a\}$, $\cndd_{e,b} = \{b\}$, $\cndd_e = e$ and $\cndg_e = \emptyset$.
\end{definition}

\chng{For an edge $e = \cbr{a,b} \in \edges_j$, the sets $\cndd_{e,a}$, $\cndd_{e,b}$ and $\cndg_e$ are disjoint and their union equals $\cunn_e$, a node set with $j+1$ elements. The sets $\cndd_{e,a}$ and $\cndd_{e,b}$ are singletons while $\cndg_e$ has $j-1$ elements. Clearly, $\cndg_e = \cunn_a \setminus \cndd_e = \cunn_b \setminus \cndd_e$.}
The edge $e$ is also written as $e = \rbr{\cndd_{e,a}, \cndd_{e,b}; \cndg_e}$ and, since $\cndd_{e,a}$ and $\cndd_{e,b}$ are singletons, $\{a_e\}$ and $\{b_e\}$, say, we abbreviate $e = \rbr{a_e, b_e; \cndg_e}$, with $a_e, b_e \in \dset$ and the convention that $a_e < b_e$, that is, $a_e = \min \cndd_e$ and $b_e = \max \cndd_e$. Similarly, for $e = \cbr{a, b} \in \edges_1$, the convention is that $a < b$. In Fig.~\ref{fig:introvine}, for instance, the edge $e$ labelled $(14;23)$ in $\tree_3$ has $a_e = 1$, $b_e = 4$, and $\cndg_e = \cbr{2,3}$.
\chng{See Section~\ref{app:proofs:background} in the supplement for some additional properties.}

\chng{For positive scalars $\gamma_1,\ldots,\gamma_d$, the identity $\prod_{j=2}^d (\gamma_j / \gamma_{j-1}) = \gamma_d/\gamma_1$ is a \emph{telescoping product}.} Regular vine sequences enjoy a similar property that will be key to Theorem~\ref{thm:xvine} and which, to the best of our knowledge, has not yet been stated in the literature.

\begin{lemma}[Vine telescoping product]
	\label{lem:Rvine-a}
	Let $d \ge 3$ be an integer and let $\gamma_J \in (0, \infty)$ for every non-empty $J \subseteq \dset$,
	with $\gamma_j := \gamma_{\{j\}} = 1$ for every $j \in \dset$.
	Given a regular vine sequence $\Vine = \rbr{\tree_j = \rbr{\nodes_j,\edges_j}}_{j=1}^{d-1}$ on $\dset$, we have
	\begin{equation}
		\label{eq:Rvine-a}
		\gamma_{\dset}
		=
		\prod_{e \in \edges_1} \gamma_{e} \cdot
		\prod_{j=2}^{d-1} \prod_{e = \{a, b\} \in \edges_j}
		\frac{\gamma_{\cndg_e} \cdot \gamma_{\cunn_e}}{\gamma_{\cunn_a} \cdot \gamma_{\cunn_b}}.
	\end{equation}
	With the notation $\gamma_{I|J} = \gamma_{I \cup J} / \gamma_{J}$, the last factor in \eqref{eq:Rvine-a} is 
	\begin{equation}
		\label{eq:Rvine-a-cond}
		\frac{\gamma_{\cndg_e} \cdot \gamma_{\cunn_e}}{\gamma_{\cunn_a} \cdot \gamma_{\cunn_b}}
		=
		\frac{\gamma_{\{a_e,b_e\}|\cndg_e}}{\gamma_{a_e|\cndg_e} \cdot \gamma_{b_e|\cndg_e}}.
	\end{equation}
\end{lemma}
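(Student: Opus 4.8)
The plan is to take logarithms and prove the resulting additive identity. Write $x_J=\log\gamma_J$, so $x_{\{j\}}=0$, and adopt the conventions $\cunn_i=\{i\}$ for a node $i$ of $\tree_1$, $\cndg_e=\emptyset$ and $x_\emptyset=0$ for $e\in\edges_1$, and $\nodes_d:=\edges_{d-1}$ (the single top edge, whose complete union is $\dset$). Then \eqref{eq:Rvine-a} is the exponential of
\[
  x_{\dset}=\sum_{e\in\edges_1}x_e+\sum_{j=2}^{d-1}\;\sum_{e=\{a,b\}\in\edges_j}\bigl(x_{\cndg_e}+x_{\cunn_e}-x_{\cunn_a}-x_{\cunn_b}\bigr).
\]
Using the counting identity $\sum_{e=\{a,b\}\in\edges_j}(x_{\cunn_a}+x_{\cunn_b})=\sum_{v\in\nodes_j}\deg_{\tree_j}(v)\,x_{\cunn_v}$, the relation $\nodes_{j+1}=\edges_j$, and the split $\deg_{\tree_j}(v)=1+(\deg_{\tree_j}(v)-1)$, I would rewrite the right-hand side as
\[
  \sum_{j=1}^{d-1}\Bigl(\sum_{v\in\nodes_{j+1}}x_{\cunn_v}-\sum_{v\in\nodes_j}x_{\cunn_v}\Bigr)\;-\;\sum_{j=1}^{d-1}\sum_{v\in\nodes_j}\bigl(\deg_{\tree_j}(v)-1\bigr)x_{\cunn_v}\;+\;\sum_{j=1}^{d-1}\sum_{e\in\edges_j}x_{\cndg_e},
\]
where the $j=1$ terms are absorbed by the conventions above. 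The first sum telescopes to $\sum_{v\in\nodes_d}x_{\cunn_v}-\sum_{v\in\nodes_1}x_{\cunn_v}=x_{\dset}-0$, so it remains to prove
\[
  \sum_{j=1}^{d-1}\sum_{e\in\edges_j}x_{\cndg_e}=\sum_{j=1}^{d-1}\sum_{v\in\nodes_j}\bigl(\deg_{\tree_j}(v)-1\bigr)x_{\cunn_v}.
\]

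For the left-hand side I would use two structural facts about regular vines. First, for $e=\{a,b\}\in\edges_j$ with $j\ge2$ the proximity condition gives $a\cap b=\{k\}$ for a unique node $k$ of $\tree_{j-1}$, and then $\cndg_e=\cunn_k$: since $k$ belongs to both edges $a$ and $b$ of $\tree_{j-1}$ one has $\cunn_k\subseteq\cunn_a\cap\cunn_b=\cndg_e$, while $\cunn_k$ and $\cndg_e$ both have $j-1$ elements. Hence $\sum_{e\in\edges_j}x_{\cndg_e}=\sum_{k\in\nodes_{j-1}}N_{j,k}\,x_{\cunn_k}$, where $N_{j,k}$ is the number of edges of $\tree_j$ induced on the set $S_k$ of nodes of $\tree_j$ that, as edges of $\tree_{j-1}$, contain $k$; note $|S_k|=\deg_{\tree_{j-1}}(k)$. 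The crux is the claim that $S_k$ induces a \emph{connected} subgraph of $\tree_j$, so that $N_{j,k}=\deg_{\tree_{j-1}}(k)-1$. To see this, put $\delta=\deg_{\tree_{j-1}}(k)$, let $B_1,\dots,B_\delta$ be the connected components of $\tree_{j-1}-k$, and partition $\nodes_j=\edges_{j-1}$ into the classes $\mathcal F_i=\{e_i\}\cup\{\text{edges of }\tree_{j-1}\text{ lying inside }B_i\}$, where $e_i$ is the edge at $k$ leading into $B_i$. Every $f\in\mathcal F_i$ has both its endpoints in $B_i\cup\{k\}$, so if $\{f,g\}\in\edges_j$ with $f\in\mathcal F_i$, $g\in\mathcal F_{i'}$ and $i\ne i'$ then $f\cap g\subseteq\{k\}$, which forces $f=e_i$ and $g=e_{i'}$; thus the only edges of $\tree_j$ joining distinct classes lie inside $S_k=\{e_1,\dots,e_\delta\}$. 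Contracting each $\mathcal F_i$ to a point turns the tree $\tree_j$ into a connected graph on $\delta$ vertices whose edges are exactly those of $\tree_j$ inside $S_k$; being connected, that graph is a tree with $\delta-1$ edges.

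Substituting $N_{j,k}=\deg_{\tree_{j-1}}(k)-1$ gives $\sum_{j=1}^{d-1}\sum_{e\in\edges_j}x_{\cndg_e}=\sum_{j=2}^{d-1}\sum_{k\in\nodes_{j-1}}(\deg_{\tree_{j-1}}(k)-1)x_{\cunn_k}=\sum_{j=1}^{d-2}\sum_{v\in\nodes_j}(\deg_{\tree_j}(v)-1)x_{\cunn_v}$, which equals the right-hand side of the last display because the missing $j=d-1$ summand vanishes: $\tree_{d-1}$ has exactly two nodes, each of degree one. This proves \eqref{eq:Rvine-a}. Finally, \eqref{eq:Rvine-a-cond} is an immediate substitution: from $\cunn_a=\{a_e\}\cup\cndg_e$, $\cunn_b=\{b_e\}\cup\cndg_e$, $\cunn_e=\{a_e,b_e\}\cup\cndg_e$ (disjoint unions, by Definition~\ref{def:ucd}) and the notation $\gamma_{I\mid J}=\gamma_{I\cup J}/\gamma_J$, both sides of \eqref{eq:Rvine-a-cond} equal $\gamma_{\{a_e,b_e\}\cup\cndg_e}\,\gamma_{\cndg_e}/(\gamma_{\{a_e\}\cup\cndg_e}\,\gamma_{\{b_e\}\cup\cndg_e})$. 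The step I expect to be the main obstacle is the connectedness claim $N_{j,k}=\deg_{\tree_{j-1}}(k)-1$: it is the only place where the proximity condition together with the acyclicity and connectivity of the vine trees are genuinely used, the delicate point being that no two classes $\mathcal F_i,\mathcal F_{i'}$ can be linked by an edge of $\tree_j$ other than through the spokes $e_i$ at $k$; the remaining steps are routine telescoping and bookkeeping.
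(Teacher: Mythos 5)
Your proof is correct, but it follows a genuinely different route from the paper's. The paper proves \eqref{eq:Rvine-a} by induction on $d$: it takes the deepest edge $e=(a_e,b_e;\cndg_e)$, uses the unique-edge property (Lemma~\ref{lem:uniqueness}) to peel off the node $a_e$ together with the unique chain of edges $e_1\in e_2\in\cdots\in e_{d-1}$ containing it, applies the induction hypothesis to the induced sub-vine on $\dset\setminus\{a_e\}$, and checks that the extra factors telescope to $\gamma_{\dset}/\gamma_{\dset\setminus\{a_e\}}$ via the identity $\cndg_{e_j}=\cunn_{f_{j-2}}$. You instead take logarithms and prove the additive identity globally by double counting: after the degree split and the level-by-level telescoping, everything reduces to showing that each conditioning set $\cndg_e$ with $e\in\edges_j$ equals $\cunn_k$ for the shared node $k=a\cap b$ of $\tree_{j-1}$, and that the star $S_k$ of edges of $\tree_{j-1}$ incident to $k$ induces a connected (hence spanning-tree) subgraph of $\tree_j$ with exactly $\deg_{\tree_{j-1}}(k)-1$ edges. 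That connectedness claim is the classical ``m-children form a subtree'' property of regular vines, and your contraction argument for it is sound; the only elision is the passage from ``connected on $\delta$ vertices'' to ``tree with $\delta-1$ edges,'' which should be completed by noting that the edges in question form a subgraph of the acyclic graph $\tree_j$ induced on the $\delta$-element set $S_k$ and hence number at most $\delta-1$. As for what each approach buys: the paper's induction is short given Lemma~\ref{lem:uniqueness} and reuses the same peeling device that drives the proofs of Theorems~\ref{thm:recuni} and~\ref{thm:xvineconstruct}, whereas your argument is non-inductive and makes transparent the exact net exponent with which each $\gamma_{\cunn_v}$ enters the product (namely $\deg$-driven cancellation), at the cost of having to establish the connectedness of $S_k$, which the paper never needs explicitly. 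Your derivation of \eqref{eq:Rvine-a-cond} by direct substitution coincides with the paper's.
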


\chng{Remark~\ref{rem:telescope-f} in Appendix~\ref{app:proofs:background} in the supplement generalises Lemma~\ref{lem:Rvine-a} to factorisations of $\gamma_J$ for certain non-empty subsets $J \subseteq \dset$, while Remark~\ref{rem:Rvine-c} provides a generalisation of \eqref{eq:Rvine-a} without the restriction that $\gamma_j = 1$ for all $j \in \dset$.}

\section{Sklar's theorem for tail copula densities}
\label{sec:Sklar}

\chng{Let $\xdf$ be a $d$-variate tail copula density, so properties~\eqref{eq:margin:pdf} and~\eqref{eq:lambdahomo} hold.} Recall its multivariate margins in \eqref{eq:lambdaJ}. For non-empty, disjoint $I, J \subset \dset$ and for $\bx_I \in (0, \infty)^I$ and $\bx_J \in (0, \infty)^J$ such that $\xdf_J(\bx_J) > 0$, define $\xdf_{I|J}(\bx_I|\bx_J) := \xdf_{I \cup J}(\bx_{I \cup J})/\xdf_J(\bx_J)$ for all $\bx_{I \cup J} \in (0, \infty)^{I \cup J}$.
Viewing the ``conditional'' tail copula density $\xdf_{I|J}$ as an $|I|$-variate probability density, we can decompose it into $|I|$ univariate probability densities and a copula density. 

\begin{proposition}[Sklar's theorem: direct part]
\label{lem:expdensity}
Let $\xdf$ be a $d$-variate tail copula density and let $I, J \subset \dset$ be non-empty and disjoint.
For $\bx_J \in (0, \infty)^J$ such that $\xdf_J(\bx_J) > 0$, the function
$\xdf_{I|J}(\point|\bx_J) : (0, \infty)^I \to \left[0, \infty\right)$, defined via $\bx_I \mapsto \xdf_{I|J}(\bx_I|\bx_J)$, is a probability density on $(0, \infty)^I$. For $i \in I$, it has marginal probability density $x_i \mapsto \xdf_{i|J}(x_i|\bx_J)$ and cumulative distribution function 
$x_i \mapsto \xcdf_{i|J}(x_i|\bx_J) := \int_{0}^{x_i} \xdf_{i|J}(t|\bx_J) \, \diff t$,
	with quantile function $u_i \mapsto \xcdf_{i|J}^{-1}(u_i|\bx_J)$.
	If $|I| \ge 2$, then the copula density of $\xdf_{I|J}(\point|\bx_J)$ is
	\begin{equation}\label{eq:cIJ}
	\forall \bu_I \in (0, \infty)^I, \qquad
	c_{I;J}(\bu_I;\bx_J) := \frac{\xdf_{I|J}(\bx_I|\bx_J)}{\prod_{i\in I} \xdf_{i|J}(x_i|\bx_J)}
	\quad \text{with} \quad x_i = \xcdf_{i|J}^{-1}(u_i|\bx_J). 
	\end{equation}
\end{proposition}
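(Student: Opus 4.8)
The plan is to recognise $\xdf_{I|J}(\point|\bx_J)$ as an ordinary multivariate probability density on $(0,\infty)^I$ and then apply the textbook density form of Sklar's theorem; the extreme-value structure enters only through the marginalisation identity~\eqref{eq:lambdaJ}, so the homogeneity property~\eqref{eq:lambdahomo} is never used and everything rests on Tonelli's theorem and the change-of-variables formula. The single auxiliary fact I would record first is the \emph{consistency of margins}: for non-empty subsets $J \subseteq K \subseteq \dset$ and $\bx_J \in (0,\infty)^J$,
\[
	\int_{(0,\infty)^{K \setminus J}} \xdf_K(\bx_K) \, \diff \bx_{K \setminus J} = \xdf_J(\bx_J),
\]
which follows from~\eqref{eq:lambdaJ} by writing $\xdf_K$ itself as an integral of $\xdf$ over the coordinates in $\dset \setminus K$ (when $K \ne \dset$) and then combining the two integrations by Tonelli.

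Granting this, the three non-copula assertions follow immediately. Non-negativity of $\xdf_{I|J}(\point|\bx_J)$ is clear since $\xdf_J(\bx_J) > 0$; taking $K = I \cup J$ and integrating out $\bx_I$ gives $\int_{(0,\infty)^I} \xdf_{I|J}(\bx_I|\bx_J) \, \diff \bx_I = \xdf_J(\bx_J)/\xdf_J(\bx_J) = 1$, so it is a probability density. Taking again $K = I \cup J$ but integrating out only $\bx_{I \setminus \{i\}}$ shows that its $i$-th marginal density is $x_i \mapsto \xdf_{\{i\} \cup J}(\bx_{\{i\}\cup J})/\xdf_J(\bx_J) = \xdf_{i|J}(x_i|\bx_J)$, and hence its $i$-th marginal distribution function is exactly the stated $\xcdf_{i|J}(\point|\bx_J)$. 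Since $\xdf_J(\bx_J) > 0$ and $\xdf_{i|J}(\point|\bx_J)$ is integrable, $x_i \mapsto \xcdf_{i|J}(x_i|\bx_J)$ is continuous and non-decreasing with $\xcdf_{i|J}(x_i|\bx_J) \to 0$ as $x_i \searrow 0$ and $\to 1$ as $x_i \to \infty$ (the latter by monotone convergence, using the previous step); it is therefore a genuine continuous distribution function, its generalised inverse $\xcdf_{i|J}^{-1}(\point|\bx_J)$ is well defined, and it is a true inverse on the interior of the support of $\xdf_{i|J}(\point|\bx_J)$.

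For the copula density (when $|I| \ge 2$) I would introduce a random vector $\bX_I$ with density $\xdf_{I|J}(\point|\bx_J)$. By the probability integral transform --- exact here because the marginal distribution functions are continuous --- the variables $U_i := \xcdf_{i|J}(X_i|\bx_J)$, $i \in I$, are uniform on $(0,1)$, and by definition the copula of $\xdf_{I|J}(\point|\bx_J)$ is the joint distribution function of $\bU_I = (U_i)_{i \in I}$, whose density is the sought $c_{I;J}(\point;\bx_J)$. The coordinatewise map $\bx_I \mapsto \bu_I = \rbr{\xcdf_{i|J}(x_i|\bx_J)}_{i \in I}$ has a diagonal Jacobian with determinant $\prod_{i \in I} \xdf_{i|J}(x_i|\bx_J)$, so the change-of-variables formula for densities yields, with $x_i = \xcdf_{i|J}^{-1}(u_i|\bx_J)$, exactly the expression~\eqref{eq:cIJ} for $c_{I;J}(\bu_I;\bx_J)$. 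The one genuinely delicate point --- and it is minor --- is that $\xcdf_{i|J}(\point|\bx_J)$ need not be strictly increasing on all of $(0,\infty)$, only on the interior of the support of its density, so the identity is asserted for Lebesgue-almost every $\bu_I$; since copula densities are determined only up to null sets this is no loss, and one could instead simply invoke the standard density form of Sklar's theorem after checking that the relevant marginals match.
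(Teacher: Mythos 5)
Your proposal is correct and follows essentially the same route as the paper's own proof: the marginalisation identity~\eqref{eq:lambdaJ} (your ``consistency of margins'') gives the density and marginal statements, and the copula expression is the density form of Sklar's theorem applied to $\xdf_{I|J}(\point|\bx_J)$, which you merely spell out via the probability integral transform and change of variables. Your extra care about strict monotonicity of $\xcdf_{i|J}(\point|\bx_J)$ and the almost-everywhere caveat is a reasonable refinement the paper leaves implicit.
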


The homogeneity property~\eqref{eq:lambdahomo} induces certain scaling properties.

\begin{lemma}[Scale equi- and invariance]
\label{lem:homo}
	In Proposition~\ref{lem:expdensity}, for $\bx_J \in (0, \infty)^J$ such that $\xdf_J(\bx_J) > 0$, the family $\cbr{\xdf_{I|J}(\point|t\bx_J) : t \in (0,\infty)}$ of probability densities is a scale family: if the random vector $\bm{\xi}_I$ has probability density $\xdf_{I|J}(\point|\bx_J)$, then the random vector $t \bm{\xi}_I$ has probability density $\xdf_{I|J}(\point|t\bx_J)$. As a consequence, for all $t \in (0,\infty)$, $\bx_I \in (0, \infty)^I$, \chng{$\bu_I \in (0, 1)^I$} and $i \in I$, we have $\xdf_{i|J}(tx_i|t\bx_J) 
		= \xdf_{i|J}(x_i|\bx_J)$, \chng{$\xcdf_{i|J}^{-1}(u_i|t\bx_J) = t \cdot \xcdf_{i|J}^{-1}(u_i|\bx_J)$, and}
	\begin{equation}
	\label{eq:cIJhomo}
		c_{I;J}(\bu_I;t\bx_J) = c_{I;J}(\bu_I;\bx_J).
	\end{equation}
	In particular, if $J = \cbr{j}$ for some $j \in \dset$, then $c_{I;j}(\point;x_j) = c_{I;j}(\point;1)$ does not depend on the value of $x_j \in (0,\infty)$.
\end{lemma}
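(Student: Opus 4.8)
The plan is to derive everything from the homogeneity property~\eqref{eq:lambdahomo} of $\xdf$. The first step is to record that every marginal density inherits homogeneity of the appropriate order: for non-empty $K \subseteq \dset$ and $s \in (0,\infty)$,
\begin{equation*}
\xdf_K(s\bx_K) = s^{1-|K|}\,\xdf_K(\bx_K), \qquad \bx_K \in (0,\infty)^K.
\end{equation*}
For $K = \dset$ this is exactly~\eqref{eq:lambdahomo}; for $K \subsetneq \dset$ I would start from the integral representation~\eqref{eq:lambdaJ}, $\xdf_K(\bx_K) = \int_{(0,\infty)^{d-|K|}} \xdf(\bx)\,\diff\bx_{\setminus K}$, and substitute $\bx_{\setminus K} = s\,\bz_{\setminus K}$; the Jacobian $s^{d-|K|}$ combines with the order $1-d$ of $\xdf$ to give $s^{d-|K|}\cdot s^{1-d} = s^{1-|K|}$.

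Granting this, the conditional density $\xdf_{I|J} = \xdf_{I \cup J}/\xdf_J$ is homogeneous of order $-|I|$ in its two arguments taken jointly, since $|I \cup J| = |I| + |J|$:
\begin{equation*}
\xdf_{I|J}(t\bx_I|t\bx_J)
= \frac{\xdf_{I\cup J}(t\bx_{I\cup J})}{\xdf_J(t\bx_J)}
= \frac{t^{\,1-|I|-|J|}\,\xdf_{I\cup J}(\bx_{I\cup J})}{t^{\,1-|J|}\,\xdf_J(\bx_J)}
= t^{-|I|}\,\xdf_{I|J}(\bx_I|\bx_J).
\end{equation*}
Now $\by_I \mapsto t^{-|I|}\,\xdf_{I|J}(\by_I/t|\bx_J)$ is precisely the probability density of $t\,\bm{\xi}_I$ when $\bm{\xi}_I$ has density $\xdf_{I|J}(\point|\bx_J)$, and substituting $\bx_I = \by_I/t$ in the display identifies it with $\xdf_{I|J}(\by_I|t\bx_J)$; this is the scale-family assertion.

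The remaining identities then follow by specialization. Projecting the scale-family statement onto a coordinate $i \in I$ shows that $t\xi_i$ has density $\xdf_{i|J}(\point|t\bx_J)$, from which the stated scaling of the marginal density, of the conditional distribution function $\xcdf_{i|J}$, and of the quantile function $\xcdf_{i|J}^{-1}$ (a factor $t$) are immediate. I would then substitute these into the definition~\eqref{eq:cIJ} of $c_{I;J}$: with $x_i := \xcdf_{i|J}^{-1}(u_i|\bx_J)$, the argument at which $c_{I;J}(\bu_I;t\bx_J)$ is evaluated becomes $tx_i$, the numerator $\xdf_{I|J}(t\bx_I|t\bx_J)$ carries a factor $t^{-|I|}$ while the denominator $\prod_{i \in I}\xdf_{i|J}(tx_i|t\bx_J)$ carries its reciprocal, so the two cancel and $c_{I;J}(\bu_I;t\bx_J) = c_{I;J}(\bu_I;\bx_J)$. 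For the last claim, when $J = \{j\}$ the set $\{tx_j : t \in (0,\infty)\}$ is all of $(0,\infty)$, so this invariance forces $c_{I;j}(\point;x_j)$ to be constant in $x_j$.

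I expect the only delicate point to be the bookkeeping of exponents: obtaining the order $1-|K|$ (rather than $1-d$) for the marginal densities, and then checking that the powers of $t$ in the numerator and denominator of~\eqref{eq:cIJ} cancel exactly. There is no real conceptual obstacle; as an alternative to the explicit change of variables one may view $\xdf_{I|J}(\point|\bx_J)$ as the renormalized restriction of the homogeneous function $\xdf_{I \cup J}$ to the set of points whose $J$-block equals $\bx_J$, for which the scale-family property holds by inspection.
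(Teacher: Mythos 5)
Your proposal is correct and follows essentially the same route as the paper's proof: both derive the identity $\xdf_{I|J}(\bx_I|t\bx_J) = t^{-|I|}\xdf_{I|J}(t^{-1}\bx_I|\bx_J)$ from the homogeneity of the marginal densities (which the paper uses implicitly and you justify explicitly via \eqref{eq:lambdaJ}), identify the scale family, and then read off the remaining claims. The only cosmetic difference is that the paper concludes \eqref{eq:cIJhomo} by invoking the invariance of the copula of $t\bm{\xi}_I$ under the scaling $t$, whereas you verify the same cancellation directly in formula \eqref{eq:cIJ}; the two arguments are interchangeable.
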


\begin{remark}[Conditional independence]
Let $\bY$ be the multivariate Pareto vector in \eqref{eq:mgpd}, suppose $d \ge 3$, and let $i,j$ be distinct elements in $\dset$, with $J = \dset \setminus \cbr{i,j}$. Provided $\xdf$ is positive and continuous, $Y_i$ and $Y_j$ are conditionally independent given $\bY_{J}$ in the sense of Definition~1 in \citet{engelke2020graphical} if and only if $c_{i,j;J}(u_i,u_j;\bx_J) = 1$ for all $(u_i,u_j) \in (0,1)^2$ and $\bx_J \in (0, \infty)^J$, the density of the independence copula.
This statement follows by combining the factorization property in Proposition~1 in \citet{engelke2020graphical} with equations \eqref{eq:nupdf} and \eqref{eq:cIJ};
see also \eqref{eq:lam2c} in \chng{Appendix~\ref{app:expmeas} in the supplement} for an expression of $c_{I;J}$ in terms of the exponent measure density $\nupdf$.
\end{remark}

\begin{definition}[Simplifying assumption]
\label{def:simplif}
The family of copula densities $c_{I;J}(\point;\bx_J)$ in \eqref{eq:cIJ} satisfies the \emph{simplifying assumption} if the copula densities do not depend on $\bx_J \in (0,\infty)^J$, that is, there exists a single copula density $c_{I;J}$ such that $c_{I;J}(\bu_I;\bx_J) = c_{I;J}(\bu_I)$ for all $\bu_I \in (0, 1)^I$ and $\bx_J \in (0, \infty)^J$.
\end{definition}

If $J$ is a singleton, $J = \cbr{j}$, the scale invariance \eqref{eq:cIJhomo} implies that $\cbr{c_{I;j}(\point;x_j) : x_j \in (0,\infty)}$ always satisfies the simplifying assumption. In Section~\ref{sec:parametric}, we will see that the copula families induced by many commonly used parametric exponent measure models satisfy the simplifying assumption for all $I$ and $J$. 

As a converse to Proposition~\ref{lem:expdensity}, we can combine several tail copula densities along a scale-invariant family of copula densities to form a tail copula density in higher dimensions. With slight abuse of notation, write $i \cup J := \{i\} \cup J$ for $i \in I$. A family of $k$-variate copula densities $\cbr{c(\point;\theta) : \theta \in \Theta}$ indexed by a Borel set $\Theta \subseteq \reals^m$ is \emph{jointly measurable} if the map $(\bu,\theta) \mapsto c(\bu;\theta)$ is Borel measurable on $(0,1)^k \times \Theta$. 

\begin{proposition}[Sklar's theorem: converse part]
	\label{lem:Sklar}
	Let $I \cup J$ be a partition of $\dset$ into two disjoint, non-empty sets, with $d \ge 3$ and $|I| \ge 2$.
	Let $(\xdf_{i \cup J})_{i \in I}$ be an $|I|$-tuple of $(|J|+1)$-variate tail copula densities with common margin $\xdf_J$.
	Let $\cbr{ c_{I;J}(\point;\bx_J) : \bx_J \in (0, \infty)^J }$ be a jointly measurable family of $|I|$-variate copula densities such that~\eqref{eq:cIJhomo} holds. Then the function $\xdf : (0, \infty)^d \to [0, \infty)$ defined by
	\begin{equation}
		\label{eq:Sklarconstructive}
		\xdf(\bx)
		:= \xdf_J(\bx_J) \cdot \prod_{i \in I} \xdf_{i|J}(x_i|\bx_J) \cdot
		c_{I;J} \rbr{ \xcdf_{i|J}(x_i|\bx_J), i \in I ; \bx_J },
	\end{equation}
	for $\bx \in (0, \infty)^d$ such that $\xdf_J(\bx_J) > 0$ and zero otherwise, is a $d$-variate tail copula density with margins $\xdf_{i \cup J}$ for $i \in I$.
\end{proposition}

\section{Parametric families of tail copula densities}
\label{sec:parametric}

We will compute the copula densities $c_{I;J}(\point;\bx_j)$ in \eqref{eq:cIJ} for tail copula densities $\xdf$ arising from two parametric families: the scaled extremal Dirichlet model (Section~\ref{sec:scdiri}) due to \citet{belzile2017extremal}, encompassing the logistic, negative logistic and Dirichlet max-stable models, and the \HR{} model (Section~\ref{sec:hr}). For both families, the simplifying assumption (Definition~\ref{def:simplif}) is satisfied for all choices of $I$ and $J$, and the copula densities $c_{I;J}$ take on known parametric forms. This section can be skipped at first reading.

\subsection{Scaled extremal Dirichlet model}
\label{sec:scdiri}

In dimension $d \ge 2$, consider parameters $\alpha_1,\ldots,\alpha_d > 0$ and $\rho > - \min(\alpha_1,\ldots,\alpha_d)$ with $\rho \ne 0$. 
The angular measure density of the \emph{scaled extremal Dirichlet} model is introduced in \citet[Proposition~5]{belzile2017extremal}.
By Eq.~\eqref{eq:nupdf} and Lemma~\ref{lem:lambdah} \chng{in the supplement}, the corresponding tail copula density is
\begin{equation}
	\label{eq:xdfscdiri}
	\xdf(\bx) = \frac{\Gamma(\bar\alpha+\rho)}{d |\rho|^{d-1} \prod_{j=1}^d \Gamma(\alpha_j)}
	\cdot \cbr{\sum_{j=1}^d c(\alpha_j,\rho)^{1/\rho} x_j^{-1/\rho}}^{-\rho-\bar\alpha}
	\cdot \prod_{j=1}^d	c(\alpha_j,\rho)^{\alpha_j/\rho} x_j^{-\alpha_j/\rho-1},
\end{equation}
for $\bx \in (0, \infty)^d$, where $\bar{\alpha} = \alpha_1+\cdots+\alpha_d$ and $c(\alpha,\rho) = \Gamma(\alpha+\rho)/\Gamma(\alpha)$.
The model is closed under marginalisation: for $J \subseteq \dset$ with $|J| \ge 2$, the marginal tail copula density $r_J$ in \eqref{eq:lambdaJ} is of the same form in dimension $|J|$ and with parameters $(\alpha_j)_{j \in J}$ and $\rho$. The model unites and extends several well-known parametric families in multivariate extreme value analysis: 
\begin{compactitem}
	\item 
	if $\alpha_1 = \ldots = \alpha_d = 1$ and $\rho > 0$, we get the negative logistic model \citep{joe1990families};
	\item 
	if $\alpha_1 = \ldots = \alpha_d = 1$ and $-1 < \rho < 0$, we obtain the logistic model \citep{gumbel1960distributions};
	\item 
	if $\rho = 1$, we get the Coles–Tawn extremal Dirichlet model \citep{coles1991modelling}.
\end{compactitem}

In \citet[Definition~3]{mcneil2010archimedean}, the family of \emph{Liouville} copulas is defined as the collection of survival copulas of random vectors of the form $\bY = S \bW$, where $\bW$ has a Dirichlet distribution on the unit simplex and is independent of the positive random variable~$S$.
The special case where $\bW$ is uniformly distributed on the unit simplex yields the family of \chng{\emph{Archimedean}} copulas \citep{mcneil2009multivariate}.

\begin{proposition}
	\label{prop:scdiri}
	Let $\xdf$ be the scaled extremal Dirichlet tail copula density in \eqref{eq:xdfscdiri} in dimension $d \ge 3$. For every pair of disjoint, non-empty subsets $I, J \subset \dset$ with $|I| \ge 2$, the copula densities $c_{I;J}(\point;\bx_J)$ in \eqref{eq:cIJ} satisfy the simplifying assumption (Definition~\ref{def:simplif}). If $\rho > 0$, then $c_{I;J}$ is equal to \chng{the density of} the $|I|$-variate Liouville copula with Dirichlet parameters $(\alpha_i)_{i \in I}$ and with radial density on $(0, \infty)$ proportional to $s \mapsto s^{\sum_{i \in I} \alpha_i - 1} \cdot \rbr{s+1}^{-\rho-\sum_{k \in I \cup J} \alpha_k}$.
	If $\rho < 0$, then $c_{I;J}$ is equal to the density of the survival copula of the above Liouville copula.
\end{proposition}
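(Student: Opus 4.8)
The plan is to compute the marginal and conditional tail copula densities explicitly from the closed form~\eqref{eq:xdfscdiri} and recognise the resulting conditional density, after the marginal probability-integral transforms of Proposition~\ref{lem:expdensity}, as the stated Liouville copula density. Since the scaled extremal Dirichlet model is closed under marginalisation, $\xdf_{I \cup J}$ and $\xdf_J$ are again of the form~\eqref{eq:xdfscdiri} in dimensions $|I \cup J|$ and $|J|$, with the same $\rho$ and the restricted $\alpha$-parameters. Forming the ratio $\xdf_{I|J}(\bx_I \mid \bx_J) = \xdf_{I\cup J}(\bx_{I \cup J})/\xdf_J(\bx_J)$, the normalising constants and the pure power factors $\prod_{j \in J} c(\alpha_j,\rho)^{\alpha_j/\rho} x_j^{-\alpha_j/\rho-1}$ in numerator and denominator do not cancel cleanly, but all of the $\bx_J$-dependence that remains sits inside the bracketed sum raised to the power $-\rho-\bar\alpha_{I\cup J}$ (in the numerator) and $-\rho-\bar\alpha_J$ (in the denominator). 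Writing $\sigma_J = \sum_{j \in J} c(\alpha_j,\rho)^{1/\rho} x_j^{-1/\rho}$, the conditional density is, up to a factor depending on $\bx_J$ only, proportional to $\{\sigma_J + \sum_{i \in I} c(\alpha_i,\rho)^{1/\rho} x_i^{-1/\rho}\}^{-\rho-\bar\alpha_{I\cup J}}\prod_{i\in I} x_i^{-\alpha_i/\rho-1}$.

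The next step is to handle the marginal univariate conditionals $\xdf_{i|J}(\cdot\mid\bx_J)$ for $i \in I$, which by the same marginalisation property equal $\xdf_{\{i\}\cup J}(x_i,\bx_J)/\xdf_J(\bx_J)$; these are known one-dimensional densities (a generalised beta-prime / scaled $F$-type law in the variable $x_i^{-1/\rho}$), whose cumulative distribution functions $\xcdf_{i|J}(\cdot\mid\bx_J)$ and quantile functions are available in closed form. Substituting $x_i = \xcdf_{i|J}^{-1}(u_i\mid\bx_J)$ into the ratio~\eqref{eq:cIJ} and using the scale-invariance afforded by Lemma~\ref{lem:homo} (or directly by homogeneity) to absorb the $\bx_J$-dependence, one should obtain a density on $(0,1)^{|I|}$ that no longer depends on $\bx_J$, which establishes the simplifying assumption. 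The cleanest route is to change variables from $(x_i)_{i\in I}$ to $(v_i)_{i\in I}$ with $v_i \propto c(\alpha_i,\rho)^{1/\rho}x_i^{-1/\rho}$ (relative to $\sigma_J$), under which the joint conditional law of $(v_i)_{i\in I}$ becomes exactly that of $S\bW$ with $\bW \sim \mathrm{Dirichlet}((\alpha_i)_{i\in I})$ and $S$ having the claimed radial density $s\mapsto s^{\sum_{i\in I}\alpha_i-1}(s+1)^{-\rho-\sum_{k\in I\cup J}\alpha_k}$ — recognising this mixture representation is the conceptual heart of the argument and links back to the Liouville definition of \citet{mcneil2010archimedean}. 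Since a copula is invariant under strictly monotone marginal transformations, and $x_i\mapsto v_i$ is strictly decreasing when $\rho>0$ and strictly increasing when $\rho<0$, the copula of $\xdf_{I|J}(\cdot\mid\bx_J)$ coincides with the Liouville copula of $(v_i)_{i\in I}$ when $\rho>0$ and with its survival copula (because of the orientation reversal) when $\rho<0$; this is exactly the dichotomy in the statement.

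I expect the main obstacle to be bookkeeping rather than conceptual: carefully tracking the constants and the $\bx_J$-only factors through the ratio, and correctly identifying the radial distribution of $S$ and the normalising constant of the Liouville mixture so that the marginal constraints~\eqref{eq:margin:pdf} are respected and $c_{I;J}$ genuinely integrates to one. One subtlety worth isolating is that $c(\alpha,\rho) = \Gamma(\alpha+\rho)/\Gamma(\alpha)$ can be negative when $-\min_j\alpha_j < \rho < 0$, so the substitution $v_i \propto c(\alpha_i,\rho)^{1/\rho} x_i^{-1/\rho}$ and the power $c(\alpha_i,\rho)^{\alpha_i/\rho}$ must be interpreted consistently (the combination $c(\alpha,\rho)^{1/\rho}x^{-1/\rho}$ is what appears, and its sign issues cancel in the density); it is cleanest to verify the final identity directly by a change-of-variables computation rather than by manipulating possibly-complex powers. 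A secondary check is to confirm, via the marginalisation property applied to singletons, that the univariate quantile transforms do the right thing, i.e. that after the transform the margins of the claimed Liouville (or survival-Liouville) copula are indeed uniform, which is automatic once the radial density is pinned down but should be stated as the closing verification.
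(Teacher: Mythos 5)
Your proposal follows essentially the same route as the paper's proof: closure under marginalisation to form $\xdf_{I|J}$, the scale invariance of Lemma~\ref{lem:homo} (choosing $t$ with $s(t\bx_J)=1$) to eliminate the $\bx_J$-dependence and establish the simplifying assumption, the power transform $x_i\mapsto x_i^{-1/\rho}$ followed by the polar decomposition $(\bW,S)$ with $S=\sum_i Y_i$ to identify the Dirichlet factor and the radial density, and the monotonicity direction of that transform to get the copula/survival-copula dichotomy in $\rho$. The only slip is your worry that $c(\alpha,\rho)=\Gamma(\alpha+\rho)/\Gamma(\alpha)$ could be negative for $\rho<0$: the standing constraint $\rho>-\min_j\alpha_j$ gives $\alpha+\rho>0$, so $c(\alpha,\rho)>0$ always and no sign issue arises.
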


In special cases, the copula densities $c_{I;J}$ can be identified more explicitly (calculations in \chng{Appendix~\ref{sec:proof:param} in the supplement}):
\begin{compactitem}
	\item if $\alpha_1 = \ldots = \alpha_d = 1$ and $\rho > 0$, then $c_{I;J}$ is the density of the $|I|$-variate Clayton copula with parameter $\theta / \rbr{1+|J|\theta}$ with $\theta = 1/\rho > 0$;
	\item if $\alpha_1 = \ldots = \alpha_d = 1$ and $-1 < \rho < 0$, then $c_{I;J}$ is the density of the $|I|$-variate Clayton survival copula with parameter $\theta / \rbr{|J|\theta-1}$ with $\theta = -1/\rho > 1$.
\end{compactitem}
In both cases, the dependence \emph{decreases} as the number $|J|$ of conditioning variables increases. 

\subsection{\HR{} model}\label{sec:hr}

Let $\mathcal{D}_{d}$ be the set of symmetric, strictly conditionally negative definite matrices, that is, all matrices $\Gamma = (\Gamma_{ij})_{i,j=1}^d \in \reals^{d \times d}$ of the form 
$\Gamma_{ij} = \var(A_i - A_j)$ for a $d$-variate random vector $\bm{A} = (A_1,\ldots,A_d)$ with positive definite covariance matrix, i.e., $\Gamma$ is a variogram matrix. 
Fix $k \in \dset$ and consider the $(d-1) \times (d-1)$-dimensional positive definite covariance matrix 
\begin{equation}
\label{eq:Sigk}
	\Sigk = \half \rbr{\Gamma_{ik} + \Gamma_{jk} - \Gamma_{ij}}_{i,j\ne k};
\end{equation}
for $\Gamma$ as in the previous sentence, we have $\Sigk_{ij} = \cov(A_i-A_k,A_j-A_k)$.
The $d$-variate \emph{\HR}\ tail copula density is
\begin{equation}
	\label{eq:HR}
	\xdf(\bx;\Sigk) =
	\rbr{\prod_{i:i\ne k} x_{i}^{-1} }
	\phi_{d-1} \rbr{\bar{\bx}_{\setminus{k}};\Sigk},
	\qquad \bx \in (0,\infty)^d,
\end{equation}
where $\phi_{d-1}$ is the centered $(d-1)$-dimensional Gaussian density with the stated covariance matrix and
$\bar{\bx}\mk = \rbr{\log(x_i/x_k) - \half\Gamma_{ik}}_{i:i\ne k}$.
The exponent measure density $\nupdf$ associated to $\xdf$ via \eqref{eq:nupdf} appears in \cite{engelke2015estimation}, where it is shown that it does not depend on the choice of $k \in \dset$. The corresponding max-stable distribution $G$ in \eqref{eq:max-stable} goes back to \citet{huesler1989maxima}, who studied maxima of triangular arrays of Gaussian random vectors.

For a matrix $M$ and for index sets $K$ and $L$, write $M_{KL} = (M_{ij})_{i \in K, j \in L}$. Let $\Phi$ denote the standard normal distribution function and recall that the $m$-variate Gaussian copula density with $(m \times m)$-dimensional positive definite correlation matrix $\bR$ evaluated in $\bu \in (0, 1)^m$ is
\begin{equation}
	\label{eq:copGausspdf}
	c_{\bR}(\bu) = \abs{\bR}^{-1/2} \cdot \exp \cbr{ - \half \bz^\top \rbr{ \bR^{-1} - I_p } \bz } 
	\quad \text{where} \quad z_i = \Phi^{-1}(u_i).
\end{equation}

\begin{proposition}
	\label{prop:hr}
	Let $\Gamma$ be a $d \times d$ variogram with $d \ge 3$ and let $\xdf$ be the \HR{} tail copula density in \eqref{eq:HR}. For every pair of disjoint, non-empty subsets $I, J \subset \dset$ with $|I| \ge 2$, the copula densities $c_{I;J}(\point;\bx_J)$ in \eqref{eq:cIJ} satisfy the simplifying assumption (Definition~\ref{def:simplif}). \chng{The copula density} $c_{I;J}$ is equal to the $|I|$-variate Gaussian copula density \eqref{eq:copGausspdf} with correlation matrix $\bR_{I|J} = (\Delta_{I|J}^{(k)})^{-1/2} \, \Sigk_{I|J} \, (\Delta_{I|J}^{(k)})^{-1/2}$ for any $k \in J$, where $\Delta_{I|J}^{(k)}$ is the diagonal matrix with the same diagonal as
$\Sigk_{I|J}
		= \Sigk_{II} - 
		\Sigk_{IJ} \rbr{\Sigk_{JJ}}^{-1} \Sigk_{JI}$.
\end{proposition}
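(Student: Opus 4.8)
The plan is to fix a reference index $k \in J$ (possible since $J \ne \emptyset$), set $J' = J \setminus \{k\}$, and carry out every computation in the logarithmic coordinates $\bar x_i = \log(x_i/x_k) - \half\Gamma_{ik}$, $i \ne k$, in which \eqref{eq:HR} reads $\xdf(\bx;\Sigk) = \bigl(\prod_{i\ne k} x_i^{-1}\bigr)\,\phi_{d-1}(\bar{\bx}\mk;\Sigk)$, where $\phi_{d-1}(\point;\Sigk)$ is everywhere positive because $\Sigk$ is positive definite. First I would compute the three marginal densities entering \eqref{eq:cIJ}, namely $\xdf_{I\cup J}$, $\xdf_J$, and $\xdf_{\{i\}\cup J}$ for $i \in I$, by integrating $\xdf$ over the coordinates outside the respective index set; since $k$ is always retained, each change of variables $x_m \mapsto \bar x_m$ (with $\diff x_m/x_m = \diff\bar x_m$) cancels the factor $x_m^{-1}$ and turns the integral into a Gaussian marginalisation, so every one of these densities is again a product of the reciprocals $x_i^{-1}$ of the retained coordinates $i \ne k$ times a centred Gaussian density in the corresponding $\bar x$-coordinates, with covariance the associated principal submatrix of $\Sigk$. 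In particular $\xdf_J > 0$ on all of $(0,\infty)^J$, so the conditionals below are everywhere defined. (Here and below, for $J \ni k$ the symbol $\Sigk_{JJ}$ abbreviates $\Sigk_{J'J'}$ and $\Sigk_{IJ}$ stands for $\Sigk_{IJ'}$.)

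Next I would form $\xdf_{I|J}(\bx_I \mid \bx_J) = \xdf_{I\cup J}/\xdf_J$. By the standard identity expressing the ratio of a multivariate Gaussian density to one of its marginals as a conditional Gaussian density, this equals $\bigl(\prod_{i\in I} x_i^{-1}\bigr)$ times the Gaussian density in $\bar{\bx}_I$ with mean $\bmu := \Sigk_{IJ}(\Sigk_{JJ})^{-1}\bar{\bx}_{J'}$ and covariance the Schur complement $\Sigk_{I|J}$; similarly $\xdf_{i|J}(x_i \mid \bx_J) = x_i^{-1}$ times the univariate Gaussian density with mean $\mu_i$ (the $i$-th entry of $\bmu$) and variance $(\Delta_{I|J}^{(k)})_{ii}$, the key consistency being that this diagonal entry is simultaneously the $(i,i)$-entry of the full Schur complement $\Sigk_{I|J}$ and the conditional variance read off from $\xdf_{\{i\}\cup J}/\xdf_J$, namely $\Sigk_{ii} - \Sigk_{iJ'}(\Sigk_{J'J'})^{-1}\Sigk_{J'i}$. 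Integrating out $t$ in $\xcdf_{i|J}(x_i\mid\bx_J) = \int_0^{x_i}\xdf_{i|J}(t\mid\bx_J)\,\diff t$ via $t\mapsto\bar t$ then gives $\xcdf_{i|J}(x_i\mid\bx_J) = \Phi\bigl((\bar x_i - \mu_i)/\sqrt{(\Delta_{I|J}^{(k)})_{ii}}\,\bigr)$, so the quantile substitution called for in \eqref{eq:cIJ} reads $\bar x_i = \mu_i + \sqrt{(\Delta_{I|J}^{(k)})_{ii}}\,z_i$ with $z_i = \qnorm(u_i)$, i.e.\ $\bar{\bx}_I - \bmu = (\Delta_{I|J}^{(k)})^{1/2}\bz$ in vector form.

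Finally I would substitute into \eqref{eq:cIJ}: the Jacobian factors $\prod_{i\in I}x_i^{-1}$ cancel between numerator and denominator, leaving the ratio of the centred $|I|$-variate Gaussian density with covariance $\Sigk_{I|J}$, evaluated at $(\Delta_{I|J}^{(k)})^{1/2}\bz$, to the product of its univariate marginals at the same point. A short determinant-and-quadratic-form computation --- using $|2\pi\Sigk_{I|J}| = (2\pi)^{|I|}|\Sigk_{I|J}|$, $\bR_{I|J}^{-1} = (\Delta_{I|J}^{(k)})^{1/2}(\Sigk_{I|J})^{-1}(\Delta_{I|J}^{(k)})^{1/2}$ and $|\bR_{I|J}| = |\Sigk_{I|J}|/|\Delta_{I|J}^{(k)}|$ --- collapses this ratio to $|\bR_{I|J}|^{-1/2}\exp\bigl\{-\half\bz^\top(\bR_{I|J}^{-1} - I)\bz\bigr\}$, which is precisely the Gaussian copula density \eqref{eq:copGausspdf} with correlation matrix $\bR_{I|J}$. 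Since $\bx_J$ entered only through $\bmu(\bar{\bx}_{J'})$, which is absorbed by the quantile substitution, $c_{I;J}$ does not depend on $\bx_J$, yielding the simplifying assumption; independence of the choice $k \in J$ is then automatic, as $c_{I;J}$ is defined from $\xdf$ alone and $\xdf$ does not depend on $k$ (as recalled after \eqref{eq:HR}). The only real difficulty is the bookkeeping: keeping the Gaussian conditioning and marginalisation consistent across the three auxiliary densities, and carrying out the final determinant algebra.
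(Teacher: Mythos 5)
Your proof is correct and follows essentially the same route as the paper's: marginal closure of the \HR{} density under integration over unused coordinates, Gaussian conditioning in the log-coordinates yielding a log-normal conditional with covariance the Schur complement $\Sigk_{I|J}$, and identification of its copula as the Gaussian copula with the standardised correlation matrix $\bR_{I|J}$. The only difference is one of detail: the paper shortcuts the final step by invoking invariance of copulas under componentwise increasing transformations, whereas you carry out the quantile substitution and the determinant-and-quadratic-form algebra explicitly — both are valid.
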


\chng{The matrix $\bm{R}_{I|J}$ in Proposition~\ref{prop:hr} is the correlation matrix of the conditional Gaussian distribution with covariance matrix $\Sigk_{I|J}$, the corresponding Schur complement. Its} expression depends on the choice of $k \in J$, although the actual matrix does not. 
\chng{It remains to be investigated how to express it in terms of the \HR{} precision matrix $\Theta$ introduced in \citet{hentschel2022statistical}.}

\section{Vine decompositions of tail copula densities}
\label{sec:vine}

\subsection{\chng{X-vines: density decomposition and construction}}

In this section, we show that any tail copula density $\xdf$ in dimension $d \ge 3$ can be decomposed along any regular vine sequence (Definition~\ref{def:rvs}) on $d$ elements into $d-1$ bivariate tail copula densities and $\sum_{i=1}^{d-2} i = \binom{d-1}{2}$ bivariate copula densities. Moreover,  \chng{copula densities with only} a single conditioning variable satisfy the simplifying assumption (Definition~\ref{def:simplif}). Conversely, starting from any regular vine sequence on $d \ge 3$ elements, any collection of $d-1$ bivariate tail copula densities and any collection of $\binom{d-1}{2}$ bivariate copula densities, a $d$-variate tail copula density can be assembled. We coin tail copula densities constructed in this way as \emph{X-vines}.
\chng{Appendix~\ref{app:expmeas} in the supplement states the main results in terms of exponent measure densities.}

\begin{theorem}[Tail copula density decomposition along a regular vine]
\label{thm:xvine}
Let $\xdf$ be a $d$-variate ($d \ge 3$) tail copula density and let $\Vine = (\tree_j)_{j=1}^{d-1}$ with $\tree_j = \rbr{\nodes_j,\edges_j}$ be a regular vine sequence on $\dset$. For $\bx \in (0,\infty)^d$, we have
\begin{equation}
\label{eq:xdfdecomp}
	\xdf(\bx) = \prod_{e \in \edges_1} \xdf_{a_e,b_e}(x_{a_e},x_{b_e}) \cdot
	\prod_{j=2}^{d-1} \prod_{e \in \edges_j}
	c_{a_e,b_e;\cndg_e} \rbr{ 
		\xcdf_{a_e|\cndg_e}(x_{a_e}|\bx_{\cndg_e}), \xcdf_{b_e|\cndg_e}(x_{b_e}|\bx_{\cndg_e});
		\bx_{\cndg_e}
	},
\end{equation}
where, for $e = (a_e,b_e;\cndg_e) \in E_2 \cup \cdots \cup E_{d-1}$, the pair-copula density $c_{a_e,b_e;\cndg_e}$ is
\begin{equation}
\label{eq:caebeDex}	
	c_{a_e,b_e;\cndg_e}\rbr{ u_{a_e}, u_{b_e} ; \bx_{\cndg_e} }
	= \frac{\xdf_{a_e,b_e|\cndg_e}(x_{a_e},x_{b_e}|\bx_{\cndg_e})}{\xdf_{a_e|\cndg_e}(x_{a_e}|\bx_{\cndg_e}) \cdot \xdf_{b_e|\cndg_e}(x_{b_e}|\bx_{\cndg_e})},
\end{equation}
with $x_{a_e} = \xcdf_{a_e|\cndg_e}^{-1} (u_{a_e}| \bx_{\cndg_e})$ and $x_{b_e} = \xcdf_{b_e|\cndg_e}^{-1} (u_{b_e}| \bx_{\cndg_e})$ for $u_{a_e},u_{b_e} \in (0,1)$.
\end{theorem}

\chng{The decomposition~\eqref{eq:xdfdecomp} also applies to marginal tail copula densities $\xdf_J$ for index sets $J = \cunn_f$ for some edge $f$ in the vine; see Remark~\ref{rem:xdfdecomp-J} in the supplement.}
For the pair copula $C_{a_e, b_e; \cndg_e}(\point,\point; \bx_{\cndg_e})$ associated to the edge $e = (a_e, b_e; \cndg_e)$ in the regular vine sequence, consider the first-order partial derivatives
\begin{equation}
\label{eq:Ccond}
\begin{split}
	C_{a_e|b_e; \cndg_e}(u_{a_e} \mid u_{b_e}; \bx_{\cndg_e})
	&:=
	\frac{\partial}{\partial u_{b_e}} C_{a_e, b_e; \cndg_e}(u_{a_e}, u_{b_e}; \bx_{\cndg_e})
	=
	\int_{v=0}^{u_{a_{e}}}
	c_{a_e, b_e; \cndg_e}(v, u_{b_e}; \bx_{\cndg_e})
	\, \diff v,
	\\
	C_{b_e|a_e; \cndg_e}(u_{b_e} \mid u_{a_e}; \bx_{\cndg_e})
	&:=
	\frac{\partial}{\partial u_{a_e}} C_{a_e, b_e; \cndg_e}(u_{a_e}, u_{b_e}; \bx_{\cndg_e}) 
	=
	\int_{v=0}^{u_{b_{e}}}
		c_{a_e, b_e; \cndg_e}(u_{a_e}, v; \bx_{\cndg_e})
	\, \diff v.
\end{split}
\end{equation}

\begin{theorem}[Recursion and uniqueness]
	\label{thm:recuni}
	In the setting of Theorem~\ref{thm:xvine}, for any $e = (a_e,b_e;\cndg_e) \in \edges_2 \cup \cdots \cup \edges_{d-1}$, we have
	\begin{equation}
	\label{eq:xcdfrecur}
	\begin{split}
		\xcdf_{a_e|\cndg_e \cup b_e}(x_{a_e}| \bx_{\cndg_e \cup b_e})
		&=
		C_{a_e|b_e;\cndg_e} \rbr{
			\xcdf_{a_e|\cndg_e}(x_{a_e}|\bx_{\cndg_e})
			\mid
			\xcdf_{b_e|\cndg_e}(x_{b_e}|\bx_{\cndg_e});
			\bx_{\cndg_e}
		}, \\
		\xcdf_{b_e|\cndg_e \cup a_e}(x_{b_e}| \bx_{\cndg_e \cup a_e})
		&=
		C_{b_e|a_e;\cndg_e} \rbr{
			\xcdf_{b_e|\cndg_e}(x_{b_e}|\bx_{\cndg_e}) 
			\mid
			\xcdf_{a_e|\cndg_e}(x_{a_e}|\bx_{\cndg_e});
			\bx_{\cndg_e}
		}.
	\end{split}
	\end{equation}
	As a consequence, $\xdf$ is determined uniquely by the bivariate tail copula densities $\xdf_{a,b}$ for $e = \cbr{a,b} \in \edges_1$ and the bivariate copula densities $c_{a_e,b_e;\cndg_e}(\point;\bx_{\cndg_e})$ for $e \in \edges_2 \cup \cdots \cup \edges_{d-1}$ and $\bx_{\cndg_e} \in (0,\infty)^{\cndg_e}$.
\end{theorem}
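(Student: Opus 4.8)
The plan is to prove the recursion \eqref{eq:xcdfrecur} by a direct change of variables, and then deduce uniqueness by feeding the recursion into the decomposition \eqref{eq:xdfdecomp} of Theorem~\ref{thm:xvine} and running an induction on the size of the conditioning set.

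\emph{The recursion.} Fix $e = (a_e,b_e;\cndg_e) \in \edges_2 \cup \cdots \cup \edges_{d-1}$, abbreviate $G = \cndg_e$, and put $u_{a_e} = \xcdf_{a_e|G}(x_{a_e}|\bx_G)$, $u_{b_e} = \xcdf_{b_e|G}(x_{b_e}|\bx_G)$; it is enough to prove the first identity, the second being symmetric. I would start from the integral representation $C_{a_e|b_e;G}(u_{a_e}\mid u_{b_e};\bx_G) = \int_0^{u_{a_e}} c_{a_e,b_e;G}(v,u_{b_e};\bx_G)\,\diff v$ in \eqref{eq:Ccond}, substitute the formula for $c_{a_e,b_e;G}$ from Theorem~\ref{thm:xvine}, and then change variables $v = \xcdf_{a_e|G}(s|\bx_G)$, so that $\diff v = \xdf_{a_e|G}(s|\bx_G)\,\diff s$ and the argument $\xcdf_{a_e|G}^{-1}(v|\bx_G)$ equals $s$. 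The Jacobian cancels the factor $\xdf_{a_e|G}$ in the denominator of $c_{a_e,b_e;G}$, leaving $\int_0^{x_{a_e}} \xdf_{a_e,b_e|G}(s,x_{b_e}|\bx_G)\big/\xdf_{b_e|G}(x_{b_e}|\bx_G)\,\diff s$; writing the two conditional densities as the corresponding ratios of marginal densities of $\xdf$ and cancelling $\xdf_G(\bx_G)$, the integrand becomes $\xdf_{a_e|G\cup\{b_e\}}(s|\bx_{G\cup\{b_e\}})$, whose integral over $(0,x_{a_e})$ is $\xcdf_{a_e|G\cup\{b_e\}}(x_{a_e}|\bx_{G\cup\{b_e\}})$, as required.

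\emph{Uniqueness.} By Theorem~\ref{thm:xvine}, each factor on the right of \eqref{eq:xdfdecomp} is either one of the bivariate tail copula densities $\xdf_{a_e,b_e}$ ($e\in\edges_1$) or one of the bivariate copula densities $c_{a_e,b_e;\cndg_e}(\point;\bx_{\cndg_e})$ ($e\in\edges_2\cup\cdots\cup\edges_{d-1}$) evaluated at the conditional distribution functions $\xcdf_{a_e|\cndg_e}(x_{a_e}|\bx_{\cndg_e})$, $\xcdf_{b_e|\cndg_e}(x_{b_e}|\bx_{\cndg_e})$. Since the former are precisely the prescribed building blocks, it remains to show that every conditional distribution function $\xcdf_{a_e|\cndg_e}$, $\xcdf_{b_e|\cndg_e}$ occurring here is determined by the building blocks. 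I would prove this by induction on $\ell = |\cndg_e| \ge 0$, that is, over the trees $\tree_{\ell+1}$. The base case $\ell = 0$ is immediate, since $\xcdf_{a_e|\emptyset} = \xcdf_{a_e}$ is the identity by the marginal constraint \eqref{eq:margin}. For the step, fix $\ell\ge1$ and $f = \{a,b\}\in\edges_{\ell+1}$, and let $a\in\edges_\ell$ be the sub-edge with $a_f\in\cunn_a$. Using the proximity condition together with the relations among $\cunn$, $\cndg$ and $\cndd$ recorded after Definition~\ref{def:ucd}, I would check that $\cndd_{f,a} = \cunn_a\setminus\cunn_b = \{a_f\}$ forces $\cunn_a = \cndg_f\cup\{a_f\}$, and, writing $\cndd_a = \{a_f,b_a\}$, that $\cndg_f = \cndg_a\cup\{b_a\}$. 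The appropriate identity in \eqref{eq:xcdfrecur} applied to the edge $a$ --- or, when $\ell = 1$, the direct computation $\xcdf_{a_f|\{b_a\}}(x_{a_f}|x_{b_a}) = \int_0^{x_{a_f}}\xdf_{a_f,b_a}(s,x_{b_a})\,\diff s$, valid because $\xdf_{\{b_a\}}\equiv1$ by \eqref{eq:margin:pdf} --- then writes $\xcdf_{a_f|\cndg_f} = \xcdf_{a_f|\cndg_a\cup\{b_a\}}$ in terms of $C_{a_f|b_a;\cndg_a}$ (an integral of the building-block copula density attached to $a$) and the two conditional distribution functions $\xcdf_{a_f|\cndg_a}$, $\xcdf_{b_a|\cndg_a}$ attached to the edge $a\in\edges_\ell$, hence determined by the induction hypothesis. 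Treating the other sub-edge $b$ identically gives $\xcdf_{b_f|\cndg_f}$, which completes the induction; substituting back into \eqref{eq:xdfdecomp} then shows that $\xdf$ is uniquely determined by the building blocks.

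\emph{Main obstacle.} The analytic content --- the substitution in the recursion --- is routine. The point that needs care is the combinatorics of the induction step: from Definitions~\ref{def:rvs}--\ref{def:ucd} and the proximity condition one must confirm that the conditioning set of an edge $f$ in tree $\tree_{\ell+1}$ splits as $\cndg_f = \cndg_a\cup\{b_a\}$ for a uniquely determined sub-edge $a\in\edges_\ell$ with conditioned set $\{a_f,b_a\}$, so that the recursion genuinely applies one level down and returns exactly the conditional distribution functions supplied by the induction hypothesis.
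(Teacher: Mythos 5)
Your proof is correct, but the recursion is established by a genuinely different route from the paper's. You treat \eqref{eq:xcdfrecur} as a purely analytic consequence of Sklar's theorem for the conditional density $\xdf_{\{a_e,b_e\}|\cndg_e}(\point|\bx_{\cndg_e})$: substituting $v=\xcdf_{a_e|\cndg_e}(s|\bx_{\cndg_e})$ into the integral \eqref{eq:Ccond} defining $C_{a_e|b_e;\cndg_e}$ cancels the Jacobian against the factor $\xdf_{a_e|\cndg_e}$ in the denominator of \eqref{eq:cIJ}, and the remaining integrand collapses to $\xdf_{a_e|\cndg_e\cup b_e}(s|\bx_{\cndg_e\cup b_e})$; the vine structure plays no role in this step. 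The paper instead first reduces to the case where $e$ is the deepest edge by passing to the sub-vine on $\cunn_e$, applies the decomposition \eqref{eq:xdfdecomp} to both $\xdf$ and the marginal $\xdf_{\cunn_g}$ with $\cunn_g=\cndg_e\cup\{b_e\}$, invokes the unique-edge property (Lemma~\ref{lem:uniqueness}) to identify which factors survive in the quotient $\xdf/\xdf_{\cunn_g}=\xdf_{a_e|\cndg_e\cup b_e}$, and only then integrates; the pointwise identity it ends up integrating is exactly the one your substitution produces, so your argument is shorter and isolates the analytic content, while the paper's detour yields as a by-product explicit product formulas for $\xdf_{a_e|\cndg_e\cup b_e}$ and $\xdf_{a_e|\cndg_e}$ along the chain of edges containing $a_e$. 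For the uniqueness claim the paper offers nothing beyond the remark, placed after the theorem, that each index $a_e|\cndg_e$ is itself of the form $a_f|\cndg_f\cup b_f$ for the sub-edge $f\in e$ one tree lower; your induction on $|\cndg_e|$, together with the verification that $\cunn_a=\cndg_f\cup\{a_f\}$ and $\cndg_f=\cndg_a\cup\{b_a\}$ (which indeed follows from $\cndd_{f,a}=\cndd_f\cap\cndd_a$ and the proximity condition as recorded after Definition~\ref{def:ucd}), supplies precisely the combinatorial justification the paper leaves implicit, and your separate treatment of the level-one case via $\xdf_{\{b_a\}}\equiv 1$ is the correct base case. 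The only caveat, shared with the paper, is that the change of variables and the inverses $\xcdf^{-1}_{\point|\point}$ implicitly require positivity of the conditional densities at the points involved.
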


Equation~\eqref{eq:xcdfrecur} is effectively a recursive relation, allowing to reduce the number of conditioning variables until there is only one conditioning variable left. 
The reason is that each of the indices $a_e|D_e$ and $b_e|D_e$ in \eqref{eq:xcdfrecur} is itself of the form $a_f|D_f \cup b_f$ or $b_f|D_f \cup a_f$ for an edge $f \in e$ in $\edges_{j-1}$, i.e., one level lower than $e$.

\begin{definition}[X-vine tail copula density]
\label{def:xvine}
	A $d$-variate tail copula density $\xdf$ is an \emph{X-vine} along a regular vine sequence $\Vine$ if for each edge $e \in \edges_2 \cup \cdots \cup \edges_{d-1}$, the pair copula densities $c_{a_e,b_e;\cndg_e}(\point,\point;\bx_{\cndg_e})$ do not depend on the value of $\bx_{\cndg_e} \in (0, \infty)^{\cndg_e}$.
\end{definition}

\begin{example}[Trivariate case]\label{ex:tri}
By Lemma~\ref{lem:homo}, a trivariate tail copula density $\xdf$ is always an X-vine, and this along any of the three possible regular vine sequences on $\cbr{1,2,3}$. \chng{For the vine determined by $\edges_1 = \cbr{\cbr{1,2},\cbr{2,3}}$, for instance, we have}
\begin{align}
\label{eq:decomp3}
	\xdf(x_1,x_2,x_3) 
	&= \xdf_{12}(x_1,x_2) \, \xdf_{23}(x_2,x_3) \cdot 
	c_{13;2} \rbr{\xcdf_{1|2}(x_1|x_2), \, \xcdf_{3|2}(x_3|x_2)}, \\
	\nonumber
	\text{where} \quad 
	c_{13;2}(u_1, u_3) &=
	\frac{\xdf(x_1,1,x_3)}{\xdf_{12}(x_1,1) \, \xdf_{23}(1,x_3)}
	\quad \text{with} \quad
	x_j = \xcdf_{j|2}^{-1}(u_j|1), \quad j \in \cbr{1,3}.
\end{align}
The function $\xdf$ is thus completely specified by the two bivariate tail copula densities $\xdf_{12}$ and $\xdf_{23}$ and one bivariate copula density $c_{13;2}$. \chng{The form~\eqref{eq:decomp3} was already discovered for tail copula densities of D-vine copulas in \cite{li2013extremal}.}
\end{example}

By Propositions~\ref{prop:scdiri} and~\ref{prop:hr}, the scaled extremal Dirichlet model (including the logistic, negative logistic and extremal Dirichlet models) and the Hüsler--Reiss family have conditional copula densities $c_{I;J}(\point;\bx_J)$ that always satisfy the simplifying assumption. As a consequence, they are examples of X-vine tail copula densities too, and this along any regular vine sequence.

\begin{definition}[X-vine specification]
\label{def:xvinespec}
The triplet $\rbr{\Vine, \xdfset, \cpdfset}$ is an \emph{X-vine specification} on $d$ elements ($d \ge 3$) if:
\begin{compactenum}[1.]
\item $\Vine = (\tree_{j})_{j=1}^{d-1}$ with $\tree_j = (\nodes_j, \edges_j)$ is a regular vine sequence on $\dset$;
\item $\xdfset = \cbr{\xdf_{a_e,b_e} : \; e = \cbr{a_e,b_e} \in \edges_1}$ is a family of bivariate tail copula densities;
\item $\cpdfset = \cbr{c_{a_e,b_e;\cndg_e} : \; e = (a_e,b_e;\cndg_e) \in \bigcup_{j \ge 2} \edges_j}$ is a family of bivariate copula densities.
\end{compactenum}
\end{definition}

Fig.~\ref{fig:5d_Xvine_a} in Section~\ref{sec:simulation-study} shows an example of an X-vine specification involving a mix of parametric models for the bivariate (tail) copula densities.

\begin{theorem}[X-vine tail copula density construction]
\label{thm:xvineconstruct}
Let $\rbr{\Vine, \xdfset, \cpdfset}$ be an X-vine specification on $d \ge 3$ elements.
Then the function $\xdf$ defined by 
\begin{equation}
\label{eq:xvine}
	\xdf(\bx) = \prod_{e \in \edges_1} \xdf_{a_e,b_e}(x_{a_e},x_{b_e}) \cdot
\prod_{j=2}^{d-1} \prod_{e \in \edges_j}
c_{a_e,b_e;\cndg_e} \rbr{ 
	\xcdf_{a_e|\cndg_e}(x_{a_e}|\bx_{\cndg_e}), \xcdf_{b_e|\cndg_e}(x_{b_e}|\bx_{\cndg_e})
}
\end{equation}
with the functions $\xcdf_{\point|\point}$ defined recursively by
\begin{equation} 
\label{eq:xvine:recur}
\left.
\begin{aligned}
		\xcdf_{a_e|\cndg_e \cup b_e}(x_{a_e}| \bx_{\cndg_e \cup b_e})
		&=
		C_{a_e|b_e;\cndg_e} \rbr{
			\xcdf_{a_e|\cndg_e}(x_{a_e}|\bx_{\cndg_e})
			\mid
			\xcdf_{b_e|\cndg_e}(x_{b_e}|\bx_{\cndg_e})
		} \\
		\xcdf_{b_e|\cndg_e \cup a_e}(x_{b_e}| \bx_{\cndg_e \cup a_e})
		&=
		C_{b_e|a_e;\cndg_e} \rbr{
			\xcdf_{b_e|\cndg_e}(x_{b_e}|\bx_{\cndg_e}) 
			\mid
			\xcdf_{a_e|\cndg_e}(x_{a_e}|\bx_{\cndg_e})
		}
\end{aligned}
\right\}
\end{equation}
is a $d$-variate tail copula density. \chng{For $e \in \edges_1$, the bivariate margin of $\xdf$ is equal to $\xdf_e \in \xdfset$, while for $e = (a_e,b_e;\cndg_e) \in \edges_2 \cup \cdots \cup \edges_{d-1}$, the pair copula density $c_{a_e,b_e;\cndg_e}(\point,\point;\bx_{\cndg_e})$ in \eqref{eq:caebeDex} is equal to $c_{a_e,b_e;\cndg_e} \in \cpdfset$.} In particular, $\xdf$ is an X-vine.
\end{theorem}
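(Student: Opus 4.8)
\textbf{Proof proposal for Theorem~\ref{thm:xvineconstruct}.}

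The plan is to verify the three defining properties of a tail copula density for the function $\xdf$ in \eqref{eq:xvine}: nonnegativity is immediate since all building blocks are nonnegative; the homogeneity property \eqref{eq:lambdahomo} and the marginal normalization \eqref{eq:margin:pdf} are the substantive parts. Throughout I will work by induction on $d$, or more precisely along the vine tree sequence, using the sub-vine construction $\Vine_f$ of Remark~\ref{rem:telescope-f} so that the inductive hypothesis can be applied to the complete union $\cunn_f$ of any edge $f$.

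First I would prove the self-consistency of the notation by induction along the trees: I claim that for every $j$ and every edge $f \in \edges_j$, the function $\xdf_{\cunn_f}$ obtained by restricting the product \eqref{eq:xvine} to those factors $e$ with $\cunn_e \subseteq \cunn_f$ (as in \eqref{eq:xdfdecomp-J}) is a $|\cunn_f|$-variate tail copula density, and that the recursively defined $\xcdf_{a|\cndg}(\cdot|\bx_\cndg)$ and $\xcdf_{b|\cndg}(\cdot|\bx_\cndg)$ appearing in \eqref{eq:xvine:recur} are genuinely the conditional cdfs derived from $\xdf_{\cunn_f}$ via Proposition~\ref{lem:expdensity}. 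The base case $j=1$ is the statement that each $\xdf_{a_e,b_e}$ is a bivariate tail copula density, which holds by assumption; here $\xcdf_{a_e|b_e}(\cdot|x_{b_e})$ is obtained from $\xdf_{a_e,b_e}$ exactly as in Proposition~\ref{lem:expdensity}, and its scale invariance in the conditioning argument is Lemma~\ref{lem:homo}. For the inductive step, take $f = \{a,b\} \in \edges_j$ with $j \ge 2$; by the proximity condition $a \cap b = \{k\}$ for some $k \in \nodes_{j-1}$, so $\cunn_a, \cunn_b$ are complete unions of edges in $\edges_{j-1}$ to which the inductive hypothesis applies, and $\cunn_f = \cunn_a \cup \cunn_b$ with $\cunn_a \cap \cunn_b = \cndg_f$. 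The factor of \eqref{eq:xvine} indexed by $f$ is $c_{a_f,b_f;\cndg_f}$ evaluated at the conditional cdfs; the key identity to establish is that
\[
	\xdf_{\cunn_f}(\bx_{\cunn_f})
	= \frac{\xdf_{\cunn_a}(\bx_{\cunn_a}) \cdot \xdf_{\cunn_b}(\bx_{\cunn_b})}{\xdf_{\cndg_f}(\bx_{\cndg_f})}
	\cdot c_{a_f,b_f;\cndg_f}\rbr{\xcdf_{a_f|\cndg_f}(x_{a_f}|\bx_{\cndg_f}), \xcdf_{b_f|\cndg_f}(x_{b_f}|\bx_{\cndg_f}); \bx_{\cndg_f}},
\]
which combines the two inductive decompositions of $\xdf_{\cunn_a}$ and $\xdf_{\cunn_b}$ with the vine telescoping product of Lemma~\ref{lem:Rvine-a} applied to the sub-vine $\Vine_f$; here one has to check that the factors of $\xdf_{\cunn_a}$ and $\xdf_{\cunn_b}$ agree on their overlap (the sub-vine on $\cndg_f$), which is where the combinatorial identities of Definition~\ref{def:ucd} --- in particular $\cndd_{e,a} = \cndd_e \cap \cndd_a$, $\cndg_a \cup \cndg_b \subseteq \cndg_e$, and the fact that $e$ restricted to $\cunn_a$ is an edge of $\Vine_a$ --- are used. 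Given this identity, the marginal normalization \eqref{eq:margin:pdf} for $\xdf_{\cunn_f}$ follows by integrating out one coordinate at a time using Fubini: integrating out $x_{a_f}$ against the density $\xdf_{a_f|\cndg_f}(\cdot|\bx_{\cndg_f})$ hidden inside the change of variables $u_{a_f} = \xcdf_{a_f|\cndg_f}(x_{a_f}|\bx_{\cndg_f})$ turns $c_{a_f,b_f;\cndg_f}$ into $1$ (a copula density integrates to a uniform marginal), leaving $\xdf_{\cunn_f \setminus a_f} = \xdf_{\cunn_b}$ up to renaming, and then the inductive hypothesis finishes it; one also has to verify that $\xcdf_{a_f|\cndg_f \cup b_f}$ as recursively defined in \eqref{eq:xvine:recur} matches the conditional cdf of $\xdf_{\cunn_f}$, which is precisely the content of Theorem~\ref{thm:recuni}'s recursion \eqref{eq:xcdfrecur} read in the forward direction and which drops out of the same Fubini computation.

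Second, for homogeneity I would show that $\xdf$ in \eqref{eq:xvine} satisfies $\xdf(s\bx) = s^{1-d}\xdf(\bx)$. The bivariate tail copula densities in the first tree are each homogeneous of order $-1$, contributing $s^{-(d-1)}$ in total since $|\edges_1| = d-1$; every copula-density factor must therefore be shown to be invariant under $\bx \mapsto s\bx$. This reduces, via the recursion, to two facts: the conditional cdfs satisfy $\xcdf_{a|\cndg}(sx_a|s\bx_\cndg) = \xcdf_{a|\cndg}(x_a|\bx_\cndg)$ (scale equivariance, Lemma~\ref{lem:homo}, propagated through \eqref{eq:xvine:recur} by induction on the tree level, using that $C_{a|b;\cndg}$ as a partial derivative of a copula has no scale argument of its own beyond $\bx_\cndg$), and the copula densities $c_{a_e,b_e;\cndg_e}(\cdot;\bx_{\cndg_e})$ are themselves scale invariant in $\bx_{\cndg_e}$ --- but for $j = 2$ this is automatic by Lemma~\ref{lem:homo} (singleton conditioning set), and for $j \ge 3$ there is \emph{nothing to prove in the present theorem} because the hypothesis fixes $c_{a_e,b_e;\cndg_e}$ to be a single copula density not depending on $\bx_{\cndg_e}$ at all. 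Thus the homogeneity argument is actually short once the notation-consistency induction is in place. Finally, the statement ``its multivariate margins are the ones implied by the notation'' and ``$\xdf$ is an X-vine'' follow by applying the inductive identity above to the top edge $f^* \in \edges_{d-1}$ (for which $\cunn_{f^*} = \dset$) and noting that, by construction, the pair copula attached to each $e \in \edges_j$ with $j\ge2$ is the given $c_{a_e,b_e;\cndg_e}$, which is $\bx_{\cndg_e}$-free, so Definition~\ref{def:xvine} is met; that the marginal tail copula density on $\cunn_f$ equals the corresponding sub-product is the $J = \cunn_f$ case already proven.

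The main obstacle, and the step deserving the most care, is the gluing identity in the inductive step: verifying that the two decompositions $\xdf_{\cunn_a}$ and $\xdf_{\cunn_b}$ are \emph{compatible} on their shared sub-structure so that their product divided by $\xdf_{\cndg_f}$ telescopes correctly into $\xdf_{\cunn_f}$. This is a bookkeeping argument over the edge labels $(a_e,b_e;\cndg_e)$ and relies essentially on Lemma~\ref{lem:Rvine-a} (the telescoping product) together with the structural facts in Definition~\ref{def:ucd} that guarantee that the edge sets $\edges_i \cap \{e : \cunn_e \subseteq \cunn_a\}$ and $\edges_i \cap \{e : \cunn_e \subseteq \cunn_b\}$ overlap exactly in $\{e : \cunn_e \subseteq \cndg_f\}$ and that the labels of a shared edge are the same whether viewed inside $\Vine_a$, $\Vine_b$, or $\Vine_f$. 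Everything else --- positivity, the Fubini integration that collapses copula densities to $1$, the propagation of scale equivariance --- is routine once this combinatorial compatibility is nailed down.
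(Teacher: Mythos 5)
Your proposal is correct and follows essentially the same route as the paper: induction along the vine, homogeneity via the order-$(-1)$ bivariate tail copula densities plus scale equivariance of the recursively defined conditional cdfs, and the marginal constraint by peeling off one conditioned variable of the deepest edge and integrating it out to recover the sub-vine density on the complementary complete union. Your ``gluing identity'' $\xdf_{\cunn_f} = \xdf_{\cunn_a}\xdf_{\cunn_b}\,\xdf_{\cndg_f}^{-1}\,c_{a_f,b_f;\cndg_f}$ is exactly the identity the paper obtains by isolating, via the unique-edge property (Lemma~\ref{lem:uniqueness}), the chain of factors containing $a_e$ and recognising their product as $\partial_{x_{a_e}} C_{a_e|b_e;\cndg_e}$, so the two write-ups differ only in bookkeeping.
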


\chng{In Example~\ref{ex:tri} with $d = 3$, the vine $\Vine = (\tree_1,\tree_2)$ is determined by $\edges_1 = \cbr{\cbr{1,2},\cbr{2,3}}$, while $\xdfset = \cbr{r_{12},r_{23}}$ and $\cpdfset = \cbr{c_{13;2}}$, and \eqref{eq:xvine} reduces to \eqref{eq:decomp3}.}

\subsection{\chng{X-vines as limits of regular vine copula densities}}

\chng{A natural question is whether X-vine tail copula densities arise as the lower tail dependence limits of regular vine copula densities, as introduced in \cite{bedford2002vines}. Below, we show that this is indeed the case, provided the pair copula densities at the edges of the first tree have the corresponding bivariate tail copula densities as lower tail dependence limits. In the passage to the limit, the regular vine sequence is preserved and so are the pair copulas at all trees starting from the second one.} 

\chng{Let $\Vine = \rbr{\tree_j}_{j=1}^{d-1}$ with $\tree_j = (\nodes_j, \edges_j)$ be a regular vine sequence on $d$ elements, for $d \ge 3$. For every edge $e = (a_e,b_e;\cndg_e) \in \bigcup_{j=1}^{d-1} \edges_j$, let $c_{a_e,b_e;\cndg_e}$ be a bivariate copula density, with copula $C_{a_e,b_e;\cndg_e}$ and conditional distribution functions $C_{a_e|b_e;\cndg_e}$ and $C_{b_e|a_e;\cndg_e}$. For $e \in \edges_1$, the conditioning set $\cndg_e$ is empty and we simply write $c_{a_e,b_e}$ and so on. Then there is a $d$-variate regular vine copula density $c$ given by
\begin{equation}
\label{eq:PCC}
	c(\bu)
	= \prod_{e \in \edges_1} c_{a_e,b_e}(u_{a_e},u_{b_e})
	\cdot
	\prod_{j=2}^{d-1} \prod_{e \in \edges_j}
	c_{a_e,b_e;\cndg_e} \rbr{
		C_{a_e|\cndg_e} \rbr{u_{a_e}|\bu_{\cndg_e}}, \,
		C_{b_e|\cndg_e} \rbr{u_{b_e}|\bu_{\cndg_e}}
	},
\end{equation}
where, for a random vector $\bU$ with density $c$, the conditional distribution function of $(U_{a_e},U_{b_e})$ given $\bU_{\cndg_e} = \bu_{\cndg_e}$ is
\begin{equation}
\label{eq:PCC:recur}
C_{a_e,b_e|\cndg_e}\rbr{u_{a_e},u_{b_e}|\bu_{\cndg_e}}
	=
	C_{a_e,b_e;\cndg_e} \rbr{
		C_{a_e|\cndg_e} \rbr{u_{a_e}|\bu_{\cndg_e}}, \,
		C_{b_e|\cndg_e} \rbr{u_{b_e}|\bu_{\cndg_e}}
	}.
\end{equation}
\begin{assumption}
	\label{ass:PCClambdae}
	Let $c$ be a $d$-variate regular vine copula density as in~\eqref{eq:PCC}, with $d \ge 3$.
	\begin{compactenum}[(i)]
	\item For every edge $e \in \edges_1$, there exists a bivariate tail copula density $r_{a_e,b_e}$ such that
$\lim_{t \searrow 0} t \cdot c_{a_e,b_e}(tx, ty) 
	= \xdf_{a_e,b_e}(x, y)$, for $(x, y) \in (0, \infty)^2$.
	\item
	For every edge $e \in \bigcup_{j=2}^{d-1} \edges_j$, the pair copula density $c_{a_e,b_e;\cndg_e}$ is continuous.
	\end{compactenum}
\end{assumption}

\begin{proposition}
	\label{prop:c2l}
	Under Assumption~\ref{ass:PCClambdae}, we have $\lim_{t \searrow 0} t^{d-1} \cdot c(t\bx)
	= \xdf(\bx)$ for $\bx \in (0, \infty)^d$,
	where the X-vine tail copula density $\xdf$ is generated by the triple $(\Vine, \xdfset, \cpdfset)$ as in Theorem~\ref{thm:xvineconstruct} with 
	\begin{compactitem}
	\item $\Vine$ the same regular vine sequence as in \eqref{eq:PCC},
	\item $\xdfset = \cbr{ r_{a_e,b_e} : e \in \edges_1 }$ for $r_{a_e,b_e}$ in Assumption~\ref{ass:PCClambdae}(i), and
	\item $\cpdfset$ the same bivariate copula densities for edges $e \in \edges_j$ with $j \ge 2$ as in \eqref{eq:PCC}.
	\end{compactitem}
	In particular, the copula $C$ of $c$ has tail copula $\xcdf$ with tail copula density $\xdf$.
\end{proposition}

Proposition~\ref{prop:c2l} is foreshadowed by Theorem~3.4 and Example~3.5 in \cite{li2013extremal}, who consider D-vine copulas and who require stronger convergence properties.}

\subsection{\chng{Truncated X-vines}}

If the bivariate copula density $c_{a_e,b_e;\cndg_e}$ associated with an edge $e \in E_2 \cup \cdots \cup E_{d-1}$ is equal to the independence one, $c_{a_e,b_e;\cndg_e} \equiv 1$, the corresponding factor drops out in \eqref{eq:xdfdecomp} and the recursive formulas in \eqref{eq:xcdfrecur} and~\eqref{eq:xvine:recur} simplify to
$\xcdf_{a_e|\cndg_e \cup b_e} \rbr{x_{a_e}|\bx_{\cndg_e \cup b_e}} 
	= \xcdf_{a_e|\cndg_e} \rbr{x_{a_e}|\bx_{\cndg_e}}$ and 
	$\xcdf_{b_e|\cndg_e \cup a_e} \rbr{x_{b_e}|\bx_{\cndg_e \cup a_e}} 
	= \xcdf_{b_e|\cndg_e} \rbr{x_{b_e}|\bx_{\cndg_e}}$.
\emph{Sparse} X-vine specifications arise when $c_{a_e,b_e;\cndg_e} \equiv 1$ for many edges $e \in \edges_2 \cup \cdots \cup \edges_{d-1}$. Model selection of sparse vine copulas in high dimensions has been investigated in \citet{muller2019selection} and \citet{nagler2019model}. The case where all pair copulas are equal to the independence one for all edges starting from a given tree is of practical importance \citep{brechmann2012truncated,brechmann2015truncation}.

\begin{definition}
\label{def:truncvine}
A \emph{truncated regular vine tree sequence} $\Vine = (\tree_1,\ldots,\tree_q)$ on $d \ge 3$ elements with truncation level $q \in \cbr{1,\ldots,d-2}$ is an ordered set of trees for which there exists a regular vine tree sequence $\Vine' = (\tree_1',\ldots,\tree_{d-1}')$ on $d$ elements such that $\tree_j = \tree_j'$ for all $j \in \cbr{1,\ldots,q}$.
\end{definition}

\begin{definition}
\label{def:truncXvine}
The triplet $\rbr{\Vine, \xdfset, \cpdfset}$ is a \emph{truncated X-vine specification} on $d$ elements ($d \ge 3$) with truncation level $q \in \cbr{1,\ldots,d-2}$ if:
\begin{compactenum}[1.]
\item $\Vine = (\tree_{j})_{j=1}^{q}$ is a truncated regular vine tree sequence on $\dset$;
\item $\xdfset = \cbr{\xdf_{a,b} : \; e = \cbr{a,b} \in \edges_1}$ is a family of bivariate tail copula densities;
\item $\cpdfset = \{c_{a_e,b_e;\cndg_e} : \; e = (a_e,b_e;\cndg_e) \in \bigcup_{j = 2}^q \edges_j\}$ is a family of bivariate copula densities; in case $q = 1$, we have $\cpdfset = \emptyset$.
\end{compactenum}
\end{definition}

By definition, any truncated X-vine specification can be completed to a full X-vine specification by completing the truncated regular vine tree sequence $\Vine$ to a full one $\Vine'$ as in Definition~\ref{def:truncvine} and by setting $c_{a_e,b_e;\cndg_e} \equiv 1$ for all $e \in \edges_{q+1}' \cup \cdots \cup \edges_{d-1}'$. The resulting X-vine copula density $\xdf$ does not depend on the way in which the truncated regular vine tree sequence $\Vine$ is completed, since the factorisation in \eqref{eq:xdfdecomp} simplifies anyway to
\begin{equation}
\label{eq:xdftrunc}
	\xdf(\bx) = \prod_{e \in \edges_1} \xdf_{a_e,b_e}(x_{a_e},x_{b_e}) \cdot
	\prod_{j=2}^{q} \prod_{e \in \edges_j}
	c_{a_e,b_e;\cndg_e} \rbr{ 
		\xcdf_{a_e|\cndg_e}(x_{a_e}|\bx_{\cndg_e}), \xcdf_{b_e|\cndg_e}(x_{b_e}|\bx_{\cndg_e})
	},
\end{equation}
\chng{The right-hand side of \eqref{eq:xdftrunc} involves $\sum_{j=2}^q (d-j)$ bivariate copula densities.}
The truncation level $q$ allows to tune the trade-off between sparsity and flexibility. 
\chng{If $q = 1$, the second product is empty and the model is a Markov tree as in \cite{engelke2020graphical}, \cite{segers2020one} and \cite{engelke2022structure}}. Increasing $q$ and adding trees (`layers') yields more complex dependence models. Truncated X-vine specifications will be shown at work in Section~\ref{sec:casestudy}.

\section{Sampling from X-vine Pareto distributions}
\label{sec:simulation}

\subsection{Inverted multivariate Pareto distributions}

In Section~\ref{sec:background}, we introduced the multivariate Pareto distribution~\eqref{eq:mgpd} as a limit model for high threshold excesses in \eqref{eq:Vt2Y}. In the context of tail copula measures, it is more convenient to work with their reciprocals.

\begin{definition}
\label{def:invmultpar}
The $d$-variate random vector $\IMP$ has an \emph{inverted multivariate Pareto distribution} if there exists a $d$-variate tail copula measure $\xcdf$ such that
$\prob(\bZ \in B) 
	= \xcdf(B \cap \LL)/\xcdf(\LL)$,
for Borel sets $B$, with $\LL = \cbr{\bx \in (0,\infty]^d : \min \bx < 1}$; equivalently, if there exists a multivariate Pareto random vector $\bY$ such that $\IMP = 1/\bY$.
\end{definition}

If $\xcdf$ is concentrated on $(0, \infty)^d$ and has density $\xdf$, then $\IMP$ has probability density $\bx \mapsto \xdf(\bx) \1(\bx \in \LL) / \xcdf(\LL)$.
In general, for $j \in \dset$, the conditional distribution of $\Imp_j$ given $\Imp_j < 1$ is uniform on $(0, 1)$; this follows from the marginal constraint~\eqref{eq:margin}. For Borel sets $B$, we have
\begin{equation}
\label{eq:Zj}
	\prob(\bZ \in B \mid Z_j < 1) 
	= \xcdf \rbr{B \cap \LL^{(j)}}, 
\end{equation}
where the set $\LL^{(j)} = \cbr{\bx \in (0, \infty]^d : x_j < 1}$ has $\xcdf$-measure one.
Hence, on $\LL^{(j)}$, the probability density $\bx \mapsto \xdf(\bx) \1 \{x_j < 1\}$ of $\rbr{\IMP \mid \IMPm_j < 1}$ coincides with the tail copula density $\xdf$.

For the random vector $\bU$ in Definition~\ref{def:expmeas}, the inverted multivariate Pareto vector $\IMP$ is the weak limit in
\begin{equation}
\label{eq:Ut2Z}
	\rbr{t \bU \mid \min \bU < 1/t}
	\dto \IMP, \qquad t \to \infty.
\end{equation}
For non-empty $J \subseteq \dset$, we have
$\rbr{t \bU_J \mid \min \bU_J < 1/t}
	\dto \IMP_{|J} 
	\eqd \rbr{\IMP_J \mid \min \IMP_J < 1}$, as $t \rightarrow \infty$.
The conditional marginal $\IMP_{|J}$ has a $|J|$-variate inverted multivariate Pareto distribution associated with the marginal tail copula measure $\xcdf_{J}$ in \eqref{eq:RJ}.
In contrast, $\IMP_J$ does \emph{not} necessarily have an inverted multivariate Pareto distribution, as $\min \IMP_J < 1$ is not guaranteed.

\subsection{\chng{Sampling from (inverted) multivariate Pareto distributions}}

Through equations~\eqref{eq:Vt2Y} and~\eqref{eq:Ut2Z}, the (inverted) multivariate Pareto distribution serves as a model for a random vector conditionally on the event that at least one variable takes a value far in the tail of its respective marginal distribution. 
The L-shaped support of the (inverted) multivariate Pareto distribution makes direct random sampling from it a little awkward. 
Lemma~2 in \citet{engelke2020graphical} provides an ingenious algorithm that reduces the task of sampling a multivariate Pareto random vector $\bY$ to sampling the conditional distributions $\rbr{ \bY \mid Y_j > 1 }$ for every $j \in \dset$. Below, we study the equivalent problem of simulating from the conditional distribution of $\rbr{ \IMP \mid \Imp_j < 1 }$ for X-vine tail copula densities.

We will do so by inverting the Rosenblatt transformation \citep{rosenblatt1952remarks}, applying conditional quantile functions with an increasing number of conditioning variables successively to independent uniform random variables.
A judicious choice of the ordering of the variables permits to compute the required conditional quantile functions recursively in terms of the bivariate ingredients of the X-vine specification. This \chng{\emph{sampling order} \citep{cooke2015sampling} is encoded by the permutation constructed in the next lemma.}

\begin{lemma}
	\label{lem:perm}
	Let $\Vine$ be a regular vine sequence on $d$ elements. For all $j \in \dset$, there exists a permutation $\sigma_j$ of $\cbr{1,\ldots,d}$ such that (i) $\sigma_j(1) = j$ and (ii) there exist edges $e_{j,1} \in \edges_{1}, \ldots, e_{j,d-1} \in \edges_{d-1}$ such that $\sigma_j(k) \in \cndd_{e_{k-1}}$ and $\cbr{\sigma_j(i): i = 1,\ldots,k} = \cunn_{e_{j,k-1}}$ for all $k \in \cbr{2,\ldots,d}$.
\end{lemma}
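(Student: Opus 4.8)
The plan is to build the permutation $\sigma_j$ by descending through the vine sequence, peeling off one edge per tree, starting from the top tree $\tree_{d-1}$ and working down to $\tree_1$. Here is the key observation that makes this work. Fix $j \in \dset$. The unique edge $e_{j,d-1} \in \edges_{d-1}$ has complete union $\cunn_{e_{j,d-1}} = \dset$, so in particular $j \in \cunn_{e_{j,d-1}}$. Writing $e_{j,d-1} = \{a,b\}$ with $a,b \in \edges_{d-2}$, exactly one of $\cunn_a$, $\cunn_b$ contains $j$ (they partition $\cndd_{e_{j,d-1}}$ with a shared conditioning set $\cndg_{e_{j,d-1}}$; since $\cndd_{e_{j,d-1},a}$ and $\cndd_{e_{j,d-1},b}$ are singletons and $\cndg_{e_{j,d-1}}$ has $d-2$ elements, $j$ lies either in $\cndg_{e_{j,d-1}} \subseteq \cunn_a \cap \cunn_b$ or in exactly one of the two singletons). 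In all cases $j$ belongs to $\cunn_a$ or to $\cunn_b$; pick that one, say $\cunn_a$, and set $e_{j,d-2} := a \in \edges_{d-2}$, a node of $\tree_{d-1}$ that is an edge of $\tree_{d-2}$. We have $j \in \cunn_{e_{j,d-2}}$ and $|\cunn_{e_{j,d-2}}| = d-1$. Iterating, at step $k$ we have an edge $e_{j,k} \in \edges_k$ with $j \in \cunn_{e_{j,k}}$ and $|\cunn_{e_{j,k}}| = k+1$, and we choose $e_{j,k-1}$ to be whichever of its two endpoint-edges in $\edges_{k-1}$ still contains $j$ in its complete union. This produces a nested chain
\[
	\{j\} = \cunn_{e_{j,0}} \subsetneq \cunn_{e_{j,1}} \subsetneq \cdots \subsetneq \cunn_{e_{j,d-1}} = \dset,
\]
where I abuse notation by writing $\cunn_{e_{j,0}} = \{j\}$ for the degenerate bottom level; each inclusion adds exactly one new index.

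Given this chain, define $\sigma_j(1) = j$ and, for $k \in \{2,\ldots,d\}$, let $\sigma_j(k)$ be the unique element of $\cunn_{e_{j,k-1}} \setminus \cunn_{e_{j,k-2}}$ (with the convention $\cunn_{e_{j,0}} = \{j\}$). Since each step of the chain grows the complete union by one element and the chain starts at $\{j\}$ and ends at $\dset$, the map $\sigma_j$ is a bijection of $\{1,\ldots,d\}$: property (i) is immediate, and $\cbr{\sigma_j(i): i = 1,\ldots,k} = \cunn_{e_{j,k-1}}$ holds by telescoping the inclusions, giving the second half of property (ii). It remains to check the first half of (ii), namely $\sigma_j(k) \in \cndd_{e_{j,k-1}}$. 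This is where I would be slightly careful: by construction $\sigma_j(k)$ is the single index in $\cunn_{e_{j,k-1}}$ that is not in $\cunn_{e_{j,k-2}}$; and $e_{j,k-2}$ is one of the two edges in $\edges_{k-2}$ forming the node-pair $e_{j,k-1} = \{e_{j,k-2}, e'\}$ in $\tree_{k-1}$. By Definition~\ref{def:ucd}, $\cunn_{e_{j,k-1}} = \cunn_{e_{j,k-2}} \cup \cunn_{e'}$ and $\cndg_{e_{j,k-1}} = \cunn_{e_{j,k-2}} \cap \cunn_{e'}$; hence $\cunn_{e_{j,k-1}} \setminus \cunn_{e_{j,k-2}} = \cunn_{e'} \setminus \cndg_{e_{j,k-1}} = \cndd_{e_{j,k-1},e'}$, which is exactly one of the two singletons making up the conditioned set $\cndd_{e_{j,k-1}}$. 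Therefore $\sigma_j(k) \in \cndd_{e_{j,k-1}}$, as required. (The index shift between the lemma's statement, which writes $\sigma_j(k) \in \cndd_{e_{k-1}}$, and my chain labelling is purely cosmetic; I would align notation so that the edge indexed $k-1$ in the statement is my $e_{j,k-1}$.)

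The main obstacle is not any single computation but making sure the ``descent'' is well-defined at every level, i.e.\ that at each step at least one of the two endpoint-edges of $e_{j,k}$ has $j$ in its complete union, and that the chain of complete unions genuinely grows by exactly one index at each step rather than stagnating. The first point follows from $\cunn_{e_{j,k}} = \cunn_a \cup \cunn_b$ together with $j \in \cunn_{e_{j,k}}$; the second from the structural fact, recorded just after Definition~\ref{def:ucd}, that $|\cunn_e| = j+1$ for $e \in \edges_j$, so that $|\cunn_{e_{j,k}}| = k+1$ decreases by exactly one as $k$ decreases. I would also note that the top edge $e_{j,d-1}$ is unique because $\tree_{d-1}$ has a single edge, so the chain and hence $\sigma_j$ are canonically determined by $j$ and $\Vine$; this is convenient for the recursive quantile computations in Section~\ref{sec:simulation} that the lemma is designed to support. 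A short induction writeup on $k$ descending from $d-1$ to $1$, invoking Definition~\ref{def:ucd} and the cardinality bookkeeping, finishes the proof.
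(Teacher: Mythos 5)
Your proof is correct, but it takes a genuinely different route from the paper's. The paper proves the lemma by induction on $d$: it takes the single edge $e=\cbr{f,g}\in\edges_{d-1}$, notes that $\cunn_f=\dset\setminus\cbr{b_e}$ and $\cunn_g=\dset\setminus\cbr{a_e}$, applies the induction hypothesis to the induced sub-vine $\Vine_f$ (or $\Vine_g$ if $j=b_e$) to get a bijection onto $d-1$ elements, and appends the missing node in position $d$. You instead build the permutation directly by a top-down descent: starting from the unique edge of $\tree_{d-1}$ and repeatedly choosing an endpoint-edge whose complete union still contains $j$, you obtain a nested chain $\cbr{j}\subsetneq\cunn_{e_{j,1}}\subsetneq\cdots\subsetneq\cunn_{e_{j,d-1}}=\dset$ growing by exactly one index per level, and read off $\sigma_j(k)$ as the new index; the identity $\cunn_{e_{j,k-1}}\setminus\cunn_{e_{j,k-2}}=\cunn_{e'}\setminus\cndg_{e_{j,k-1}}=\cndd_{e_{j,k-1},e'}$ correctly delivers $\sigma_j(k)\in\cndd_{e_{j,k-1}}$. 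Your argument is in fact a proof of correctness of the constructive algorithm the paper only states informally in Appendix~\ref{app:example}; it is more explicit and avoids the sub-vine induction, whereas the paper's induction is shorter because it reuses machinery (induced sub-vines, Lemma~\ref{lem:uniqueness}) already needed elsewhere. One small inaccuracy in a side remark: the chain is \emph{not} canonically determined by $j$ and $\Vine$, because whenever $j\in\cndg_{e_{j,k}}$ both endpoint-edges contain $j$ in their complete unions and a genuine choice arises; the paper itself notes that several permutations $\sigma_j$ may satisfy the requirements (e.g.\ $(4,5,2,3,1)$ and $(4,2,5,3,1)$ in the five-dimensional example). This does not affect the existence claim you are proving.
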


Let $\xdf$ be an X-vine tail copula density as in Definition~\ref{def:xvine}. For each edge $e \in \edges_2 \cup \ldots \cup \edges_{d-1}$ and for fixed $0 < u_{b_e} < 1$, let $u_{a_e} \mapsto C_{a_e|b_e;\cndg_e}^{-1} \rbr{u_{a_e}|u_{b_e}}$ be the inverse of the distribution function $u_{a_e} \mapsto C_{a_e|b_e;\cndg_e} \rbr{u_{a_e}|u_{b_e}}$, with $C_{a_e|b_e;\cndg_e}$ defined in \eqref{eq:Ccond}. Similarly for $C_{b_e|a_e;\cndg_e}^{-1}$. Recall from Proposition~\ref{lem:expdensity} that for $i \in \dset$, non-empty $J \subseteq \dset \setminus \cbr{i}$, and $\bx_J \in (0, \infty)^d$ such that $\xdf_J(\bx_J) > 0$, the quantile function $u_i \mapsto \xqdf_{i|J}(u_i| \bx_J)$ is the inverse of the distribution function $x_i \mapsto \xcdf_{i|J}(x_i|\bx_J)$.
Inverting \eqref{eq:xcdfrecur} yields the recursive relations
\begin{equation}
\label{eq:xqdfrecur}
\begin{split}
	\xqdf_{a_e|\cndg_e \cup b_e}(u_{a_e}|\bx_{\cndg_e \cup b_e})
	&= \xqdf_{a_e|\cndg_e} \rbr{ C_{a_e|b_e;\cndg_e}^{-1} \rbr{u_{a_e} \mid \xcdf_{b_e|\cndg_e}(x_{b_e}| \bx_{D_e})} \mid \bx_{\cndg_e} }, \\
	\xqdf_{b_e|\cndg_e \cup a_e}(u_{b_e}|\bx_{\cndg_e \cup a_e})
	&= \xqdf_{b_e|\cndg_e} \rbr{ C_{b_e|a_e;\cndg_e}^{-1} \rbr{u_{b_e} \mid \xcdf_{a_e|\cndg_e}(x_{a_e}| \bx_{D_e})} \mid \bx_{\cndg_e} }.
\end{split}
\end{equation}

Let $W_1,\ldots,W_d$ be independent random variables, all uniformly distributed on $(0, 1)$. For $j \in \dset$, let $\sigma_j$ be a permutation of $\dset$ satisfying the two requirements in Lemma~\ref{lem:perm}. Define a random vector $\IMP^{(j)} = (\Imp_1^{(j)},\ldots,\Imp_d^{(j)})$ recursively as follows:
\begin{equation}
\label{eq:IMPj}
	\Imp_{j}^{(j)} = W_{j} ~\text{and}~
	\Imp_{\sigma_j(k)}^{(j)} = \xqdf_{\sigma_j(k)|\sigma_j(\cbr{1,\ldots,k-1})} \rbr{ W_{\sigma_j(k)} \mid \IMP_{\sigma_j(\cbr{1,\ldots,k-1})}^{(j)}}, 
	\quad k \in \cbr{2,\ldots,d}.
\end{equation}

\begin{proposition}
\label{prop:simu}
	Let $\IMP$ be an inverted multivariate Pareto random vector associated with the X-vine tail copula density $\xdf$. For $j \in \dset$, the distribution of $\IMP^{(j)}$ in \eqref{eq:IMPj} is equal to the one of $\IMP$ conditionally on $\Imp_j < 1$. For every $k \in \cbr{2,\ldots,d}$, the conditional quantile function $\xqdf_{\sigma_j(k)|\sigma_j(\cbr{1,\ldots,k-1})}$ is of the form $\xqdf_{a_e|\cndg_e \cup b_e}$ or $\xqdf_{b_e|\cndg_e \cup a_e}$ for some edge $e = e_{j,k-1} \in \edges_{k-1}$ and can thus be computed recursively via \eqref{eq:xqdfrecur}.
\end{proposition}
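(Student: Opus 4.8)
The plan is to verify the two claims separately: first the distributional identity for $\IMP^{(j)}$, then the recursive structure of the conditional quantile functions. For the distributional claim, the starting point is the observation made just before equation~\eqref{eq:Zj}: on the set $\LL^{(j)} = \{\bx : x_j < 1\}$, which has full $\xcdf$-measure, the conditional density of $(\IMP \mid \Imp_j < 1)$ coincides with the tail copula density $\xdf$ (the indicator $\1\{x_j < 1\}$ only restricts the support). Moreover, by the marginal constraint~\eqref{eq:margin}, the $j$-th coordinate $\Imp_j$ is uniform on $(0,1)$ under this conditional law. Hence I need to show that the recursive construction in~\eqref{eq:IMPj} produces a random vector whose density (restricted to $\{x_j < 1\}$) equals $\xdf$. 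This is precisely the inverse Rosenblatt / conditional-quantile construction: starting from $\Imp_j^{(j)} = W_j \sim \mathrm{Unif}(0,1)$, which has the correct marginal law for coordinate $j$, one successively appends coordinates $\sigma_j(2),\ldots,\sigma_j(d)$ by applying the conditional quantile function $\xqdf_{\sigma_j(k)\mid\sigma_j(\{1,\ldots,k-1\})}$ to a fresh independent uniform. By the standard probability-integral-transform argument, if $\IMP_{\sigma_j(\{1,\ldots,k-1\})}^{(j)}$ already has the correct joint law (namely that of $\IMP_{\sigma_j(\{1,\ldots,k-1\})}$ conditional on $\Imp_j<1$, which by the ``conditional marginal'' discussion around~\eqref{eq:Ut2Z} is governed by the marginal tail copula density of the index set $\sigma_j(\{1,\ldots,k-1\})$), then feeding $W_{\sigma_j(k)}$ through the conditional quantile function of $\sigma_j(k)$ given those coordinates yields the correct $k$-dimensional joint law. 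Chaining these telescoping conditionals from $k=2$ up to $k=d$ gives that $\IMP^{(j)}$ has joint density $\prod_{k=2}^{d} \xdf_{\sigma_j(k)\mid\sigma_j(\{1,\ldots,k-1\})} \cdot (\text{density of } \Imp_j^{(j)})$, and since $\Imp_j^{(j)}$ is uniform on $(0,1)$ its density on $(0,\infty)$ is $\1\{x_j<1\} = \xdf_{\sigma_j(1)}(x_{\sigma_j(1)})$ by~\eqref{eq:margin} applied to the one-dimensional margin; the product telescopes to $\xdf(\bx)\1\{x_j<1\}$, which is exactly the conditional density of $(\IMP\mid\Imp_j<1)$.

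For the second claim, I need the structural property that each conditional quantile function appearing in~\eqref{eq:IMPj} has exactly one conditioned variable being ``new'' relative to the preceding ones, and that this new-versus-old split corresponds to an edge of the vine. This is where Lemma~\ref{lem:perm} does the work: by property~(ii) of that lemma, there is an edge $e_{j,k-1} \in \edges_{k-1}$ with $\sigma_j(k) \in \cndd_{e_{j,k-1}}$ and $\{\sigma_j(1),\ldots,\sigma_j(k)\} = \cunn_{e_{j,k-1}}$, so that $\{\sigma_j(1),\ldots,\sigma_j(k-1)\} = \cunn_{e_{j,k-1}} \setminus \{\sigma_j(k)\}$. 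Writing $e = e_{j,k-1} = (a_e,b_e;\cndg_e)$, the conditioned set is $\cndd_e = \{a_e,b_e\}$ and the complete union is $\cunn_e = \cndg_e \cup \{a_e,b_e\}$; since $\sigma_j(k)$ lies in $\cndd_e$, it is either $a_e$ or $b_e$, and the conditioning set $\{\sigma_j(1),\ldots,\sigma_j(k-1)\}$ is then $\cndg_e \cup \{b_e\}$ or $\cndg_e \cup \{a_e\}$ respectively. Thus $\xqdf_{\sigma_j(k)\mid\sigma_j(\{1,\ldots,k-1\})}$ is of the form $\xqdf_{a_e\mid\cndg_e\cup b_e}$ or $\xqdf_{b_e\mid\cndg_e\cup a_e}$, exactly as claimed. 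Finally, to see that it is computable recursively via~\eqref{eq:xqdfrecur}, one invokes the remark following Theorem~\ref{thm:recuni}: each of the two ``lower'' quantile functions on the right-hand side of~\eqref{eq:xqdfrecur}, namely $\xqdf_{a_e\mid\cndg_e}$ and $\xqdf_{b_e\mid\cndg_e}$, is itself indexed by $a_f\mid\cndg_f$ or $b_f\mid\cndg_f$ (equivalently with one more conditioning variable) for an edge $f \in e$ one level down in $\edges_{k-2}$, so the recursion terminates after finitely many steps at $\edges_1$, where the bivariate tail copula densities $\xdf_{a,b}$ and the associated conditional quantile functions $\xqdf_{a\mid b}$, $\xqdf_{b\mid a}$ are the given building blocks; and the conditional copula distribution functions $C_{a_e\mid b_e;\cndg_e}$, $C_{b_e\mid a_e;\cndg_e}$ needed in~\eqref{eq:xqdfrecur} are computable from the X-vine pair copulas, which by the X-vine assumption (Definition~\ref{def:xvine}) do not depend on $\bx_{\cndg_e}$.

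The main obstacle I anticipate is making the telescoping density argument fully rigorous in the presence of the infinite-mass / L-shaped structure: one must be careful that the conditional densities $\xdf_{\sigma_j(k)\mid\sigma_j(\{1,\ldots,k-1\})}$ are genuine probability densities (this is guaranteed by Proposition~\ref{lem:expdensity}, provided the relevant marginal $\xdf_{\sigma_j(\{1,\ldots,k-1\})}$ is strictly positive at the realised values, which holds almost surely), and that the conditioning event $\{\min \IMP_{\sigma_j(\{1,\ldots,k-1\})} < 1\}$ versus $\{\Imp_j < 1\}$ is handled consistently — i.e., that the ``conditional marginal'' $\IMP_{|\,\sigma_j(\{1,\ldots,k-1\})}$ discussed around~\eqref{eq:Ut2Z}, obtained by conditioning on its own minimum being below $1$, is the \emph{same} law (on the relevant region) as the $\sigma_j(\{1,\ldots,k-1\})$-marginal of $(\IMP\mid\Imp_j<1)$. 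The resolution is that $j = \sigma_j(1)$ is always among the indices $\sigma_j(\{1,\ldots,k-1\})$ for $k\ge 2$, so the event $\{\Imp_j<1\}$ already forces $\min \IMP_{\sigma_j(\{1,\ldots,k-1\})} < 1$, and on that region all these conditional laws share the density $\xdf_{\sigma_j(\{1,\ldots,k-1\})}$ by the same density-restriction argument used for the full vector; this is exactly why property~(i) of Lemma~\ref{lem:perm}, $\sigma_j(1) = j$, is essential. Everything else is a routine induction on $k$.
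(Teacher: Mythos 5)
Your proposal is correct and follows essentially the same route as the paper's proof: the distributional claim is the inverse Rosenblatt transform applied to the conditional density $\xdf(\bx)\1\{x_j<1\}$ of $(\IMP\mid\Imp_j<1)$, with $\Imp_j$ uniform on $(0,1)$ as the starting coordinate, and the second claim follows from properties (i)--(ii) of the permutation in Lemma~\ref{lem:perm} exactly as you describe. Your additional care about why conditioning on $\{\Imp_j<1\}$ (rather than on $\{\min\IMP_{\sigma_j(\{1,\ldots,k-1\})}<1\}$) yields the marginal density $\xdf_{\sigma_j(\{1,\ldots,k-1\})}$ on the relevant region is a worthwhile elaboration of a step the paper leaves implicit.
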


\chng{The simulation algorithm based on Proposition~\ref{prop:simu} relies on structure matrices encoding regular vine sequences as explained in Appendix~\ref{app:example} in the supplement.
In Section~\ref{sec:simulation-study}, we apply the algorithm to assess the estimation methods from Section~\ref{sec:estimation} through Monte Carlo experiments.}

\section{Estimation and model selection}
\label{sec:estimation}

Let $\bX_i = (X_{i,1},\ldots,X_{i,d})$ for $i \in \cbr{1,\ldots,n}$ be an independent random sample from a distribution function $F$ with continuous but unspecified margins $F_1,\ldots,F_d$ and whose survival copula $C$ has lower tail copula $\xcdf$ (Definition~\ref{def:expmeas}). Suppose that the tail copula density $\xdf$ is an X-vine with specification $\rbr{\Vine, \xdfset, \cpdfset}$ (Definitions~\ref{def:xvine} and~\ref{def:xvinespec} and Theorem~\ref{thm:xvineconstruct}). We propose a procedure to estimate $\xdf$ from the excesses over a high multivariate threshold.

The regular vine sequence $\Vine = (\tree_j)_{j=1}^{d-1}$ with trees $\tree_j = (\nodes_j, \edges_j)$ may be known or not. The bivariate tail copulas $r_{a,b}$ for edges $e = \cbr{a,b} \in \edges_1$ in the first tree and the bivariate copula densities $c_{a_e,b_e;\cndg_e}$ for edges $e = (a_e,b_e;\cndg_e) \in \edges_j$ in trees $j \ge 2$ are assumed to belong to prespecified (lists of) parametric families. The X-vine specification may be truncated (Definition~\ref{def:truncXvine}), leading to a simpler model. 

The basis of the method is a link between the conditional copula densities $c_{I;J}$ in Sklar's theorem (Proposition~\ref{lem:expdensity}) on the one hand and the inverted multivariate Pareto distribution (Definition~\ref{def:invmultpar}) on the other hand (Section~\ref{sec:SklarIMP}). In  Section~\ref{sec:estimation1}, we propose parameter estimates, supposing that $\Vine$ is given and that parametric families of (tail) copula densities have been specified for all edges. In Section~\ref{sec:estimation2}, finally, we treat model selection, which comprises the selection of the parametric families of the bivariate model components, the selection of the regular vine sequence $\Vine$, and the selection of the truncation level $q$.

\subsection{Copulas and inverted multivariate Pareto distributions}
\label{sec:SklarIMP}

Let $\xdf$ be a $d$-variate \chng{tail copula} density, not necessarily an X-vine. In Sklar's theorem (Proposition~\ref{lem:expdensity}), suppose that the copula density $c_{I;J}$ satisfies the simplifying assumption (Definition~\ref{def:simplif}). The following proposition shows how to transform an inverted multivariate Pareto random vector $\bZ$ associated to $\xdf$ into a random vector with density $c_{I;J}$.

\begin{proposition}
	\label{prop:Z2c}
	Let $\xdf$ be a $d$-variate tail copula density and let $\IMP$ be an inverted multivariate Pareto random vector associated to $\xdf$. Let $I, J \subset \dset$ be non-empty and disjoint.
	\begin{compactenum}[(i)]
	\item 
	For $\bz_J \in (0, \infty)^J$ such that $\min \bz_J < 1$ and $\xdf_J(\bz_J) > 0$, the conditional density of $\IMP_I$ given $\IMP_J = \bz_J$ is $\xdf_{I|J}(\point|\bz_J)$.
	\item 
	Suppose $|I| \ge 2$. If $c_{I;J}(\point;\point)$ satisfies the simplifying assumption (Definition~\ref{def:simplif}), then, conditionally on the event $\min \IMP_J < 1$, the random vector  $\rbr{\xcdf_{i|J}(\Imp_i|\IMP_J)}_{i \in I}$ is independent of $\IMP_J$ and its density is $c_{I;J}$.
	\end{compactenum}
\end{proposition}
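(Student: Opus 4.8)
The plan is to reduce both statements to Sklar's theorem for tail copula densities (Proposition~\ref{lem:expdensity}), combined with the explicit density of $\IMP$ recalled just after Definition~\ref{def:invmultpar}: on $(0,\infty)^d$ it equals $\bx \mapsto \xdf(\bx)\,\1(\bx\in\LL)/\xcdf(\LL)$, with $\LL = \cbr{\bx : \min\bx < 1}$.

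For part~(i), I would compute the marginal density of $\IMP_J$ and the joint density of $(\IMP_I,\IMP_J)$ by integrating out the coordinates outside $J$, respectively outside $I\cup J$. The crucial observation is that as soon as $\min\bx_J < 1$, the indicator $\1(\bx\in\LL)$ equals $1$ for every value of the integrated-out coordinates, so on that region the two densities reduce, by~\eqref{eq:lambdaJ}, to $\xdf_J(\bx_J)/\xcdf(\LL)$ and $\xdf_{I\cup J}(\bx_{I\cup J})/\xcdf(\LL)$. Dividing, and using the hypotheses $\min\bz_J < 1$ and $\xdf_J(\bz_J) > 0$, gives $\xdf_{I\cup J}(\bx_I,\bz_J)/\xdf_J(\bz_J) = \xdf_{I|J}(\bx_I\mid\bz_J)$ as the conditional density of $\IMP_I$ given $\IMP_J = \bz_J$, which is exactly the claim.

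For part~(ii), I would start from part~(i). Since $\cbr{\IMP_J = \bz_J} \subseteq \cbr{\min\IMP_J < 1}$ whenever $\min\bz_J < 1$, conditioning in addition on $\min\IMP_J < 1$ does not alter the conditional law of $\IMP_I$ given $\IMP_J = \bz_J$, which stays the law with density $\xdf_{I|J}(\point\mid\bz_J)$. By the last part of Proposition~\ref{lem:expdensity} (where $|I|\ge 2$ is used so that $c_{I;J}$ is a genuine copula density), applying the conditional marginal distribution functions $\xcdf_{i|J}(\point\mid\bz_J)$, $i\in I$, coordinatewise turns $\IMP_I$ into a random vector whose conditional density given $\IMP_J = \bz_J$ is $c_{I;J}(\point;\bz_J)$. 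The simplifying assumption makes this equal to $c_{I;J}(\point)$, with no dependence on $\bz_J$. Because the conditional density $\bz_J \mapsto \xdf_J(\bz_J)\,\1(\min\bz_J<1)/\xcdf_J(\LL)$ of $\IMP_J$ given $\min\IMP_J < 1$ is supported exactly where $\xdf_J > 0$, the displayed conditional law is valid for $\IMP_J$-almost every value; being constant in that value, it shows that $(\xcdf_{i|J}(\Imp_i\mid\IMP_J))_{i\in I}$ is independent of $\IMP_J$ conditionally on $\min\IMP_J < 1$ and has density $c_{I;J}$.

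The only step needing genuine care is the bookkeeping of the nested conditionings --- on $\IMP_J = \bz_J$, on $\min\IMP_J < 1$, and the almost-everywhere qualifier --- specifically that the set of $\bz_J$ for which all of the above is legitimate carries full conditional mass; the remaining manipulations (a Fubini interchange and the change of variables behind the probability integral transform) are routine, so I do not anticipate a real obstacle.
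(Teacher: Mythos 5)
Your proposal is correct and follows essentially the same route as the paper's own proof: part~(i) by integrating out the remaining coordinates and noting that $\min\bz_J<1$ forces the indicator $\1_{\LL}$ to equal one, and part~(ii) by combining (i) with Eq.~\eqref{eq:cIJ} of Proposition~\ref{lem:expdensity} and observing that the simplifying assumption removes the dependence on $\bz_J$, which yields the conditional independence. The extra bookkeeping you flag about nested conditionings is handled implicitly in the paper and poses no obstacle.
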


By statement~(ii), the density of $\rbr{\xcdf_{i|J}(\Imp_i|\IMP_J)}_{i \in I}$ given $\IMP_J \in A$ is equal to $c_{I;J}$ for any non-empty set $A \subseteq \cbr{\bz \in (0, \infty)^J : \min \bz < 1}$. For estimation, we will use this property for $A = (0, 1)^J$, requiring in effect that \emph{all} variables (rather than \emph{at least one}) in $J$ exceed a high threshold. We do so in order to avoid a potential bias stemming from including too many non-extreme values in the procedure. \chng{An alternative would be to opt for a censored likelihood approach \citep{ledford1996statistics}.}

\subsection{Sequential maximum likelihood estimation of X-vines}
\label{sec:estimation1}

Let the $d$-variate tail copula density $\xdf$ be an X-vine specified by $(\Vine, \xdfset, \cpdfset)$ as in Theorem~\ref{thm:xvineconstruct}. Assume that the bivariate tail copula densities ($\xdfset$) and bivariate copula densities ($\cpdfset$) belong to specified parametric families.
Let the parameter vector be denoted by $\bm{\theta} = (\theta_{\xdfset}, \theta_{\cpdfset})$: here, $\theta_{\xdfset} = (\theta_e)_{e \in \edges_1}$ contains the parameters (or parameter vectors) $\theta_e \in \Theta_e$ associated with each pairwise tail copula density $\xdf_{a_e,b_e}$ for $e \in \edges_1$, while $\theta_{\cpdfset} = \rbr{ \theta_{\cpdfset,j}}_{j=2}^{d-1}$ for $ \theta_{\cpdfset,j} = (\theta_e)_{e \in \edges_j}$ denotes the parameters (or parameter vectors) $\theta_e \in \Theta_e$ associated with each bivariate copula density $c_{a_e,b_e ; D_e}$ for $e \in \bigcup_{j=2}^{d-1} \edges_j$. 
While it is possible to derive the \chng{full likelihood} of an X-vine (inverted) multivariate Pareto distribution, performing parameter estimation with the full model in high dimensions \chng{is challenging}.
Instead, using the X-vine decomposition into \emph{bivariate} components and recursively defined quantities (Theorem~\ref{thm:xvineconstruct}), we outline a \emph{sequential} procedure for parameter estimation,
\chng{tree by tree}. 
This approach is inspired by the one for regular vine copulas \citep[see, e.g.,][]{czado2019analyzing}, but with suitable adaptations to the extreme value context.

\paragraph{(1) Standardising the margins and selecting sub-samples.} 
\chng{Recall 
that $\bX_1,\ldots,\bX_n$
is an independent random sample from 
$F$, with survival copula $C$, tail copula $\xcdf$ and tail copula density $\xdf$}. 
For $i = 1,\ldots,n$ and $j \in \dset$, let	$\hU_{i,j} = 1 - \hF_j(X_{i,j})$, 
where $\hF_j$ denotes any estimator of the marginal distribution function $F_j(x) = \prob(X_{i,j} \le x)$. One possibility is the empirical distribution function, and to avoid boundary effects, we set 
\begin{equation}\label{eq:margins:ranks}
    \hU_{i,j} = 1 - (\operatorname{rnk}_{i,j} - 0.5)/n,
\end{equation}
where $\operatorname{rnk}_{i,j} = \sum_{s=1}^n \1(X_{s,j} \le X_{i,j})$ is the (maximal) rank of $X_{i,j}$ among $X_{1,j},\ldots,X_{n,j}$. We view the points $\hbU_i = (\hU_{i,1},\ldots,\hU_{i,d})$ as pseudo-observations from the survival copula $C$. 

By Eq.~\eqref{eq:Ut2Z}, for large $t > 0$, the rescaled points $t \hbU_i$ for $i \in \cbr{1,\ldots,n}$ such that $\min \hbU_i < 1/t$ constitute pseudo-observations from a distribution that approximates the inverted multivariate Pareto distribution associated with $\xdf$. We set $t = n/k$ where\chng{, in an asymptotic setting, $k = k_n \in \cbr{1,\ldots,n}$ 
satisfies} $k \to \infty$ and $k/n \to 0$. For $j \in \dset$ \chng{and $\hU_{i,j}$ as in \eqref{eq:margins:ranks}}, let 
\begin{equation}
\label{eq:Nj}
	\iset_j = \cbr{i = 1,\ldots,n : \hU_{i,j} < k/n}
\end{equation}
be the set of indices $i$ corresponding \chng{to the $k$ largest observations for the $j$th component.
Further, put $\iset_J = \bigcap_{j \in J} \iset_j$ and $\iset = \bigcup_{j \in \dset} \iset_j,$
which are, respectively, the set of 
indices $i$ corresponding to large observations in
\emph{all} variables in $J \subseteq \dset$ \emph{simultaneously} and the set of 
indices with large observations in \emph{at least one} variable.
Write $\hbZ_i = (\hZ_{i,j})_{j \in \dset}$ where $\hZ_{i,j} = \rbr{n/k} \hU_{i,j}$. In view of Eq.~\eqref{eq:Ut2Z}, we treat $\{\hbZ_i\}_{i \in \iset}$ as a sample of $|\iset|$ pseudo-observations of the inverted multivariate Pareto distribution associated with $\xdf$. The sample size $|\iset|$ is random, and 
from \eqref{eq:margins:ranks}, we have $k \le |\iset| \le dk$. For non-empty $J \subseteq \dset$ and $i \in \iset$, we write $\hbZ_{i,J} = (\hZ_{i,j})_{j \in J}$.}

\paragraph{(2) Estimating the tail copula parameters $\theta_{\xdfset}$ in $\tree_1$.}
For each edge $e = \cbr{a_e,b_e} \in \edges_1$, we estimate the parameter (vector) $\theta_{e}$ associated with the bivariate tail copula density $r_{a_e,b_e}(\point;\theta_e)$. By Eq.~\eqref{eq:Zj}, for $j \in \cbr{a_e,b_e}$, the conditional density of $(\Imp_{a_e}, \Imp_{b_e})$ given $\Imp_{j} < 1$ is $\xdf_{a_e,b_e}(z_{a_e},z_{b_e};\theta_e) \1(z_j < 1)$.
\chng{We use maximum pseudo-likelihood estimation} to fit this density to the sub-samples 
$(\hZ_{i,a_e}, \hZ_{i,b_e})$ for both $i \in \iset_{a_e}$ and $i \in \iset_{b_e}$, with $\iset_j$ as in Eq.~\eqref{eq:Nj}.
More precisely, we maximise each of the two pseudo-likelihoods 
\begin{equation}
\label{eq:lik}
	\lik_{\xdfset,e} \rbr{ 
		\theta_{e}^{(j)} ; \;
		(\hZ_{i,a_e},\hZ_{i,b_e}), i \in \iset_j 
	} 
	=  \prod_{i \in \iset_j} r_{a_e,b_e} \rbr{ \hZ_{i, a_e}, \hZ_{i, b_e}; \theta_{e}^{(j)} }, 
	\qquad j \in \{a_e,b_e\},
\end{equation}
over $\theta_e^{(j)} \in \Theta_e$, yielding estimates $\htheta_e^{(a_e)}$ and $\htheta_e^{(b_e)}$, respectively.
The final estimate is $\htheta_{a_e,b_e} = \{\htheta_{e}^{(a_e)} + \htheta_{e}^{(b_e)}\} / 2$. The idea of averaging maximum pseudo-likelihood estimators on product spaces has already been \chng{proposed for the \HR{} model \citep{engelke2015estimation,engelke2graphicalextremes}.} 
\chng{Other estimation approaches} include censored likelihoods \citep{ledford1996statistics,de2008parametric} or empirical stable tail dependence functions \citep{einmahl2008method,einmahl2018continuous}.

\paragraph{(3) Estimating the copula parameters $\theta_{\cpdfset}$ in $\tree_2,\ldots,\tree_{d-1}$.}
\chng{Estimation of the parameters associated with edges $e = (a_e, b_e; \cndg_e)$ in $\edges_j$ for $j \in \cbr{2,\ldots,d-1}$ is based on a similar procedure as the one employed in regular vine copulas \citep[Section 7.2]{czado2019analyzing}. The main difference concerns the definition of the pseudo-observations: 
if $\htheta(\edges_{1:(j-1)})$ denotes the parameter estimates associated with the edges in $\edges_{1:(j-1)} = \edges_1 \cup \cdots \cup \edges_{j-1}$,
by Proposition~\ref{prop:Z2c}(i), for $i \in \iset_{\cndg_e}$, pseudo-observations $(\hU_{i,a_e;\cndg_e}, \hU_{i,b_e;\cndg_e})$ from $c_{a_e,b_e; \cndg_e}$ can be defined by}
\begin{equation}
\label{eq:hUiaebe}
	\hU_{i,a_e ; \cndg_e}
	= \xcdf_{a_e | \cndg_e} \rbr{\hZ_{i,a_e} \mid \hbZ_{i,\cndg_e} ; \; \htheta(\edges_{1:j-1})}, \quad
	\hU_{i,b_e ; \cndg_e}  
	= \xcdf_{b_e | \cndg_e} \rbr{\hZ_{i,b_e} \mid \hbZ_{i,\cndg_e} ; \, \htheta(\edges_{1:j-1})}.
\end{equation}
\chng{The full procedure is detailed in Section~\ref{sec:estimationdetail} of the supplement.} 

\subsection{Model selection for X-vines}
\label{sec:estimation2}

\chng{Model selection for X-vines based on a random sample $\bX_1,\ldots,\bX_n$ from $F$ 
involves:}
\begin{compactenum}[(1)]
\item selecting a regular vine sequence $\Vine = (T_1,\ldots,T_{d-1})$;
\item given the regular vine sequence obtained in (1), choosing adequate bivariate parametric (tail) copula families, $\xdfset$ and $\cpdfset$.
\end{compactenum}
In fact, the two procedures are executed together sequentially, progressing from one tree to the next tree.
First, we describe step (2) given the regular vine sequence $\Vine$.

\paragraph{Selecting parametric (tail) copula families given a truncated regular vine sequence.}
We first consider the specification of $\xdfset$ in Definition~\ref{def:xvinespec} in tree $\tree_1$.
Let $\mathcal{B}_{\xdfset,1:T}=(\mathcal{B}_{\xdfset,1},\ldots,\mathcal{B}_{\xdfset,T})$ be the list of candidate bivariate tail copula families. 
For each edge $e = (e_a,e_b) \in \edges_1$, specifying $\xdfset$ involves choosing the bivariate tail copula family among $\mathcal{B}_{\xdfset,1:T}$.
Similar to the idea of averaged maximum pseudo-likelihood estimates in Section~\ref{sec:estimation1}, we use the averaged-AIC value for selecting bivariate tail copula families.
For each edge $e$ and each $\mathcal{B}_{\xdfset,t}$, $t = 1,\ldots,T$, we obtain two maximum pseudo-likelihood estimates 
$\htheta_{e}^{(t,a_e)}$ and $\htheta_{e}^{(t,b_e)}$, derived through the maximisation of log-likelihood functions on product spaces; see \eqref{eq:lik}.
The averaged-AIC value is
$\AIC_e \rbr{\mathcal{B}_{\xdfset,t}} = 2 \nu^{(t)} - \tfrac{1}{2}\left\{\log \mathcal{L}_{\xdfset,e} \rbr{\htheta_{e}^{(t,a_e)}} + \log \mathcal{L}_{\xdfset,e} \rbr{\htheta_e^{(t,b_e)}}\right\}$,
where $\nu^{(t)}$ is the number of parameters in $\mathcal{B}_{\xdfset,t}$.
We select the bivariate tail copula family with the lowest averaged-AIC among $\AIC_e \rbr{\mathcal{B}_{\xdfset,1}}, \ldots, \AIC_e \rbr{\mathcal{B}_{\xdfset,T}}$.

\chng{Similarly, to specify $\cpdfset$ and select the bivariate parametric copula family among a list of candidates for each edge in $\edges_2 \cup \cdots \cup \edges_{d-1}$, we follow common practice in vine copula modelling and choose the family with the lowest AIC; see \citet{brechmann2010truncated} and \citet[Section~8.1]{czado2019analyzing}.}

It is worthwhile to note that when implementing the model selection step for an edge $e$ in $\edges_j$, only the trees $\tree_1,\ldots,\tree_j$ need to have been selected, but not the trees $\tree_{j+1},\ldots,\tree_{d-1}$.
This sequential approach aligns well with the vine learning procedure in the next paragraph.

\paragraph{Selecting the regular vine sequence.}
\citet{morales2010counting} showed that the number of regular vine sequences on $d$ elements is equal to $d! \, 2^{\binom{d-2}{2} - 1}$, making it impossible to go through all possible vine structures. 
We adopt a model selection approach similar to the one in \citet{dissmann2013selecting}, choosing trees sequentially from $\tree_1$ to $\tree_{d-1}$.

\chng{To select the first tree, $\tree_1$, on the node set $\nodes_1 = \dset$, we follow a procedure as in \citet{engelke2022structure} and \citet{hu2023modelling}. For every pair $\cbr{a,b}$ of distinct elements in $\dset$, let $w_{a,b}$ be a nonnegative weight derived from the data. We use an empirical version of the tail dependence coefficient $\chi_{a,b} = R_{a,b}(1,1)$, setting $w_{a,b}$ equal to
$\hchi_{a,b} =
    \frac{1}{k} \sum_{i=1}^n \1 \cbr{\hU_{i,a} \leq k/n, \hU_{i,b} \leq k/n}$,
where $\hU_{i,a}$ and $\hU_{i,b}$ are defined as in \eqref{eq:margins:ranks}. Another possible edge weight could be the empirical extremal variogram as in \citet{engelke2022structure}.
For subsequent trees $\tree_2,\ldots,\tree_{d-1}$, the edge weight is chosen to be the absolute value of the empirical Kendall's tau. All trees are selected as maximum spanning trees; see, for example, \citet[Section 8.3]{czado2019analyzing}.}

\paragraph{Selecting truncated regular vine tree sequences.}
The model selection procedure described above may set all pair-copulas to the independence copula in the subsequent trees from $\tree_{q+1}$ to $\tree_{d-1}$ for a truncation level $q \in \cbr{1,\ldots,d-2}$.
In this case, the resulting model corresponds to the truncated X-vine model in Definition~\ref{def:truncXvine}.
Besides an information criterion, we consider two additional criteria for selecting the independence copula at an edge $e$: when the effective sample size $|\iset_{D_e}|=n_{D_e}$ falls below a certain low value, or when the absolute value of the empirical Kendall's tau is close to zero. \chng{Sparsity induced by a smaller effective sample size, is more likely when the total sample size $n$ is relatively small with respect to the dimension $d$.}

Even when the above criteria are not met, when $d$ is large, it is natural to limit the number of model parameters by considering truncated X-vines, since Dissmann's algorithm captures as much dependence as possible in the first few trees. We use a modified Bayesian information (mBIC) to determine the truncation level, inspired by the one for regular vine copulas in \cite{nagler2019model}.
This modified version adjusts the prior probability in the BIC to penalise dependence copulas more severely in trees at higher levels.
More specifically, assuming that for any edge $e \in \edges_j$ for $j \ge 2$, the parametric family has a single parameter $\theta_{e}$
and that a value of $\theta_e = 0$ corresponds to the independence copula, the mBIC includes independent Bernoulli variables $\1(\theta_e\ne 0)$ with mean $\psi_e=\psi_0^{j-1}$ for $e\in \edges_j$ and a hyperparameter $\psi_0 \in (0, 1)$: set $\mBIC(1) = 0$ and, for $q \in \cbr{2,\ldots,d-1}$, put
\begin{equation}
\label{eq:mBIC}
    \mBIC(q) =  \sum_{j=2}^q \sum_{e \in \edges_j} \sbr{
        \1(\theta_e\ne 0) \cbr{\log {n_{D_e}}-2\log\rbr{\frac{\psi_e}{1-\psi_e}}} 
        - 2 \log \left\{  \frac{\lik_{\cpdfset,e} \rbr{\htheta_{e}}}{(1 - \psi_e )^{-1}}\right\}
    }
\end{equation}
with $\lik_{\cpdfset,e}$ as in \eqref{eq:lik2}.
The selected truncation level $q^*$ is the value of $q$ in $\cbr{1,\ldots,d-1}$ that minimises $\mBIC(q)$. In practice, we will set $\psi_0 = 0.9$ as in \citet{nagler2019model}.

\section{Simulation study}
\label{sec:simulation-study}

We conduct three simulation studies, evaluating the proposed procedures for parameter estimation, selection of bivariate parametric (tail) copula families, and vine truncation.

\subsection{Parameter estimation}
\label{sec:simulation-study:estim}

We first consider the $5$-dimensional X-vine model $\xdf$ in Fig.~\ref{fig:5d_Xvine_a}.
The bivariate tail copula densities $\xdf_{a_e,b_e}$ in the first tree $\tree_1=(\nodes_1,\edges_1)$ are chosen from the \HR{} model, the negative logistic, logistic, and Dirichlet families (Section~\ref{sec:parametric}), 
while the bivariate copula densities $c_{a_e,b_e;\cndg_e}$ in the subsequent trees $\tree_j = (\nodes_j,\edges_j)$ for $j \in \cbr{2,3,4}$ are taken from the Clayton, Gumbel, and Gaussian copula families.
\chng{Fig.~\ref{fig:5d_Xvine_a} shows the values specified for the tail dependence coefficient $\chi = R(1,1)$ or for Kendall's $\tau$ for each edge.} 
The formulas connecting the parameters of the families of tail copula densities with 
$\chi$ are given in \chng{Appendix~\ref{sec:chi} in the supplement}.

\begin{figure}
\centering
  \subfloat[\label{fig:5d_Xvine_a}]{
  \resizebox{0.59\textwidth}{!}{
  \begin{tikzpicture}
	\node[draw] (2) at (0, 3) {$2$};
	\node[draw] (1) at (-3, 3) {$1$};
	\node[draw] (3) at (3, 3) {$3$};
    \node[draw] (4) at (0, 2) {$4$};
    \node[draw] (5) at (3, 2) {$5$};
    \draw (1) -- (2) -- (3);
	\draw (2) -- (4) -- (5);
    \draw (1) -- (2) node[midway, above]{\tiny\fbox{HR;$\chi=0.54$}};
    \draw (2) -- (3) node[midway, above]{\tiny\fbox{NL;$\chi=0.71$}};
    \draw (2) -- (4) node[midway, right]{\tiny\fbox{L;$\chi=0.68$}};
    \draw (4) -- (5) node[midway, below, sloped]{\tiny\fbox{Diri;$\chi=0.62$}};
    \node (T1) at (-5,3) {$T_1$};
	\node[draw] (23) at (-1, 0.5) {$23$};
	\node[draw] (12) at (-4, 0.5) {$12$};
	\node[draw] (24) at (2, 0.5) {$24$};
	\node[draw] (45) at (5, 0.5) {$45$};
    \draw (12) -- (23) -- (24) -- (45);
    \draw (12) -- (23) node[midway, above]{\tiny\fbox{Clay;$\tau=0.50$}};
    \draw (23) -- (24) node[midway, above]{\tiny\fbox{Gum;$\tau=0.60$}};
    \draw (24) -- (45) node[midway, above]{\tiny\fbox{Ga;$\tau=0.49$}};   
    \node (T2) at (-5,0.5) {$T_2$};
    \node[draw] (13) at (-2.5, -1) {$13;2$};
	\node[draw] (34) at (0.5, -1) {$34;2$};
	\node[draw] (25) at (3.5, -1) {$25;4$};
 	\draw (13) -- (34) -- (25);
    \draw (13) -- (34) node[midway, above]{\tiny\fbox{Clay;$\tau=0.17$}};
    \draw (34) -- (25) node[midway, above]{\tiny\fbox{Ga;$\tau=-0.19$}};  
    \node (T3) at (-5,-1) {$T_3$};
    \node[draw] (14) at (-1.5, -2.5) {$14;23$};
	\node[draw] (35) at (2, -2.5) {$35;24$};  
    \draw (14) -- (35);
    \draw (14) -- (35) node[midway, above]{\tiny\fbox{Ga;$\tau=0.06$}};
    \node (T4) at (-5,-2.5) {$T_4$};
    \end{tikzpicture}
    }} 
      \subfloat[\label{fig:boxplot_Fam}]{\includegraphics[width=0.39\textwidth]{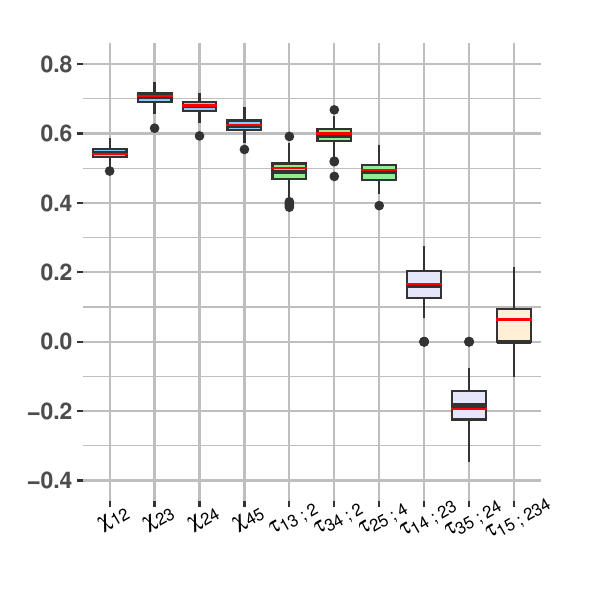}}%
    \caption{\label{fig:5d_Xvine}
    	(a) An X-vine specification $(\Vine, \xdfset, \cpdfset)$ on $d=5$ variables. The bivariate tail copula densities $\xdf_{a_e,b_e}$ in the first tree $\tree_1$ are selected from the \HR{} (HR), negative logistic (NL), logistic (L), and Dirichlet (Diri) parametric families, each with specified tail dependence coefficients $\chi$. The bivariate copula densities $c_{a_e,b_e;\cndg_e}$ in the subsequent trees $\tree_2,\tree_3,\tree_4$ are chosen from the Clayton (Clay), Gumbel (Gum), and Gaussian (Ga) copula families, each with specified Kendall's tau $\tau$. (b)  Box-plots (\textcolor{lightskyblue}{$\blacksquare$} for $\tree_1$
    \textcolor{lightgreen}{$\blacksquare$} for $\tree_2$
    \textcolor{lavender(web)}{$\blacksquare$} for $\tree_3$
    \textcolor{papayawhip}{$\blacksquare$} for $\tree_4$) \chng{of dependence measure estimates from bivariate parametric families selected sequentially from the data for the X-vine specification in (a).
	The red lines indicate the specified parameter values.}}
\end{figure}

Implementing simulation algorithms as detailed in \chng{Section~\ref{sec:simulation} and Appendix~\ref{app:example} in the supplement}, we generate multivariate inverted Pareto random samples $\bZ_1,\ldots,\bZ_n$ associated with $\xdf$.
\chng{As in Eq.~\eqref{eq:margins:ranks}, 
we transform to $\hbU_1, \ldots, \hbU_n$, with $\hat{U}_{i,j}$ based on the rank of $Z_{i,j}$ among $Z_{1,j}, \ldots, Z_{n,j}$.}
We then take \chng{the sub-sample} $\{\hbZ_i\}_{i\in \iset}$ with $\iset = \bigcup_{j \in \dset} \iset_j$ and $\iset_j$ as in \eqref{eq:Nj}.

We set $(n,k)=(4\,000,200)$ and perform sequential parameter estimation as in Section~\ref{sec:estimation1} over 200 repetitions.
We obtain maximum pseudo-likelihood estimates and \chng{corresponding} dependence measures, $\hchi$ and $\htau$, \chng{using the relations in Appendix~\ref{sec:chi} of the supplement,} 
for each of the ten edges in the vine.
In Fig.~\ref{fig:boxplot_Fam}, box-plots present dependence measure estimates of parametric families based on the X-vine specification in Fig.~\ref{fig:5d_Xvine_a}.
The four left-most box-plots (\textcolor{lightskyblue}{$\blacksquare$}) show the tail dependence coefficient estimates $\hchi$ for the four edges $e \in \edges_1$, and the remaining plots display Kendall's tau estimates $\htau$ for the six edges $e \in \edges_2 \cup \edges_3 \cup \edges_4$.
\chng{The plot supports the validity of the sequential method for estimating tail dependence measures. 
We see that estimation uncertainty becomes larger at higher tree levels. Sections~\ref{sec:simu:addtional:dep}--\ref{sec:boxplotmle} in the supplement include additional box-plots: first, of dependence measures as in Fig.~\ref{fig:boxplot_Fam}, but supposing that the bivariate parametric families are known, second, of dependence measures with varying sample sizes and threshold exceedances, and third}, of maximum likelihood estimates of tail copula densities for a specific edge $e\in \edges_1$ with varying $(n,k)$.

\subsection{Selecting parametric (tail) copula families}
\label{sec:simulation-study:select}

We assess algorithm effectiveness in selecting bivariate parametric families for each edge in each tree, using the \chng{X-vine specification in} Fig.~\ref{fig:5d_Xvine_a}.
While the flexibility of X-vine models allows us to consider any bivariate parametric family, we simplify the process by considering a catalogue of four candidate tail copula models for $\tree_1$: the \HR, logistic, negative logistic, and Dirichlet models, along with a catalogue of nine candidate pair-copula families for $\tree_2,\tree_3,\tree_4$: Independence, Gaussian, Clayton, Survival Clayton, Gumbel, Survival Gumbel, Frank, Joe, and Survival Joe copulas, as implemented in the \chng{R package \textsf{VineCopula}} \citep{vinecopula}.

Using again 200 repetitions with $(n,k)=(4\,000,200)$, we assess the accuracy of the (averaged) AIC in Section~\ref{sec:estimation2} in selecting bivariate parametric families. 
The overall proportion of correctly selected families across all trees is 60\%.
For individual trees, the proportions are 
62\% for $\tree_1$, 78\% for $\tree_2$, 55\% for $\tree_3$, and 10\% for $\tree_4$. 
\chng{The specific proportions for each edge in each tree are }
\begin{center}
        \begin{tabular}{rrrrr}
        \midrule
        $\tree_1$ & 99\% & 40\% & 60\% & 49\% \\ 
        $\tree_2$ &  & 58\% & 93\% & 82\% \\ 
        $\tree_3$ &  &  & 42\% & 68\% \\ 
        $\tree_4$ &  &  &  & 11\% \\ 
        \bottomrule
       \end{tabular}
\end{center}
\bigskip

The lower proportions observed for $e_{23}$ and $e_{45}$ in $\edges_1$ result from the challenge in distinguishing between the logistic and negative logistic models.
Overall, the proportion of accurately selected families decreases with declining effective sample size and Kendall's tau estimate across tree levels.
We observe the lowest proportion for the deepest edge $e_{15;234}$ where the corresponding pair-copula exhibits weak dependence. For this edge, the algorithm selects the independence copula 110 times out of 200.

\chng{We also investigate the effective sample sizes for each tree in Section~\ref{sec:avsamplesize} of the supplement}.

\subsection{Tree selection and truncation}
\label{sec:simulation-study:trunc}

We consider a higher dimensional X-vine model where residual dependence weakens with increasing tree level.
Specifically, for $\Vine$ we consider a $50$-dimensional C-vine, that is, in each tree $\tree_j$, there is a single node $a_j \in \dset$ such that $a_j$ belongs to the conditioned set $\cndd_e$ of all edges $e \in \edges_j$.
The first tree includes the \HR{} models and negative logistic models with randomly assigned parameter values $\theta_{e} \in [1,2]$ for $e \in \edges_1$.
Subsequent trees contain bivariate Gaussian copulas with partial correlations $\rho_{e} = 1.1 - 0.1 j$ for $e \in \edges_{j}$ and $j\in\cbr{2,\ldots,9}$, and $\rho_{e}=0.1$ for $e \in \edges_{j}$ with $j \ge 10$.
This X-vine specification allows us to explore truncated X-vine models by setting pair-copulas with weak dependence to independence copulas.

\chng{We use a single inverted multivariate Pareto sample of size $n=1\,000$ from the X-vine model directly, i.e., without rank transformation and thresholding.}
\chng{As in Sections~\ref{sec:estimation1} and~\ref{sec:estimation2}}, we sequentially select trees $\tree_1, \ldots, \tree_{d-1}$ using $\hchi_e$ for $e\in \edges_1$ and $\htau_e$ for $e\in \edges_j$, $j \ge 2$, as edge weights, \chng{with $\hchi_e$ as in Eq.~\eqref{eq:chiab} in the supplement.}
For each selected tree, we choose the bivariate parametric families with the lowest (averaged) AIC for each edge and estimate the associated parameters.
Additionally, independence copulas are chosen in subsequent trees if either $|\htau_e| < 0.05$ or $n_{D_e}<10$ for each edge.

To explore truncated X-vine models, we use the mBIC in \eqref{eq:mBIC} with $\psi_0=0.9$ as in \cite{nagler2019model}.
\chng{The estimated mBIC-optimal truncation level is $q^*=19$ (dotted line in Fig.~\ref{fig:mBIC50dim}).} 
\chng{We assess the goodness of fit by comparing pairwise $\chi$-values from the true X-vine model with those from the fitted 50-dimensional X-vine model via Monte Carlo simulations 
(\chng{as explained in Section~\ref{sec:taildepcoefgof} in the supplement}) in Fig.~\ref{fig:Chi50dim}. 
For the truncated X-vine model,} the $\chi$-plot in Fig.~\ref{fig:Chi50Trunc} resembles that of the full model but exhibits more variability, particularly for lower $\chi$-values.

\begin{figure}%
    \centering
    \subfloat[\label{fig:mBIC50dim}]{\includegraphics[width=0.33\textwidth]{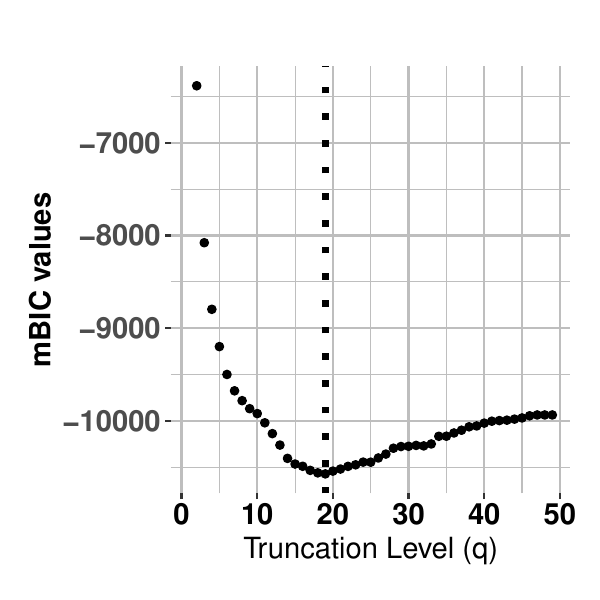}}%
    \subfloat[\label{fig:Chi50dim}]{\includegraphics[width=0.33\textwidth]{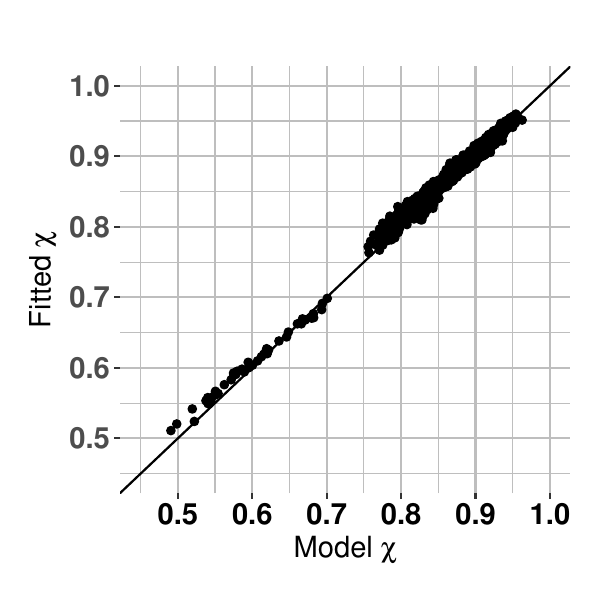}}%
    \subfloat[\label{fig:Chi50Trunc}]{\includegraphics[width=0.33\textwidth]{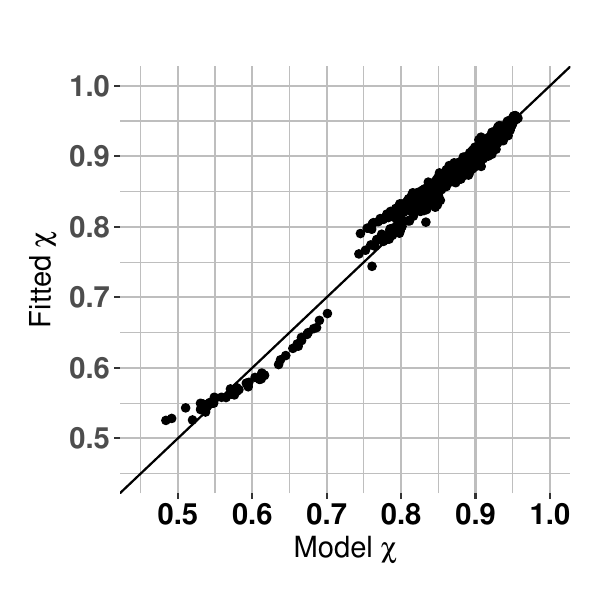}}%
    \caption{\label{fig:chisim} Simulation study: (a) The mBIC plotted across tree levels with a dotted line indicating the selected mBIC-optimal truncation level of $q^*=19$. 
    \chng{(b) $\chi$-plot comparing pairwise $\chi$ values from the true X-vine model to those obtained from the fitted 50-dimensional X-vine model via Monte Carlo simulations. 
    (c) Similar to (b) but for the truncated fitted X-vine model.}}%
\end{figure}

\section{Application: US flight delay data}
\label{sec:casestudy}

We apply X-vine models to investigate extremal dependence among large flight delays in the US flight network.
The raw data set is accessible through the US Bureau of Transportation Statistics\footnote{\url{https://www.bts.dot.gov}}.
These flight delay data were analysed by \cite{hentschel2022statistical} who first selected airports with a minimum of 1\,000 flights per year and applied a $k$-medoids clustering algorithm to identify homogeneous clusters in terms of extremal dependence.
This clustering approach not only makes the analysis suitable for modelling extremal dependence but also reveals shared frequent flight connections between airports and similar geographical characteristics in each cluster.
Focusing on the \HR{} family, \cite{hentschel2022statistical} fitted an extremal graphical model to large flight delays of airports in the Texas cluster to investigate conditional independence.

For the purpose of model comparison, we also analyse daily total delays (in minutes) in the Texas cluster.
The cluster comprises $d = 29$ airports and counts $n = 3603$ days from 2005 to 2020, during which all airports have recorded measurements.
This pre-processed data is available through the R-package \textsf{graphicalExtremes} \citep{engelke2graphicalextremes}.
A graphical representation of the actual flight connections between airports is presented in Fig.~\ref{fig:FlightGraphDomain}.

\begin{figure}[ht]%
    \centering
    \subfloat[\label{fig:FlightGraphDomain}]{\includegraphics[width=0.28\textwidth]{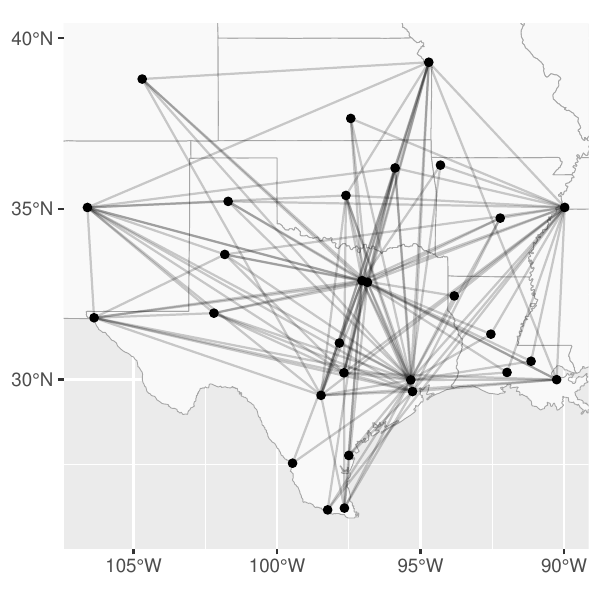}}%
    \quad
    \subfloat[\label{fig:FlightGraphXvineTrunc}]{\includegraphics[width=0.28\textwidth]{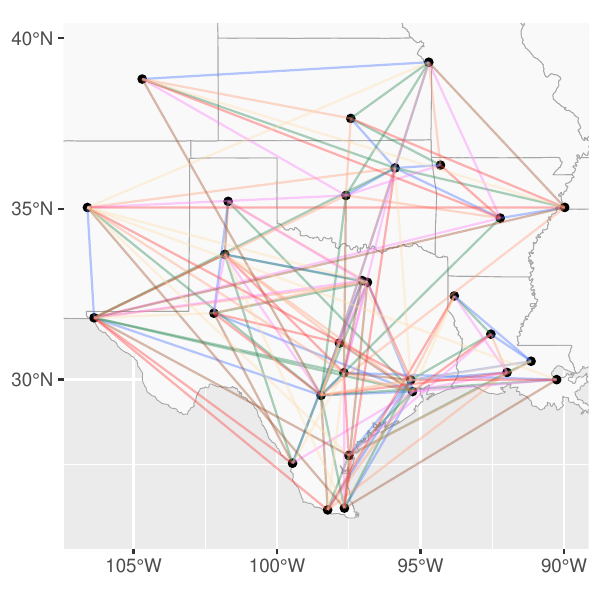}}%
    \quad
    \subfloat[\label{fig:FlightGraphEGlearn}]{\includegraphics[width=0.28\textwidth]{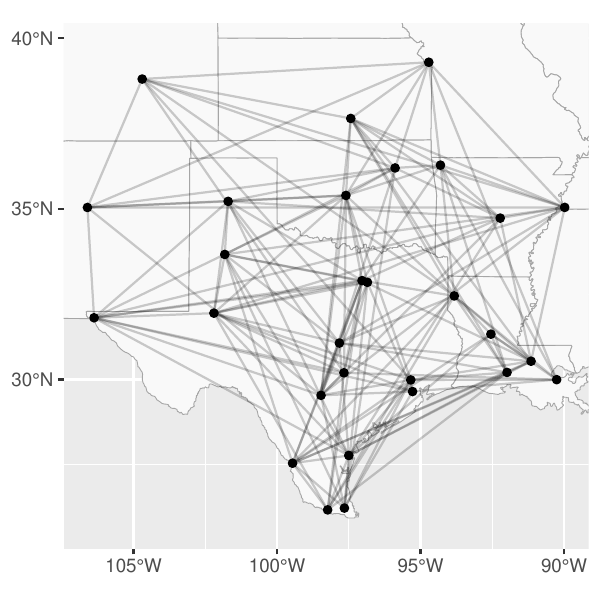}}%
    \\
    \subfloat[\label{fig:ChiFlightDomain}]{\includegraphics[width=0.28\textwidth]{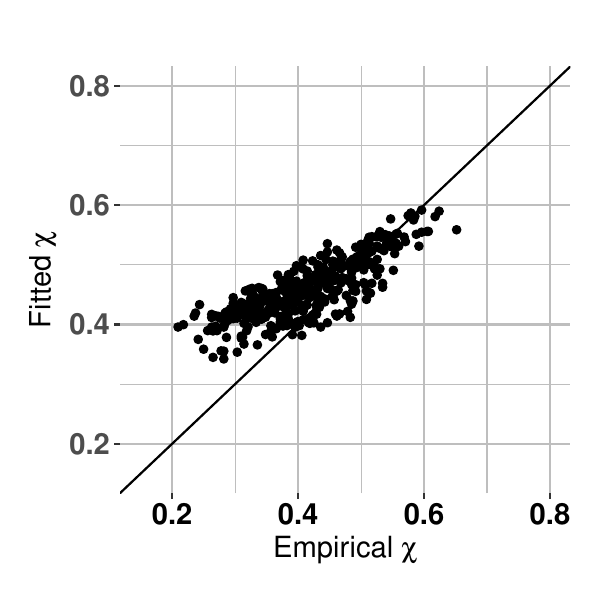}}%
    \quad
    \subfloat[\label{fig:ChiFlightTruncXvine}]{\includegraphics[width=0.28\textwidth]{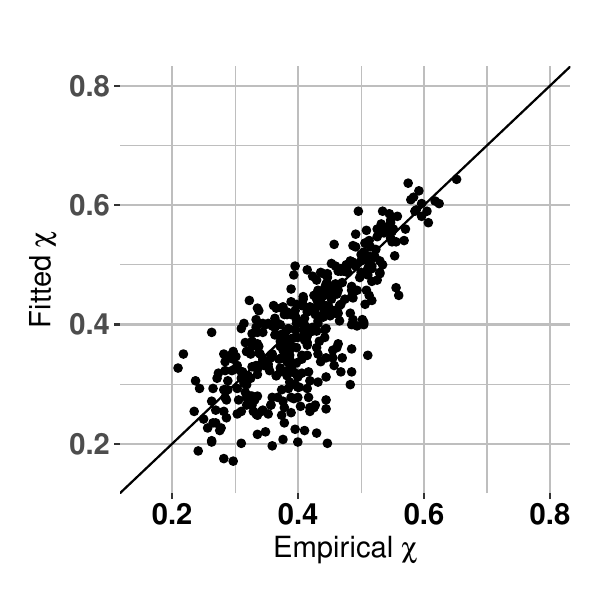}}%
    \quad
    \subfloat[\label{fig:ChiFlightEGlearn}]{\includegraphics[width=0.28\textwidth]{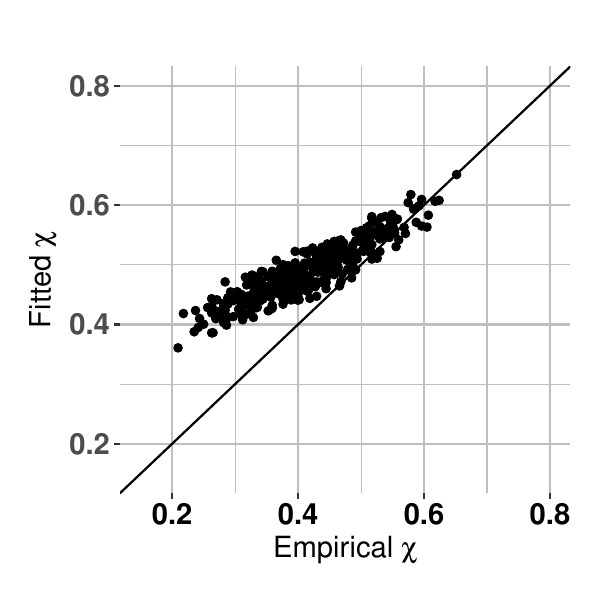}}%
    \caption{
(a) Flight graph illustrating connection flights between airports joined by edges in the Texas cluster. 
(b) The X-vine graph with the first seven trees superimposed, showing for trees $\tree_2$ to $\tree_7$ only the 113 out of 147 edges for which a copula other than the independence copula is selected, for 141 edges in total.
(c) The estimated extremal \HR{} graph structure using the regularised method with a tuning parameter value of $\rho^*=0.1$, totalling 148 edges.
Corresponding $\chi$-plots comparing empirical pairwise tail dependence coefficients with those from the fitted graphical models: (d) the flight graph, (e) the truncated X-vine graph, and (f) the extremal \HR{} graph using the \textsf{EGlearn} algorithm.}%
\end{figure}

Let $\widehat{\bx}_i$, for $i=1,\ldots,n$, denote the measurements.
Following the standardisation process with the rank transformation \eqref{eq:margins:ranks} as in Section~\ref{sec:estimation1}, we take \chng{the sub-sample}
$\hbz_{i} = (n/k)\hbu_{i}$ for $i\in \iset = \bigcup_{j\in\dset} \iset_j$ as in \eqref{eq:Nj}.
We choose $k = 0.13 n$ in order to have a large enough effective sample size \chng{with respect to} the number of variables.

Assigning $\hchi_e$ for $e\in \edges_1$ and $\htau_e$ for $e\in \edges_j$, $j \ge 2$, as edge weights, we select the vine tree structure sequentially as in Section~\ref{sec:estimation2}.
\chng{Fig.~\ref{fig:XvineFlightGrapha} in the supplement shows the first maximum spanning tree. The corresponding estimated tail dependence coefficients vary between $0.42$ and $0.65$, with a mean of $0.55$.}
Out of the $d-1=28$ edges in $\tree_1$, selected bivariate tail copula families include the \HR{} ($8$ edges), negative logistic ($3$ edges) and Dirichlet models ($17$ edges).
Subsequent trees with a total of $378$ edges are then chosen with the following pair-copula families (number of edges in parentheses): Independence (201), Gaussian (31), Clayton (8), Gumbel (26), Frank (34), Joe (36), Survival Clayton (31), Survival Gumbel (9), and Survival Joe (2).
As the tree level rises, we observe a decline in residual dependence and an increase in the number of independence copulas.
The X-vine algorithm sets all pair copulas to the independence copula from \chng{$\tree_{21}$}, that is, the vine is truncated at level $q=20$.

To evaluate the X-vine model's efficacy in capturing extremal dependence for large flight delays between airports, 
\chng{we compare bivariate and trivariate empirical tail dependence coefficients 
with those from the fitted X-vine models in Section~\ref{sec:gofXvine} in the supplement. Fig.~\ref{fig:chifit} indicates a satisfactory fit, both for the full $29$-dimensional model and for the truncated model.}

To further explore truncated X-vine models with a lower truncation level, we use the mBIC in Eq.~\eqref{eq:mBIC}.
Fig.~\ref{fig:mBICFlight} \chng{in the supplement}
illustrates the mBIC-optimal truncation level of $q^*=7$, which corresponds to 113 dependence copulas [Gaussian (17), Clayton (3), Gumbel (23), Frank (16), Joe (25), Survival Clayton (22), Survival Gumbel (6), and Survival Joe (1)] out of the 147 total edges from $\tree_2$ to $\tree_7$. The total number of bivariate model components is thus $28+113=141$.
Additionally, we investigate how the truncation level changes over the threshold range 
in Fig.~\ref{fig:mBICvsQuan}.
It appears that the mBIC-optimal truncation levels are not overly sensitive to the choice of the threshold.
The superimposed graph of the first seven vine trees is shown in Fig.~\ref{fig:FlightGraphXvineTrunc}, showing for $\tree_2$ to $\tree_7$ only the 113 out of 147 edges for which a copula other than the independence one is selected. \chng{The vine tree sequence from $\tree_1$ to $\tree_7$} is presented in Fig.~\ref{fig:XvineFlightGraph}. 

Returning to model comparison, we consider \chng{extremal \HR{} graphical models 
\citep{engelke2021learning,hentschel2022statistical}.
\citet{hentschel2022statistical} obtained a sparse \HR{} graphical model 
using the \textsf{EGlearn} algorithm \citep{engelke2021learning}.}
The tuning parameter $\rho\ge 0$ controls sparsity. The empirical extremal variogram matrix $\hGamma$ is used as edge weight for the \emph{minimum} spanning tree with smaller elements in $\hGamma$ indicating stronger extremal dependence.
Following 
\citet{engelke2graphicalextremes}, the data set is split in half: a training set for model fitting and a test set for tuning parameter selection.
We determine the optimal tuning parameter, $\rho^*=0.1$, by evaluating the \HR{} log-likelihood across tuning parameter values $\rho$ from the test set.
The resulting sparse extremal graph with 148 edges and $\rho^*=0.1$ is shown in Fig.~\ref{fig:FlightGraphEGlearn}.

We assess the goodness of fit using the entire sub-sample and compare empirical tail dependence coefficients to those obtained from the fitted graphical models as explained in \chng{Appendix~\ref{sec:taildepcoefgof} in the supplement}: the \HR{} extremal graphical model for the flight graph in Fig.~\ref{fig:ChiFlightDomain}, the truncated X-vine model with $q^*=7$ in Fig.~\ref{fig:ChiFlightTruncXvine}, and the extremal \HR{} graphical model with $\rho^*=0.1$ in Fig.~\ref{fig:ChiFlightEGlearn}, respectively.

The flexibility of a regular vine structure allows the selection of various (tail) copula families with the lowest AIC.
Recall from Section~\ref{sec:hr} that the \HR{} model arising from the X-vine specification is constructed using bivariate \HR{} tail copula densities in the first tree and bivariate Gaussian copulas in subsequent trees.
In the truncated X-vine model, 8 out of 28 edges prefer \HR{} models in $\tree_1$, while in trees $\tree_j$ for $j\ge 2$, a total of 96 out of 113 edges favour copula families other than the Gaussian one.
However, uncertainty in sequential parameter estimation tends to accumulate across tree levels.
Consequently, the $\chi$-plot of the truncated X-vine model shows a better fit but more variability.
In contrast, the $\chi$-plot of the extremal \HR{} graphical model has less variability but seems biased towards higher tail dependence.
\chng{Fig.~\ref{fig:ChiComparisonHR} in the supplement presents a model comparison similar to Figs.~\ref{fig:ChiFlightDomain}---\ref{fig:ChiFlightEGlearn}, but focusing exclusively on the \HR{} model and using empirical tail dependence coefficients derived from the empirical extremal variogram matrix.}

\section{Discussion}
\label{sec:conc}

We have opened the door to the construction and the use of extremal dependence models based on regular vine tree sequences. For (ordinary) copulas, the methodology has been extensively developed in the literature since its inception more than two decades ago. Our contribution is to deliver the theoretical and methodological advances required to apply vine machinery in the extreme value context too. The key consists of a version of Sklar's theorem applied to tail copula densities, 
together with a telescoping product formula for regular vines.

While the proposed methodology is fully operational, it is clearly open to improvement, while many open questions remain, just as for ordinary regular vine copulas. We name just a few. The vine structure learning method inspired by \cite{dissmann2013selecting} aims to capture dependence by the first several trees in the sequence, but there is no guarantee that it retrieves the true structure. The regular vine atlas of \cite{morales2023chimera} can serve as a test bed for evaluating vine learning approaches. While the parameter estimators were shown to perform well in simulations, their large-sample theory remains to be developed, in line with \cite{hobaek2013parameter}. For the Hüsler–Reiss model, the precise connection between the variogram matrix and the correlation parameters of the bivariate Gaussian copulas remains to be elucidated and the effect of truncation on the Hüsler–Reiss precision matrix \citep{hentschel2022statistical} to be uncovered. The relation between X-vines and graphical models for extremes as in \citet{engelke2020graphical} deserves further investigation, perhaps leveraging results of \citet{zhu2022regular}. Finally, as our approach is limited to dependence structures generated from max-stable distributions, a completely open question is whether it can be extended to other settings in multivariate extreme value analysis, such as the conditional extremal dependence model \citep{heffernan2004conditional} or the geometric sample-cloud approach \citep{nolde2014geometric,simpson2021geometric}.

\section*{Acknowledgments}
\chng{The authors thank Manuel Hentschel for providing the US flight delay data and gratefully acknowledge two reviewers and an associate editor for valuable comments.} 
\chng{This work is supported by funds from the \emph{Fonds de la Recherche Scientifique – FNRS}, Belgium (grant T.0203.21).}

\setlength{\bibsep}{0pt}
\bibliographystyle{chicago}
\bibliography{biblio}

\newpage

\begin{center}
\Large\bfseries
Supplementary material for\\
``X-Vine Models for Multivariate Extremes''
\end{center}

\bigskip

\appendix

\renewcommand\thefigure{\thesection.\arabic{figure}}    

In Section~\ref{app:example}, we provide detailed illustrations of a number of concepts, results and methods for a particular five-dimensional regular vine sequence. Section~\ref{app:proofs} contains the proofs of all results in the paper. 
For readers more familiar with exponent measure densities than with tail copula densities, a number of theoretical results are reformulated in terms of exponent measure densities in Section~\ref{app:expmeas}. 
\chng{Section~\ref{sec:estimationdetail} provides more details on the estimation of the copula parameters in the trees after the first one. Expressions for tail dependence coefficients $\chi_J$ (analytical, empirical, and via Monte Carlo) are provided in Section~\ref{app:chi}.}
Finally, Section~\ref{sec:additional} contains additional numerical results, both for the Monte Carlo simulations in Section~\ref{sec:simulation-study} and the US flight data case study in Section~\ref{sec:casestudy}.


\section{Example in dimension five}
\label{app:example}

\setcounter{figure}{0}

Consider the regular vine sequence $\Vine = (\tree_1,\ldots,\tree_4)$ on $d = 5$ nodes specified by the following trees $\tree_j = (\nodes_j,\edges_j)$ for $j = 1,\ldots,4$:
\begin{align*}
\tree_1: N_1 & = \{1,2,3,4,5\} \\
\edges_1 & = \{\{1,2\}, \{2,3\},\{2,4\},\{4,5\} \} \\
\tree_2: N_2 & = 	 E_1 = \{\{1,2\}, \{2,3\},\{2,4\},\{4,5\} \} \\
\edges_2 & = \{ \{\{1,2\}, \{2,3\}\}, \{\{2,3\}, \{2,4\}\}, \{\{2,4\}, \{4,5\}\} \} \\
\tree_3: N_3 & = 	 E_2 = \{ \{\{1,2\}, \{2,3\}\}, \{\{2,3\}, \{2,4\}\}, \{\{2,4\}, \{4,5\}\} \} \\
\edges_3 & = \{ \{ \{\{1,2\}, \{2,3\}\}, \{\{2,3\}, \{2,4\} \} \}, \{ \{\{2,3\}, \{2,4\}\}, \{\{2,4\}, \{4,5\} \} \} \} \\
\tree_4: N_4 & = 	 E_3 = \{ \{ \{\{1,2\}, \{2,3\}\}, \{\{2,3\}, \{2,4\} \} \}, \{ \{\{2,3\}, \{2,4\}\}, \{\{2,4\}, \{4,5\} \} \} \} \\
\edges_4 & = \{ \{ \{ \{\{1,2\}, \{2,3\}\}, \{\{2,3\}, \{2,4\} \} \}, \{ \{\{2,3\}, \{2,4\}\}, \{\{2,4\}, \{4,5\} \} \} \} \}
\end{align*}

In tree $\tree_3$, consider the edge $e = \{a,b\} = \{ \{\{1,2\}, \{2,3\}\}, \{\{2,3\}, \{2,4\} \} \} \in \edges_3$, which is also a node in tree $\tree_4$. The proximity condition is satisfied since $|a \cap b| =  |\{\{2,3\}\}| = 1$: note that $a \cap b$ is indeed a singleton, the unique element of which is the set $\{2,3\}$. The edge~$e$ has complete union $\cunn_e = \{1,2,3,4\}$, which is the union of the conditioning sets $\cunn_a = \cbr{1,2,3}$ and $\cunn_b = \cbr{2,3,4}$. Further, $e$ has conditioning set $\cndg_e = \{2,3\} = \cunn_a \cap \cunn_b$, and conditioned set $\cndd_e = \cunn_a \symdiff \cunn_b = \{1,4\}$, which is itself the union $\cndd_e = \cndd_{e,a} \cup \cndd_{e,b}$ of $\cndd_{e,a} = \cunn_a \setminus \cunn_b = \cbr{1}$ and $\cndd_{e,b} = \cunn_b \setminus \cunn_a = \cbr{4}$.  Writing this edge in its conditioned form gives better readability: $e =(\cndd_{e,a},\cndd_{e,a};\cndg_e) = (14;23)$. Fig.~\ref{fig:rvineexa} shows the regular vine sequence, where the nodes and edges have been labeled using the conditioned forms.

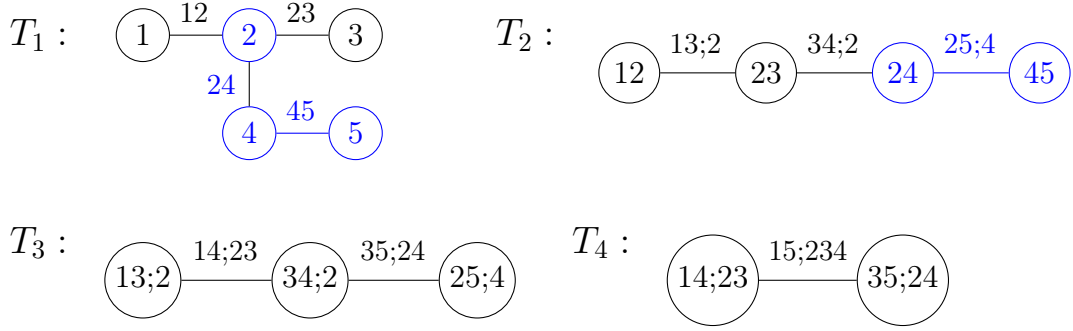
\begin{figure}[ht]
\vspace{0.5cm}
\begin{equation*}	
	\begin{tikzpicture}
	\node [circle,draw=black,fill=white,inner sep=0pt,minimum size=0.7cm] (1) at (0,0) {1};
    \node [blue,circle,draw=blue,fill=white,inner sep=0pt,minimum size=0.7cm] (2) at (1.4,0) {2};
    \node [circle,draw=black,fill=white,inner sep=0pt,minimum size=0.7cm] (3) at (2.8,0) {3};
    \node [blue,circle,draw=blue,fill=white,inner sep=0pt,minimum size=0.7cm] (4) at (1.4,-1.3) {4};
    \node [blue,circle,draw=blue,fill=white,inner sep=0pt,minimum size=0.7cm] (5) at (2.8,-1.3) {5};
    \path [draw,-] (1) edge (2);
    \path [draw,-] (2) edge (3);
    \path [draw,-] (2) edge (4);
    \path [draw,color=blue] {(4) edge (5)};
    \node [text width=0.7cm] (T1) at (-1.4,0) {\large $\tree_1:$};
    \node [text width =0.7cm] (E1) at (0.85,0.3) {\small 12};
    \node [text width =0.7cm] (E1) at (1.4+0.85,0.3) {\small 23};
    \node [text width =0.7cm,blue] (E1) at (1.2,-0.65) {\small 24};
    \node [text width =0.7cm,blue] (E1) at (1.4+0.85,-1) {\small  45};

    \node [circle,draw=black,fill=white,inner sep=0pt,minimum size=0.8cm] (12) at (6+0.4,-0.5) {12};
    \node [circle,draw=black,fill=white,inner sep=0pt,minimum size=0.8cm] (23) at (6+1.8+0.4,-0.5) {23};
    \node [blue,circle,draw=blue,fill=white,inner sep=0pt,minimum size=0.8cm] (24) at (6+3.6+0.4,-0.5) {24};
    \node [blue,circle,draw=blue,fill=white,inner sep=0pt,minimum size=0.8cm] (45) at (6+5.4+0.4,-0.5) {45};
    \path [draw,-] (12) edge (23);
    \path [draw,-] (23) edge (24);
    \path [draw,color=blue] (24) edge (45);
    \node [text width=0.7cm] (T2) at (5,0) {\large $\tree_2:$};
    \node [text width =0.7cm] (E1) at (6.4+0.9,-0.15) {\small 13;2};
    \node [text width =0.7cm] (E1) at (6.4 + 1.8 +0.9,-0.15) {\small 34;2};
    \node [blue, text width =0.7cm] (E1) at (6.4+ 3.6 + 0.9,-0.15) {\small 25;4};
    
    \node [circle,draw=black,fill=white,inner sep=0pt,minimum size=1cm] (13;2) at (0,-3.25) {13;2};
    \node [circle,draw=black,fill=white,inner sep=0pt,minimum size=1cm] (34;2) at (2.2,-3.25) {34;2};
    \node [circle,draw=black,fill=white,inner sep=0pt,minimum size=1cm] (25;4) at (4.4,-3.25) {25;4};
    \path [draw,-] (13;2) edge (34;2);
    \path [draw,-] (34;2) edge (25;4);
    \node [text width=0.7cm] (T3) at (-1.4,-2.75) {\large $\tree_3:$};
    \node [text width =0.7cm] (E1) at (1.025,-2.9) {\small 14;23};
    \node [text width =0.7cm] (E1) at (1.025+2.2,-2.9) {\small 35;24};

    \node [circle,draw=black,fill=white,inner sep=0pt,minimum size=1.2cm](14;23) at (7.5,-3.25) {14;23};
    \node [circle,draw=black,fill=white,inner sep=0pt,minimum size=1.2cm] (35;24) at (10,-3.25) {35;24};
    \path [draw,-] (14;23) edge (35;24);
    \node [text width=0.7cm] (T3) at (6,-2.75) {\large $\tree_4:$};    
    \node [text width =0.7cm] (E1) at (8.6,-2.9) {\small 15;234};
\end{tikzpicture}
  \end{equation*}
    \caption{A graphical representation of the five-dimensional regular vine sequence $\Vine$. The sub-vine $\Vine_f$ induced by $\Vine$ on the node set $A_f = \{2,4,5\}$ for edge $f = (25;4)$ in $\tree_2$ is shown in \textcolor{blue}{blue}.}
    \label{fig:rvineexa}
	\end{figure}

\subsubsection*{Vine telescoping product}
Given are scalars $\gamma_{J} \in (0, \infty)$ for non-empty $J \subseteq \cbr{1,\ldots,5}$ such that $\gamma_j = 1$ for every $j \in \cbr{1,\ldots,5}$. Also, write $\gamma_{I|J} = \gamma_{I \cup J} / \gamma_J$ for disjoint and non-empty $I, J \subset \dset$.
Developing the product formula in Lemma~\ref{lem:Rvine-a} along the given regular vine sequence $\Vine$ gives
\begin{equation}
\label{eq:gam1..5}
\gamma_{12345} = \gamma_{12} \gamma_{23} \gamma_{24} \gamma_{45} \, \cdot \, \frac{\gamma_{13|2} \gamma_{34|2} \gamma_{25|4}}{\gamma_{1|2} \gamma_{3|2} \gamma_{3|2} \gamma_{4|2} \gamma_{2|4} \gamma_{5|2}} \, \cdot \, \frac{\gamma_{14|23} \gamma_{35|24}}{\gamma_{1|23} \gamma_{4|23} \gamma_{3|24} \gamma_{5|24}} \, \cdot \, \frac{\gamma_{15|234}}{\gamma_{1|234} \gamma_{5|234}}.
\end{equation}

As an illustration of Remark~\ref{rem:telescope-f}, consider the edge $f = \{\{2,4\},\{4,5\}\} \in \edges_2$ (or $f = (25;4)$ in its conditioned form). The vine $\Vine_f$ induced by $\Vine$ on the node set $\cunn_f = \{2,4,5\}$ consists of two trees, $\Vine_f = (\tree_{f,1},\tree_{f,2})$, with edge sets $\edges_{f,1} = \{\{2,4\},\{4,5\}\}$ and $\edges_{f,2} = \{\{\{2,4\},\{4,5\}\}\}$; see Fig.~\ref{fig:rvineexa}. The product formula \eqref{eq:telescope-f} yields the identity
\[
	\gamma_{245} = \gamma_{24}\gamma_{25} \cdot \frac{\gamma_{25|4}}{\gamma_{2|4}\gamma_{5|4}}.
\]

\subsubsection*{Tail copula density decomposition}
By Theorem~\ref{thm:xvine}, any 5-variate tail copula density $r$ can be decomposed along the given regular vine sequence $\Vine$ as
\begin{align}
	\nonumber
r(\bx)  & =   r_{12} (x_1,x_2) \, r_{23} (x_2,x_3) \, r_{24} (x_2,x_4) \, r_{45} (x_4,x_5) \\
	\nonumber
& \quad \cdot \, c_{13;2} (R_{1 |2} , R_{3|2}) \, c_{34;2} (R_{3 |2} , R_{4|2})  \, c_{25;4} (R_{2 |4} , R_{5|4}) \\
	\nonumber
& \quad \cdot \, c_{14;23} (R_{1|23}, R_{4|23} ; \bx_{23})  \, c_{35;24} (R_{3|24}, R_{5|24}; \bx_{24}) \\
	\label{eq:r5}
& \quad \cdot \, c_{15;234} (R_{1|234},R_{5|234} ; \bx_{234}),
\end{align} 
where, for brevity, we write $R_{j|D} = R_{j|D} (x_j | \bx_D)$ for $j \in \dset$ and $D \subseteq \dset \setminus \cbr{j}$. The bivariate copula densities in the decomposition are given by formula~\eqref{eq:cIJ} in Sklar's theorm for tail copula densities (Proposition~\ref{lem:expdensity}).
The copulas associated with the edges in tree $\tree_2$ necessarily satisfy the simplifying assumption (see the paragraph right after Definition~\ref{def:simplif}): for instance, the value of $c_{13;2}(u_1,u_3|x_2)$ does not depend on $x_2$, which is why we write $c_{13;2}(R_{1|2}, R_{3|2})$ rather than $c_{13;2}(R_{1|2}, R_{3|2}; x_2)$ on the second line of \eqref{eq:r5}.

\subsubsection*{Recursion}
To completely specify the tail copula density $r$ in terms of bivariate components, we can calculate the arguments $R_{j|D}$ of the pair copulas through the recursive formula of Theorem~\ref{thm:recuni}, as explained also in the paragraph after that theorem. For example, in Eq.~\eqref{eq:r5}, starting from the first argument of the copula $c_{15;234}$ associated with the deepest edge, the conditional distribution function $R_{1|234}$ can be expressed in terms of the conditional distribution functions $R_{1|23}$ and $R_{4|23}$ and the bivariate copula $C_{14;23}$ via
\begin{equation}
\label{eq:R1|234}
	R_{1|234} (x_1 | \bx_{234}) 
	= C_{1|4;23} \rbr{ R_{1|23} (x_1 | \bx_{23}) \mid R_{4|23} (x_4 | \bx_{23}) ; \bx_{23} },
\end{equation}
where $C_{1|4;23}  (u_1 | u_4; \bx_{23}) 
= \int_{0}^{u_1} c_{14;23} (v, u_4;\bx_{23}) \, \diff v$.
Eq.~\eqref{eq:R1|234} is the first identity in Eq.~\eqref{eq:xcdfrecur} applied to $e = \cbr{\cbr{\cbr{1,2},\cbr{2,3}}, \cbr{\cbr{2,3},\cbr{3,4}}}$ with $a_e = 1$, $b_e = 4$, and $\cndg_e = \cbr{2,3}$.
 In turn, the conditional distribution function $R_{1|23}$ can be expressed as (similarly for $R_{4|23}$),
\[
	R_{1|23} (x_1 | \bx_{23}) 
	= C_{1|3;2} \rbr{ 
		R_{1|2} (x_1 |x_2) \mid R_{3|2} (x_3|x_2) 
	},
\]
where $C_{1|3;2} (u_1 | u_3) = \int_{0}^{u_1} c_{13;2} (v, u_3) \, \diff v$ and $R_{1|2} (x_1 | x_2) = \int_{0}^{x_1} r_{12} (t, x_2) \, \diff t$ and $R_{3|2} (x_3 | x_2) = \int_{0}^{x_3} r_{23} (x_2,t) \, \diff t$, i.e., in terms of \emph{bivariate} (tail) copula densities $r_{12}$, $r_{23}$ and $c_{13;2}$ only.

\subsubsection*{Unique-edge property}
Proofs involving regular vine sequences often operate via a recursive argument in which a node $a \in \dset$ and the edges that contain it are peeled off, leaving a regular vine sequence on $d-1$ elements; see for instance the proof of Theorem~\ref{thm:recuni}, which relies on Lemma~\ref{lem:uniqueness} below. Let us illustrate the latter for the regular vine sequence $\Vine$ in Figure~\ref{fig:rvineexa}. The conditioned set of the deepest edge $e = (15;234)$ is $\cndd_e = \cbr{1,5}$. For each of the latter two nodes and for each level $j \in \cbr{1,\ldots,d-1}$, there is a unique edge $e_j \in \edges_j$ that contains this node in its complete union. For instance, for node $1$ these are the edges $(12)$, $(13;2)$, $(14;23)$ and $(15;234)$; moreover, node $1$ belongs to the conditioned set of each of these edges. Removing node $1$ and these four edges leaves a regular vine sequence on the four remaining elements $\cbr{2,\ldots,5}$, which is the sub-vine induced by $\Vine$ on the node set $\cbr{2,\ldots,5}$.

\subsubsection*{Simulation}
Let the random vector $\IMP = (\Imp_1,\ldots,\Imp_5)$ have the inverted multivariate Pareto distribution as in Definition~\ref{def:invmultpar} associated to an X-vine tail copula density $\xdf$ as in \eqref{eq:r5}. By the rejection algorithm in Lemma~2 in \cite{engelke2020graphical}, the task of simulating $\IMP$ can be reduced to the one of simulating the random vector $\IMP^{(j)}$, by definition equal in distribution to $\rbr{\IMP \mid \Imp_j < 1}$, and this for all $j = 1,\ldots,5$. Let $\bW = (W_1,\ldots,W_5)$ be a vector of independent random variables uniformly distributed on $(0, 1)$. We generate $\IMP^{(j)}$ from $\bW$ by applying the inverse Rosenblatt transform \eqref{eq:IMPj} with the five variables $\Imp_1^{(j)},\ldots,\Imp_5^{(j)}$ arranged according to a permutation $\sigma_j$ of $\cbr{1,\ldots,5}$ such that $\sigma_j(1) = j$ and such that for each $k = 2,\ldots,5$, the node set $\cbr{\sigma_j(1),\ldots,\sigma_j(k)}$ is equal to the conditioning set of some edge $e_{j,k-1}$ in tree $\tree_{k-1}$ (Lemma~\ref{lem:perm}).

The permutations $\sigma_j$ can be constructed via a simple algorithm. Given $j \in \dset$:
\begin{enumerate}[(1)]
	\item Put $\sigma_j(1) = j$.
	\item Find $e_{j,1} \in \edges_1$ such that $j \in e_{j,1}$.
	\item Let $\sigma_j(2)$ be the unique element in $e_{j,1} \setminus \cbr{j}$.
	\item For $k = 3,\ldots,d$, do
	\begin{compactenum}[(a)]
		\item Find $e_{j,k-1} \in \edges_{k-1}$ such that $e_{j,k-2} \in e_{j,k-1}$.
		\item Let $\sigma_j(k)$ be the unique element in $\cunn_{e_{j,k-1}} \setminus \cunn_{e_{j,k-2}}$.
	\end{compactenum}
\end{enumerate}
Steps (2) and (4.a) leave freedom for choice, and this is why there may be multiple permutations $\sigma_j$ that satisfy the requirements of Lemma~\ref{lem:perm}. 
For the regular vine tree $\Vine$ in Fig.~\ref{fig:rvineexa}, five possible permutations $\sigma_j$ for $j \in \cbr{1,\ldots,5}$ and the corresponding edge sequences $e_{j,1},\ldots,e_{j,4}$ are listed in Table~\ref{tab:perm}.
Another choice for $\sigma_4$ would have been $(4,2,5,3,1)$, for instance. 

\begin{table}
	\[
	\begin{array}{cc@{\qquad}cccc}
		\toprule
		j & \sigma_j & e_{j,1} & e_{j,2} & e_{j,3} & e_{j,4} \\
		\midrule
		1 & (1,2,3,4,5) & 12 & 13;2 & 14;23 & 15;234 \\
		2 & (2,1,3,4,5) & 12 & 13;2 & 14;23 & 15;234 \\
		3 & (3,2,1,4,5) & 23 & 13;2 & 14;23 & 15;234 \\
		4 & (4,5,2,3,1) & 45 & 25;4 & 35;24 & 15;234 \\
		5 & (5,4,2,3,1) & 45 & 25;4 & 35;24 & 15;234 \\
		\bottomrule
	\end{array}
	\]
	\caption{\label{tab:perm} Five possible permutations $\sigma_j$ and the edge sequences $e_{j,1},\ldots,e_{j,4}$ for $j \in \cbr{1,\ldots,5}$ of Lemma~\ref{lem:perm} for the regular vine tree $\Vine$ in Figure~\ref{fig:rvineexa}.}
\end{table}

Spelling out the inverse Rosenblatt transform \eqref{eq:IMPj} for $j = 4$ and $\sigma_4 = (4,5,2,3,1)$ gives $\Imp_4^{(4)} = W_4$ and
\begin{align*}
	\Imp_5^{(4)} 
	&= \xcdf_{5|4}^{-1}\rbr{W_5|\Imp_4^{(4)}}, &
	\Imp_2^{(4)} 
	&= \xcdf_{2|45}^{-1}\rbr{W_2|\IMP_{45}^{(4)}}, \\
	\Imp_3^{(4)} 
	&= \xcdf_{3|245}^{-1}\rbr{W_3|\IMP_{245}^{(4)}}, &
	\Imp_1^{(4)} 
	&= \xcdf_{1|2345}^{-1}\rbr{W_1|\IMP_{2345}}.
\end{align*}
Thanks to the stated properties of the permutations $\sigma_j$, the conditional quantile functions $\xcdf^{-1}_{k|\sigma_j(\cbr{1,\ldots,k-1})}$ can be computed recursively via \eqref{eq:xqdfrecur} in terms of the bivariate ingredients of the X-vine specification of $\xdf$: for instance, starting from edge $e_{4,2} = (25;4)$ in $\tree_2$, we obtain
\[
	\xcdf_{2|45}^{-1} \rbr{u_2|\bx_{45}}
	= \xcdf_{2|4}^{-1} \rbr{
		C_{2|5;4}^{-1} \rbr{
			u_2 \mid \xcdf_{5|4}(x_5|x_4) 
		} 
		\mid x_4
	},
\]
an expression in terms of the tail copula densities $\xdf_{24}$ and $\xdf_{45}$ of edges in $\tree_1$ and the bivariate copula density $c_{25;4}$ of an edge in $\tree_2$.

\subsubsection*{Structure matrices}
For programming purposes, it is convenient to encode a regular vine sequence in matrix form. Such a regular vine matrix or structure matrix in short is a $d \times d$ upper triangular matrix  $M = (m_{i,j})_{i,j=1}^d$ with elements $m_{i,j}$ in $\dset$ and whose diagonal $(m_{1,1},\ldots,m_{d,d})$ is a permutation of $\dset$; see \citet[Section~5.5]{czado2019analyzing}. A recipe to construct such a matrix is explained in Example~5.11 of \citet{czado2019analyzing}. A recursive algorithm to do so is as follows:
\begin{enumerate}[(1)]
\item Let $e_d$ be the single edge in $\edges_{d-1}$.
\item Choose $m_{d,d} \in \cndd_{e_d}$.
\item For $j \in \cbr{1,\ldots,d-1}$:
	\begin{compactenum}[(a)]
	\item Let $e_j \in \edges_{j}$ be the unique edge that contains $m_{d,d}$ in its complete union and hence in its conditioned set (see Lemma~\ref{lem:uniqueness}).
	\item Let $m_{j,d}$ be the unique element in $\cndd_{e_j} \setminus \cbr{m_{d,d}}$.
	\end{compactenum}
\item Remove node $m_{d,d}$ and the edges $e_1,\ldots,e_{d-1}$ in Step~(3) from the regular vine sequence.
\item Proceed recursively with the resulting regular vine sequence on the $d-1$ elements $\dset \setminus \cbr{m_{d,d}}$ to fill up the $(d-1)$th column $m_{1,d-1},\ldots,m_{d-1,d-1}$; and so on.
\end{enumerate}
Note that step~(3a) involves the unique-edge property described in a previous paragraph.
Step~(2) of the algorithm involves an arbitrary choice, and this is why the same regular vine sequence can be encoded by several structure matrices. Given an element $j \in \dset$, it is always possible to choose the diagonal elements $m_{2,2},\ldots,m_{d,d}$ to be different from $j$, thus ensuring $m_{1,1} = j$. The diagonal of the resulting structure matrix $M$ then forms a permutation $\sigma_j$ meeting the requirements of Lemma~\ref{lem:perm}.

For the regular vine sequence in Fig.~\ref{fig:rvineexa}, some possible structure matrices $M$ are
\begin{equation*}
	\begin{bmatrix}
		1 & 1 & 2 & 2 & 4 \\
		  & 2 & 1 & 3 & 2 \\
		  &   & 3 & 1 & 3 \\
		  &   &   & 4 & 1 \\
		  &   &   &   & 5
	\end{bmatrix},
	\begin{bmatrix}
		2 & 2 & 2 & 2 & 4 \\
		  & 1 & 1 & 3 & 2 \\
		  &   & 3 & 1 & 3 \\
		  &   &   & 4 & 1 \\
		  &   &   &   & 5
	\end{bmatrix},
	\begin{bmatrix}
		3 & 3 & 2 & 2 & 4 \\
		  & 2 & 3 & 3 & 2 \\
		  &   & 1 & 1 & 3 \\
		  &   &   & 4 & 1 \\
		  &   &   &   & 5
	\end{bmatrix},
	\begin{bmatrix}
		4 & 4 & 4 & 2 & 2 \\
		  & 5 & 5 & 4 & 3 \\
		  &   & 2 & 5 & 4 \\
		  &   &   & 3 & 5 \\
		  &   &   &   & 1
	\end{bmatrix},
	\begin{bmatrix}
		5 & 5 & 4 & 2 & 2 \\
		  & 4 & 5 & 4 & 3 \\
		  &   & 2 & 5 & 4 \\
		  &   &   & 3 & 5 \\
		  &   &   &   & 1
	\end{bmatrix}.
\end{equation*}
The diagonals of these matrices are equal to the permutations $\sigma_j$ in Table~\ref{tab:perm}.

\subsubsection*{Truncated X-vine models}
To encode a truncated regular vine sequence $\Vine = (\tree_1,\ldots,\tree_q)$ on $d$ elements with truncation level $q \in \cbr{1,\ldots,d-2}$ (Definition~\ref{def:truncvine}), one possible way would be to complete it to it a (full) regular vine sequence $\Vine' = (\tree_1,\ldots,\tree_{d-1})$ and use a structure matrix of the latter, imputing independence copulas for the resulting X-vine model. This would be wasteful, however, especially if $d$ is large and $q$ is small. Instead, we avoid such arbitrary and unnecessary completions and work with truncated structure matrices instead. The form of such a matrix is the same as of a structure matrix in the previous paragraph, but with zeroes on the unspecified elements above the diagonal.

Consider for instance the truncated regular vine sequence $(\tree_1,\tree_2)$ on $d=5$ elements for $\tree_1$ and $\tree_2$ as in Fig.~\ref{fig:5d_Xvine}. One possible structure matrix would be
\[
	\begin{bmatrix}
	1 	& 1 & 2 & 2 & 4 \\
		& 2 & 1 & 3 & 2 \\
		&   & 3 & 0 & 0 \\
		&   &   & 4 & 0 \\
		&   &   &   & 5
	\end{bmatrix},
\]
which is the first of the five structure matrices in the previous paragraph but with zeroes on positions that correspond to edges in trees $\tree_j$ with $j > q = 2$. Such a matrix can be directly obtained from the truncated regular vine sequence by an algorithm similar to the one in the previous paragraph:
\begin{enumerate}[(1)]
\item Choose $m_{d,d} \in \dset$ such that there is only a single edge $e_{q} \in \edges_q$ with $m_{d,d} \in \cunn_{e_q}$ and hence $m_{d,d} \in \cndd_{e_q}$.
\item For $j \in \cbr{1,\ldots,q}$:
	\begin{compactenum}[(a)]
	\item Let $e_j \in \edges_j$ be the uniqe edge that contains $m_{d,d}$ in its complete union and hence in its conditioned set.
	\item Let $m_{j,d}$ be the unique element in $\cndd_{e_j} \setminus \cbr{m_{d,d}}$.
	\end{compactenum}
\item Remove node $m_{d,d}$ and the edges $e_1,\ldots,e_q$ from Step~(2) from the truncated regular vine sequence.
\item Proceed recursively with the resulting regular vine sequence, truncated or not.
\end{enumerate}

\subsubsection*{Recursive estimation}
We illustrate the recursive nature of the sequential estimation procedure in Section~\ref{sec:estimation1} for the regular vine sequence in Fig.~\ref{fig:rvineexa}.
For the edge $e = (25;4)$ in $\edges_2$, we consider pseudo-observations
\begin{align*}
	\hU_{i,2;4}
	&= \xcdf_{2|4} \rbr{
		\hZ_{i,2} \mid \hZ_{i,4} ; \; \htheta_{24}
	}, &
	\hU_{i,5;4}
	&= \xcdf_{5|4} \rbr{
		\hZ_{i,5} \mid \hZ_{i,4} ; \; \htheta_{45}
	}
\end{align*}
for $i \in N_4$, i.e., for those observations with a large value in variable $j=4$. Here, the tail copula parameters $\htheta_{24}$ and $\htheta_{45}$ have already been estimated in a previous step based on the pairs $(\hZ_{i,2}, \hZ_{i,4})$ for $i \in N_2 \cup N_4$ and the pairs $(\hZ_{i,4}, \hZ_{i,5})$ for $i \in N_4 \cup N_5$, respectively, using the averaging procedure based on the pseudo-likelihoods \eqref{eq:lik}. For instance, we have $\htheta_{24} = (\htheta_{24}^{(2)}+\htheta_{24}^{(4)})/2$ where $\htheta_{24}^{(2)}$ is estimated based on the pairs $(\hZ_{i,2},\hZ_{i,4})$ for $i \in N_2$, while $\htheta_{24}^{(4)}$ is based on the same pairs but for $i \in N_4$.

Next, consider the edge $e = (35;24)$ in $\edges_3$, which itself joins the edges $f = (34;2)$ and $g = (25;4)$ in $\edges_2$. In view of Proposition~\ref{prop:Z2c}(ii) with $I = \cbr{3,5}$ and $J = \cbr{2,4}$, we define pseudo-observations $(\hU_{i,3;24}, \hU_{i,5;24})$ for $i \in N_{24} = N_2 \cap N_4$ from $c_{3,5;24}$ by
\begin{align*}
	\hU_{i,3;24} 
	&= \xcdf_{3|24} \rbr{
		\hZ_{i,3} \mid \hbZ_{i,24} ; \;
		\htheta(\edges_{1:2})
	} \\
	&= C_{3|4;2} \rbr{ 
		\xcdf_{3|2} \rbr{
			\hZ_{i,3} \mid \hZ_{i,2} ; \; 
			\htheta_{23}
		} \mid 
		\xcdf_{4|2} \rbr{
			\hZ_{i,4} \mid \hZ_{i,2} ; \;
			\htheta_{24}
		} ; \; \htheta_{34;2}
	}
\end{align*}
and
\begin{align*}
	\hU_{i,5;24} 
	&= \xcdf_{5|24} \rbr{
		\hZ_{i,5} \mid \hbZ_{i,24} ; \;
		\htheta(\edges_{1:2})
	} \\
	&= C_{5|2;4} \rbr{ 
		\xcdf_{5|4} \rbr{
			\hZ_{i,5} \mid \hZ_{i,4} ; \;
			\htheta_{45}
		} \mid 
		\xcdf_{2|4} \rbr{
			\hZ_{i,2} \mid \hZ_{i,4} ; \;
			\htheta_{24}
		} ; \; \htheta_{25;4}
	}.
\end{align*}
Note that for the pair $\cbr{2,4}$, both conditional distribution functions $\xcdf_{2|4}$ and $\xcdf_{4|2}$ are required.

\section{Proofs}
\label{app:proofs}

\subsection{Proofs for Section~\ref{sec:background}}
\label{app:proofs:background}

\chng{Recall $\EE = (0, \infty]^d \setminus \{\binfty\}$ for some dimension $d \ge 1$.
\begin{lemma}\label{lem:allRok}
Any Borel measure $\xcdf$ on $\EE$ satisfying \eqref{eq:margin} and \eqref{eq:Lambdahomo} is a tail copula measure, i.e., is the limit in \eqref{eq:C2Lambda} for some copula $C$. 
\end{lemma}
\begin{proof}[Proof of Lemma \ref{lem:allRok}\pnt]
Given $\xcdf$, define the probability measure $P$ on $\LL = \cbr{\bx \in (0,\infty]^d : \min \bx < 1}$ by $P(B) = \xcdf(B \cap \LL)/\xcdf(\LL)$ for Borel sets $B$. (Note that $P$ is an inverted multivariate Pareto distribution as in Definition~\ref{def:invmultpar}, although we do not use that fact in this proof.) For $j \in \dset$ and $x_j \in [0, 1]$, we have, by \eqref{eq:margin},
\[
	F_j(x_j) 
	= P(\{ \by \in \EE : y_j \le x_j \}) 
	= x_j / \xcdf(\LL).
\]
Note that $\xcdf(\LL) \ge 1$ by \eqref{eq:margin}.
For $u_j \in [0, 1/\xcdf(\LL)]$, we have therefore
\[
	F_j^{-1}(u_j) = u_j \, \xcdf(\LL).
\] 
Let $C$ be the copula of $P$. For $\bu \in [0, 1/\xcdf(\LL)]^d$, we find, by Sklar's theorem,
\[
	C(\bu) 
	= \xcdf(\bu \, \xcdf(\LL)) / \xcdf(\LL)
	= \xcdf(\bu)
\]
in view of the homogeneity of $\xcdf$ in \eqref{eq:Lambdahomo}. The same equation then implies that $t^{-1} \cdot C(t \bx) = \xcdf(\bx)$ for all $\bx \in (0, \infty)^d$ and all sufficiently small $t > 0$. If some (but not all) coordinates of $\bx$ are infinity, we repeat the same argument on the appropriate lower-dimensional margins of $C$. We conclude that \eqref{eq:C2Lambda} holds, so that $\xcdf$ is indeed a tail copula measure.
\end{proof}}

\chng{By \eqref{eq:Lambdahomo} and \eqref{eq:Lam2nu}, the exponent measure $\Lambda$ is homogeneous of order $-1$ and thus determined by the \emph{angular measure} $H$ on the unit simplex $\Sd = \cbr{ \bx \in [0, 1]^d : \norm{\bx}_1 = 1}$ via
\[
\expmeas\rbr{\cbr{ 
		\bx \in [0, \infty)^d : \; 
		\norm{\bx}_1 > t, \;
		\bx / \norm{\bx}_1 \in B
	}
} = t^{-1} H(B),
\]
for $t > 0$ and Borel sets $B \subseteq \Sd$, with $\norm{\bx}_1 = \sum_{j=1}^d |x_j|$ for $\bx \in \Rd$.
Under the above smoothness assumption, the angular measure $H$ is concentrated on the relative interior of the unit simplex $\Sd$ and its density $h$ satisfies $h(\bw) = \nupdf(\bw)$ for $\bw \in \Sd$ \citep{dombry2016asymptotic}.
In Lemma~\ref{lem:lambdah} below, we give a proof of this fact and link the angular measure density $h$ to the tail copula density $\xdf$.}

\begin{lemma}\label{lem:lambdah}
	We have $h(\bw) = \nupdf(\bw)$ for $\bw \in \Sd$ and thus also
	\[
	\xdf(\bx) 
	= \nupdf(1/\bx) \prod_{j=1}^d x_j^{-2} 
	= \norm{1/\bx}_1^{-d-1} \cdot h \rbr{\frac{1/\bx}{\norm{1/\bx}_1}} 
	\prod_{j=1}^d x_j^{-2},
	\qquad \bx \in (0, \infty)^d.
	\]
\end{lemma}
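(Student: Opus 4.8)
The plan is to obtain the identity $h(\bw) = \nupdf(\bw)$ on $\Sd$ by computing the push-forward of the exponent measure $\nu$ under the polar-coordinate map $\by \mapsto (\norm{\by}_1, \by/\norm{\by}_1)$ and matching densities. First I would recall that $\nu$ is concentrated on $(0,\infty)^d$ with density $\nupdf$ (Eq.~\eqref{eq:nupdf}) and is homogeneous of order $-1$, and that the angular measure $H$ on $\Sd$ is defined through $\expmeas(\{ \bx : \norm{\bx}_1 > t, \ \bx/\norm{\bx}_1 \in B \}) = t^{-1} H(B)$. Introduce the change of variables $\by = s \bw$ with $s = \norm{\by}_1 \in (0,\infty)$ and $\bw = \by/\norm{\by}_1$ in the relative interior of $\Sd$; since $\Sd$ is $(d-1)$-dimensional, parametrise it by $d-1$ of the coordinates, say $\bw_{\setminus d}$, with $w_d = 1 - \sum_{j \ne d} w_j$. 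The Jacobian of $(s, \bw_{\setminus d}) \mapsto \by$ is a standard computation giving $|\det| = s^{d-1}$ (up to the constant from the simplex parametrisation, which is absorbed consistently on both sides). Using homogeneity, $\nupdf(s\bw) = s^{-1-d}\nupdf(\bw)$, so the density of $\nu$ in the new coordinates factorises as $\nupdf(s\bw)\, s^{d-1} = s^{-2}\, \nupdf(\bw)$. Integrating out $s$ over $(t,\infty)$ yields $t^{-1}$ times $\nupdf(\bw)$ as the density (in $\bw$) of $H$ on $\Sd$; comparing with the definition of $H$ gives $h(\bw) = \nupdf(\bw)$ for $\bw \in \Sd$.

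Next, the two displayed formulas for $\xdf$ are immediate consequences. The first equality $\xdf(\bx) = \nupdf(1/\bx)\prod_{j=1}^d x_j^{-2}$ is just a re-arrangement of Eq.~\eqref{eq:nupdf}: there it is stated as $\nupdf(\by) = \xdf(1/\by)\prod_j y_j^{-2}$, and substituting $\by = 1/\bx$ (so $y_j^{-2} = x_j^2$, hence $\xdf(\bx) = \nupdf(1/\bx)\prod_j x_j^{-2}$) gives it directly. For the second equality, write $\nupdf(1/\bx)$ using homogeneity of order $-1$: with $\bm{v} = 1/\bx$, set $s = \norm{\bm{v}}_1$ and $\bw = \bm{v}/s \in \Sd$, so $\nupdf(\bm{v}) = \nupdf(s\bw) = s^{-1}\nupdf(\bw) = \norm{1/\bx}_1^{-1} h\!\left(\frac{1/\bx}{\norm{1/\bx}_1}\right)$ by the first part of the lemma. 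Wait --- the exponent in the statement is $\norm{1/\bx}_1^{-d-1}$, not $\norm{1/\bx}_1^{-1}$; this is because $h$ as used here is the density of $H$ with respect to the $(d-1)$-dimensional measure on $\Sd$, while $\nupdf$ restricted to $\Sd$ must be interpreted via the same parametrisation, and the discrepancy is exactly the factor $s^{-d}$ coming from the Jacobian $s^{d-1}$ together with the order-$(-1)$ homogeneity written in the $d$-dimensional coordinates $\bm{v}$; I would keep careful track of this so the homogeneity bookkeeping produces $\nupdf(\bm{v}) = \norm{\bm{v}}_1^{-d-1} h(\bm{v}/\norm{\bm{v}}_1)$, and then substitute $\bm{v} = 1/\bx$ and multiply by $\prod_j x_j^{-2}$.

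The main obstacle is precisely this bookkeeping of homogeneity degrees and Jacobian factors in the polar decomposition: one must be scrupulous about whether $\nupdf$ and $h$ are taken as densities with respect to $d$-dimensional Lebesgue measure near $\Sd$, as densities on the cone, or as densities on the $(d-1)$-dimensional simplex, since the clean identity $h(\bw)=\nupdf(\bw)$ and the appearance of the exponent $-d-1$ both hinge on this convention. I would cite \citet{dombry2016asymptotic} for the convention under which $h(\bw) = \nupdf(\bw)$ holds verbatim and then verify the second display by a direct homogeneity substitution, so that the computation reduces to checking exponents rather than re-deriving the polar-coordinate Jacobian from scratch.
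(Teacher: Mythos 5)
Your proof is correct, but it follows a genuinely different route from the paper's. The paper does not perform the polar-coordinate change of variables itself: it introduces the exponent measure function $V(\by) = \expmeas([\bzero,\binfty) \setminus [\bzero,\by])$, quotes the known Coles--Tawn identity $V_{1:d}(\by) = -\norm{\by}_1^{-d-1} h(\by/\norm{\by}_1)$ for the mixed partial derivative, and then computes the same derivative a second way via inclusion--exclusion over which coordinates exceed (only the term $J = \dset$ survives differentiation in all $d$ variables, giving $V_{1:d}(\by) = -\nupdf(\by)$); matching the two expressions yields $\nupdf(\by) = \norm{\by}_1^{-d-1} h(\by/\norm{\by}_1)$ and hence $h = \nupdf$ on $\Sd$. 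You instead re-derive the polar factorisation from scratch: push $\nu$ forward under $\by \mapsto (\norm{\by}_1, \by/\norm{\by}_1)$, use the Jacobian $s^{d-1}$ and the order-$(-d-1)$ homogeneity of the density to get $s^{-2}\nupdf(\bw)$, and integrate out $s$. Your version is more self-contained (it effectively proves the Coles--Tawn formula rather than citing it), at the cost of having to be scrupulous about the simplex parametrisation --- a point you correctly flag, and where your convention (density of $H$ with respect to $\diff\bw_{\setminus d}$, exact Jacobian $s^{d-1}$ with no extra constant) does match the one the paper and \citet{dombry2016asymptotic} use. One small blemish: in your derivation of the second display you first write $\nupdf(s\bw) = s^{-1}\nupdf(\bw)$, confusing the order-$(-1)$ homogeneity of the \emph{measure} with the order-$(-d-1)$ homogeneity of its \emph{density}; you catch and repair this yourself, and the corrected identity $\nupdf(\bm{v}) = \norm{\bm{v}}_1^{-d-1} h(\bm{v}/\norm{\bm{v}}_1)$ combined with the inversion of Eq.~\eqref{eq:nupdf} gives exactly the lemma's display, in agreement with the paper.
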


\begin{proof}[Proof of Lemma \ref{lem:lambdah}\pnt]
Let $V(\by) = \expmeas([\bzero,\binfty) \setminus [\bzero,\by])$ for $\by \in (0, \infty)^d$, the exponent measure function. For $V_{1:d}(\by) := \partial^d V/(\partial y_1 \cdots \partial y_d)$, \citet{coles1991modelling} show that 
\[
	V_{1:d}(\by) 
	= - \norm{\by}_1^{-d-1} h \rbr{ \by/\norm{\by}_1 }.
\]
On the other hand, by the inclusion-exclusion formula we also have
\begin{align*}
	V_{1:d}(\by)
	&= \sum_{\emptyset \ne J \subseteq \dset}
	(-1)^{|J|-1}
	\frac{\partial^d \nu(\{\bz \ge \bzero : \forall j \in J, z_j > y_j \})}{\partial y_1 \cdots \partial y_d} \\
	&= (-1)^{d-1} (-1)^d \nupdf(\by)
	= - \nupdf(\by).
\end{align*}
We find 
\[
	\nupdf(\by)
	= \norm{\by}_1^{-d-1} h \rbr{ \by/\norm{\by}_1 },
	\qquad \by \in (0, \infty)^d.
\]
If $\norm{\by}_1 = 1$, this specializes to $\nupdf(\by) = h(\by)$. The connection with the tail copula density $\xdf$ follows from \eqref{eq:nupdf}, which can be inverted to
\[
	\xdf(\bx)
	= \nupdf(1/\bx) \prod_{j=1}^d x_j^{-2},
	\qquad \bx \in (0, \infty)^d.
	\qedhere
\]
\end{proof}

\paragraph{Properties of regular vine sequences.}
\chng{Recall Definition~\ref{def:ucd} on the complete union $\cunn_e$, conditioning set $\cndg_e$ and conditioned sets $\cndd_e = \cndd_{e,a} \cup \cndd_{e,b}$ of an edge $e \in \bigcup_{j=2}^d \edges_j$ in a regular vine $\Vine = (\tree_1,\ldots,\tree_{d-1})$ on $d$ elements. If $j \ge 2$, then $\cndg_a \cup \cndg_b \subseteq \cndg_e$; indeed, by proximity, $a \cap b = \cbr{k}$ with $k \in \nodes_{j-1}$, and $\cndg_a \subset \cunn_k \subset \cunn_a \cap \cunn_b = \cndg_e$; similarly $\cndg_b \subset \cndg_e$. Further, the conditioned sets are related by $\cndd_{e,a} = \cndd_e \cap \cndd_a$ and $\cndd_{e,b} = \cndd_e \cap \cndd_b$. To see this, note that, by proximity, we have $a = \cbr{a_1, k}$ and $b = \cbr{b_1, k}$ for three distinct elements $a_1, b_1, k \in \nodes_{j-1}$, and 
$\cndd_{e,a} 
= \cunn_a \setminus \cunn_b 
= \rbr{\cunn_{a_1} \cup \cunn_{k}} \setminus \rbr{\cunn_{b_1} \cup \cunn_{k}}
\subseteq \cunn_{a_1} \setminus \cunn_{k} 
= \cndd_{a,a_1}$. 
Since both the left- and right-hand side are singletons, we must have $\cndd_{e,a} = \cndd_{a,a_1} = \cndd_e \cap \cndd_a$; similarly $\cndd_{e,b} = \cndd_{b,b_1} = \cndd_e \cap \cndd_b$.}

The following lemma provides the basis for recursive arguments concerning regular vines.

\begin{lemma}
	\label{lem:uniqueness}	
	Let $\Vine = (\tree_j)_{j=1}^{d-1}$ be a regular vine sequence on $d$ elements, with $d \ge 2$. Let $e$ be the unique edge in $\edges_{d-1}$. For every $a \in \cndd_e$ and every $j \in \cbr{1,\ldots,d-1}$, there is a unique edge $e_j \in \edges_{j}$ such that $a \in \cunn_{e_j}$, and then in fact $a \in \cndd_{e_j}$. 
\end{lemma}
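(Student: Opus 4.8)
The plan is to prove the statement by downward induction on the level $j$, running from $j = d-1$ down to $j = 1$, and to establish at each level \emph{both} that $\edges_j$ contains a unique edge $e_j$ with $a \in \cunn_{e_j}$ \emph{and} that, for this edge, $a \in \cndd_{e_j}$. The base case $j = d-1$ is immediate: $\tree_{d-1}$ has the single edge $e$, whose complete union is all of $\dset$ and hence contains $a$, and $a \in \cndd_e$ holds by hypothesis, so one takes $e_{d-1} := e$. For $d = 2$ this is already the whole proof; for $d \ge 3$ one applies the inductive step below for $j = d-1, d-2, \dots, 2$, which reaches every level $1,\dots,d-1$.

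For the inductive step, fix $j$ with $2 \le j \le d-1$ and suppose $e_j$ is the unique edge of $\edges_j$ whose complete union contains $a$, with $a \in \cndd_{e_j}$. Write $e_j = \{f,g\}$ with $f,g \in \nodes_j = \edges_{j-1}$, so that $\cunn_{e_j} = \cunn_f \cup \cunn_g$ and $\cndd_{e_j}$ is the disjoint union of the singletons $\cndd_{e_j,f} = \cunn_f \setminus \cunn_g$ and $\cndd_{e_j,g} = \cunn_g \setminus \cunn_f$. Since $a \in \cndd_{e_j}$ lies in exactly one of these, after possibly swapping the names of $f$ and $g$ I may assume $a \in \cunn_f \setminus \cunn_g$; in particular $a \in \cunn_f$ and $a \notin \cunn_g$. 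Using the identity $\cndd_{e_j,f} = \cndd_{e_j} \cap \cndd_f$ recorded right after Definition~\ref{def:ucd}, I then get $a \in \cndd_f$. Hence $e_{j-1} := f \in \edges_{j-1}$ satisfies $a \in \cndd_{e_{j-1}} \subseteq \cunn_{e_{j-1}}$, which gives existence together with the conditioned-set property at level $j-1$.

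It remains to prove uniqueness at level $j-1$. Suppose $h \in \edges_{j-1}$ with $a \in \cunn_h$ and $h \ne f$, and take the path $h = h_0, h_1, \dots, h_m = f$ joining $h$ and $f$ as nodes of the tree $\tree_j$; it has $m \ge 1$ edges since $h \ne f$. Its first edge $\{h_0,h_1\} \in \edges_j$ has complete union $\cunn_{h_0} \cup \cunn_{h_1} \supseteq \cunn_h \ni a$, so by the inductive hypothesis $\{h_0,h_1\} = e_j = \{f,g\}$; as $h_0 = h \ne f$, this forces $h_0 = g$ and hence $a \in \cunn_g$, contradicting $a \notin \cunn_g$. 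Thus $h = f$, which closes the induction. I expect the only delicate point to be the bookkeeping in the middle paragraph, i.e.\ upgrading ``$a \in \cunn_f \setminus \cunn_g$'' to ``$a \in \cndd_f$'' rather than merely ``$a \in \cunn_f$''; this is exactly what the conditioned/conditioning-set relations proved after Definition~\ref{def:ucd} deliver (alternatively, one may argue $\cndg_f \subseteq \cndg_{e_j} = \cunn_f \cap \cunn_g \subseteq \cunn_g$, so $a \notin \cndg_f$). Once that is in hand, the uniqueness step uses nothing beyond connectedness of each tree $\tree_j$.
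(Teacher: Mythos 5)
Your proof is correct, but it is organised quite differently from the paper's. The paper proves existence and uniqueness separately: existence follows from the global observation that $\bigcup_{e' \in \edges_j} \cunn_{e'} = \dset$ at every level, and uniqueness is a proof by contradiction that propagates \emph{upward} — if two distinct edges at some level $j \le d-2$ both contained $a$ in their complete unions, then either they are joined by an edge $h$ one level up, forcing $a \in \cndg_h$ and hence (by nestedness of conditioning sets) $a \in \cndg_e$, contradicting $a \in \cndd_e$, or the duplication persists one level up, and the process must terminate at the single edge of $\tree_{d-1}$. You instead run a single \emph{downward} induction from $j = d-1$ to $j=1$, carrying uniqueness and the conditioned-set membership together as the induction hypothesis: existence at level $j-1$ is obtained constructively by descending into the child $f$ of $e_j$ with $a \in \cndd_{e_j,f} = \cndd_{e_j} \cap \cndd_f$, and uniqueness at level $j-1$ follows from connectivity of $\tree_j$ plus uniqueness at level $j$ applied to the first edge of the path from a putative second edge $h$ to $f$. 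Both arguments are sound; yours has the advantage of being a clean forward induction that identifies the chain $e_{d-1} \ni e_{d-2} \ni \cdots \ni e_1$ explicitly (which is also how the lemma is actually \emph{used} elsewhere in the paper, e.g.\ in the proofs of Lemma~\ref{lem:Rvine-a} and Theorem~\ref{thm:recuni}), while the paper's version isolates the role of the nested conditioning sets more transparently. One small remark: your parenthetical alternative $\cndg_f \subseteq \cndg_{e_j} \subseteq \cunn_g$ is arguably the more robust way to upgrade $a \in \cunn_f \setminus \cunn_g$ to $a \in \cndd_f$, since it relies only on the inclusion $\cndg_f \subseteq \cndg_{e_j}$ recorded after Definition~\ref{def:ucd} and avoids any case distinction about which of the two singletons of $\cndd_{e_j}$ contains $a$.
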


\begin{proof}[Proof of Lemma~\ref{lem:uniqueness}\pnt]
	The existence is clear:	at every level $j \in \cbr{1,\ldots,d-1}$, the union $\bigcup_{e' \in \edges_j} \cunn_{e'}$ of the complete unions is equal to $\dset$. Indeed, this holds for $\edges_1$ since $\tree_1$ is a tree on $\dset$ and it holds for the other edge sets by the recursive definition of the complete unions.
	
	The uniqueness is specific to the two nodes in the conditioned set of the single edge $e$ in $\edges_{d-1}$. Suppose that for some $a \in \cndd_e$ there is an edge set $\edges_j$, with $j \le d-2$, for which there are two different edges, say $f$ and $g$, that both contain $a$ in their complete unions. If $f$ and $g$ are linked up by some edge $h$ in the next tree $\tree_{j+1}$, then $a \in \cndg_h$ and thus, eventually $a \in \cndg_e$ (see the paragraph following Definition~\ref{def:ucd}), a contradiction. Otherwise, there exist two different edges $f'$ and $g'$ in $\edges_{j+1}$ that contain $a$ in their complete unions, and the argument can be repeated, if needed until the last tree has been reached. Finally, since the conditioning sets are nested and since $a$ does not belong to the conditioning set of $e$, it does not belong to the conditioning set of another edge either.
\end{proof}

\begin{proof}[Proof of Lemma \ref{lem:Rvine-a}\pnt]
Identity \eqref{eq:Rvine-a-cond} follows from
\[
	\frac{\gamma_{\cndg_e} \cdot \gamma_{\cunn_e}}%
	{\gamma_{\cunn_a} \cdot \gamma_{\cunn_b}}
	=
	\frac{\dfrac{\gamma_{\cunn_e}}{\gamma_{\cndg_e}}}%
	{\dfrac{\gamma_{\cunn_a}}{\gamma_{\cndg_e}} \cdot \dfrac{\gamma_{\cunn_b}}{\gamma_{\cndg_e}}}
\]
together with $\cunn_a = \cbr{a_e} \cup \cndg_e$, $\cunn_b = \cbr{b_e} \cup \cndg_e$, and $\cunn_e = \cbr{a_e,b_e} \cup \cndg_e$.

The proof of \eqref{eq:Rvine-a} is by induction on $d$. For $d = 3$, we can relabel the indices so that the first edge set $\edges_1 = \cbr{ \cbr{1,2}, \cbr{2,3} }$ while $\edges_2$ contains the single edge $(13;2)$. The formula then states that
	\[
	\gamma_{123} = \gamma_{12} \cdot \gamma_{23} \cdot \frac{\gamma_2 \cdot \gamma_{123}}{\gamma_{12} \cdot \gamma_{23}},
	\]
which is obviously true since $\gamma_2 = 1$ by assumption.
	
Let $d \ge 4$ and suppose the formula is true for regular vine sequences of dimension up to $d-1$. 
Let $e = (a_e,b_e;\cndg_e)$ be the single edge in the deepest tree $\tree_{d-1} = (\nodes_{d-1}, \edges_{d-1})$ of the vine. By Lemma~\ref{lem:uniqueness}, there exist unique edges $e_j \in \edges_j$ for $j \in \cbr{1,\ldots,d-1}$ with $e_{d-1} = e$ such that
\[
	a_e \in e_1 \in e_2 \in \ldots \in e_{d-1}.
\]
Moreover, the node $a_e$ belongs to the conditioned set of each edge $e_j$, so we can write $\cndd_{e_j} = \cbr{a_e, b_j}$ for $j \in \cbr{1,\ldots,d-1}$ and some $b_j \in \dset \setminus \cbr{a_e}$.
\chng{In particular, $a_e$ is a leaf node in $\tree_1$.}
For $j \in \{2,\ldots,d-1\}$, we have $e_j = \cbr{e_{j-1}, f_{j-1}}$ for some edge $f_{j-1} \in \edges_{j-1}$.
	
\chng{Thanks to the stated properties of $a_e$,} the regular vine sequence $\Vine$ induces a 
regular vine sequence $\bar{\Vine}$ on $\bar{\nodes}_1 = \dset \setminus \cbr{a_e}$ with edge sets $\bar{E}_j = E_j \setminus \cbr{e_j}$ for $j \in \cbr{1,\ldots,d-2}$; this is the sub-vine induced by $\Vine$ on $\dset \setminus \cbr{a_e}$. 
By the induction hypothesis, we have
\[
	\gamma_{\dset \setminus \cbr{a_e}}
	= \prod_{\bar{e} \in \bar{\edges}_1} \gamma_{\bar{e}} \cdot
	\prod_{j=2}^{d-2} \prod_{\bar{e} = \cbr{\bar{f}, \bar{g}} \in \bar{\edges}_j}
	\frac{\gamma_{\cndg_{\bar{e}}} \cdot \gamma_{\cunn_{\bar{e}}}}{\gamma_{\cunn_{\bar{f}}} \cdot \gamma_{\cunn_{\bar{g}}}}.
\]
The product of the additional factors in \eqref{eq:Rvine-a} is
\[
	\gamma_{e_1} 
	\cdot 
	\prod_{j=2}^{d-1} 
	\frac{\gamma_{\cndg_{e_j}} \cdot \gamma_{\cunn_{e_j}}}%
	{\gamma_{\cunn_{e_{j-1}}} \cdot \gamma_{\cunn_{f_{j-1}}}}
	=
	\rbr{ 
		\gamma_{e_1} 
		\cdot 
		\prod_{j=2}^{d-1}
		\frac{\gamma_{\cunn_{e_j}}}{\gamma_{\cunn_{e_{j-1}}}}
	} 
	\cdot
	\prod_{j=2}^{d-1} 
	\frac{\gamma_{\cndg_{e_j}}}{\gamma_{\cunn_{f_{j-1}}}}.
\]
It is sufficient to show that this product is equal to $\gamma_{\dset} / \gamma_{\dset \setminus \cbr{a_e}}$. On the one hand, since $\cunn_{e_1} = e_1$ and $\cunn_{e_{d-1}} = \dset$, we have the telescoping product
\[
	\gamma_{e_1} \cdot
	\prod_{j=2}^{d-1} \frac{\gamma_{\cunn_{e_j}}}{\gamma_{\cunn_{e_{j-1}}}}
	= \gamma_{\cunn_{e_{d-1}}} 
	= \gamma_{\dset}.
\]
On the other hand, we have
\[
	\prod_{j=2}^{d-1} 
	\frac{\gamma_{\cndg_{e_j}}}{\gamma_{\cunn_{f_{j-1}}}}
	=
	\gamma_{\cndg_{e_2}}
	\cdot
	\frac{\prod_{j=3}^{d-1} \gamma_{\cndg_{e_j}}}{\prod_{j=2}^{d-2} \gamma_{\cunn_{f_{j-1}}}}
	\cdot
	\frac{1}{\gamma_{\cunn_{f_{d-2}}}}.
\]
The first factor is $\gamma_{\cndg_{e_2}} = 1$ since $\cndg_{e_2}$ is a singleton, while the last factor is $1/\gamma_{\cunn_{f_{d-2}}} = 1/\gamma_{\dset \setminus \cbr{a_e}}$. It remains to show that the middle factor is equal to $1$ too. To do so, write it as
\[
	\prod_{j=3}^{d-1} \frac{\gamma_{\cndg_{e_j}}}{\gamma_{\cunn_{f_{j-2}}}},
\]
which is indeed equal to $1$, since $\cndg_{e_j} = \cunn_{f_{j-2}}$ for $j \in \cbr{3,\ldots,d-1}$.
Indeed, to see the latter identity, write 
\[
	e_j 
	= \cbr{e_{j-1}, f_{j-1}} 
	= \cbr{\cbr{e_{j-2}, f_{j-2}}, f_{j-1}}. 
\]
Recall $\cndd_{e_j} = \cbr{a_e, b_j}$. Then $\cndg_{e_j} = \cunn_{e_j} \setminus \cbr{a_e, b_j}$ as well as $\cunn_{e_j} = \cunn_{e_{j-1}} \cup \cbr{b_j}$ and $\cunn_{j-1} = \cunn_{f_{j-2}} \cup \cbr{a}$, the unions being disjoint. It follows that $\cunn_{f_{j-2}} = \cunn_{e_j} \setminus \cbr{a, b_j} = \cndg_{e_j}$. The proof is complete.
\end{proof}

\begin{remark}[Vine telescoping product on sub-vines]
	\label{rem:telescope-f}
	Given an edge $f \in \edges_j$ for some $j \in \cbr{1,\ldots,d-1}$ in a regular vine sequence $\Vine$ on $\dset$, we can consider the sub-vine $\Vine_f$ induced by $\Vine$ on the node set $\cunn_f$ in the natural way: we have $\Vine_f = \rbr{\tree_{f,i} = \rbr{\nodes_{f,i}, \edges_{f,i}}}_{i=1}^j$ where $\nodes_{f,1} = \cunn_f$ and $\edges_{f,i} = \cbr{e \in \edges_i : \cunn_e \subseteq \cunn_f}$ for $i \in \cbr{1,\ldots,j}$. Applying Lemma~\ref{lem:Rvine-a} to $\Vine_f$, we obtain the seemingly more general formula 
		\begin{equation}
			\label{eq:telescope-f}
			\gamma_{\cunn_f} 
			= \prod_{e \in \edges_1: \cunn_e \subseteq \cunn_f} \gamma_e \cdot
			\prod_{i=2}^{j} \prod_{e \in \edges_i: \cunn_e \subseteq \cunn_f} \frac{\gamma_{\{a_e,b_e\}|\cndg_e}}{\gamma_{a_e|\cndg_e} \cdot \gamma_{b_e|\cndg_e}}.
	\end{equation}
\end{remark} 

\chng{\begin{remark}[Vine telescoping product when $\gamma_j$ is not necessarily $1$.]\label{rem:Rvine-c}
	Another generalisation of the vine telescoping product in Lemma~\ref{lem:Rvine-a} concerns removing the assumption that $\gamma_j = 1$ for all $j \in \dset$. If $\gamma_J \in (0, \infty)$ for every non-empty $J \subseteq \dset$ but the scalars $\gamma_j$ are not necessarily equal to one, the vine telescoping product in Lemma~\ref{lem:Rvine-a} becomes
	\begin{equation}
	\label{eq:Rvine-c}
		\gamma_{\dset} =
		\prod_{j \in \dset} \gamma_j^{1-\deg(j)}
		\cdot \prod_{e \in \edges_1} \gamma_e
		\cdot \prod_{j=2}^{d-1} \prod_{e = \cbr{a,b} \in \edges_j}
		\frac{\gamma_{\cndg_e} \cdot \gamma_{\cunn_e}}{\gamma_{\cunn_a} \cdot \gamma_{\cunn_b}},
	\end{equation}
	where $\deg(j)$ is the degree of $j$ in $\tree_1$, that is, the number of edges in that tree to which it belongs, or, equivalently, its number of neighbours in that tree. To obtain \eqref{eq:Rvine-c}, apply \eqref{eq:Rvine-a} to the coefficients $\gamma_J' = \gamma_J / \prod_{j \in J} \gamma_j$ for non-empty $J \subseteq \dset$ and simplify.
\end{remark}}

\subsection{Proofs for Section~\ref{sec:Sklar}}

\begin{proof}[Proof of Proposition~\ref{lem:expdensity}\pnt]
The fact that $\xdf_{I|J}(\point|\bx_J)$ is a probability density function on $(0, \infty)^I$ follows from the fact that integrating a tail copula density over certain coordinates while keeping the other coordinates fixed produces the marginal tail copula density in the remaining coordinates, see \eqref{eq:lambdaJ}. The statements about the univariate margins follow in the same way. Finally, the expression of the copula density follows from Sklar's theorem for probability densities applied to $\xdf_{I|J}(\point|\bx_J)$.
\end{proof}

\begin{proof}[Proof of Lemma \ref{lem:homo}\pnt]
By the homogeneity of tail copula densities as stated in \eqref{eq:lambdahomo}, we have
\begin{align*}
	\xdf_{I|J}(\bx_I|t\bx_J)
	= \frac{\xdf_{I\cup J}(\bx_I, t\bx_J)}{\xdf_J(t\bx_J)}
	= \frac{t^{1-|I \cup J|} \cdot \xdf_{I \cup J}(t^{-1} \bx_I, \bx_J)}%
	{t^{1-|J|} \cdot \xdf_J(\bx_J)} 
	= t^{-|I|} \cdot \xdf_{I|J}(t^{-1}\bx_I|\bx_J).
\end{align*}
If the random vector $\bm{\xi}_I$ has probability density function $\bx_I \mapsto \xdf_{I|J}(\bx_I|\bx_J)$, then the probability density function of the random vector $t\bm{\xi}_I$ is indeed $\bx_I \mapsto t^{-|I|} \cdot \xdf_{I|J}(t^{-1}\bx_I|\bx_J) = \xdf_{I|J}(\bx_I|t\bx_J)$. The statements on the cumulative distribution and quantile functions follow easily. Finally, the copula of $t\bm{\xi}_I$ does not depend on $t > 0$, yielding \eqref{eq:cIJhomo}.
\end{proof}

\begin{proof}[Proof of Proposition \ref{lem:Sklar}\pnt]
	First, we show that $\xdf$ is a tail copula density. To this end, we need to show two properties: the normalization constraint~\eqref{eq:margin:pdf} and the homogeneity relation~\eqref{eq:lambdahomo}.
	
	To show the homogeneity relation~\eqref{eq:lambdahomo}, let $t \in (0, \infty)$, and let $\bx \in (0, \infty)^d$ be such that $\xdf_J(\bx_J) > 0$. Then
	\begin{align*}
		\xdf(t\bx)
		&= \xdf_J(t\bx_J) \cdot 
		\prod_{i \in I} \xdf_{i|J}(tx_i|t\bx_J) \cdot
		c_{I;J} \rbr{ \xcdf_{i|J}(tx_i|t\bx_J), i \in I ; t\bx_J } \\
		&= \rbr{\xdf_J(t\bx_J)}^{1-|I|} \cdot 
		\prod_{i \in I} \xdf_{i \cup J}(tx_i, t\bx_J) \cdot
		c_{I;J} \rbr{ \xcdf_{i|J}(tx_i|t\bx_J), i \in I ; t\bx_J }.
	\end{align*}
	By Lemma~\ref{lem:homo}, the last factor does not depend on $t$ at all. By \eqref{eq:lambdahomo}, the scalar $t$ can be taken out of the tail copula densities yielding a factor $(t^{1-|J|})^{1-|I|} \cdot (t^{1 - (1+|J|)})^{|I|}$. The exponent of $t$ is, as required,
	\[
	(1-|J|)(1-|I|) - |I| \cdot |J| 
	= 1 - |I| - |J| 
	= 1 - d.
	\]
	
	To show the normalization constraint~\eqref{eq:margin:pdf}, let $k \in \cbr{1,\ldots,d}$ and $x_k \in (0, \infty)$. We need to show that $\int_{(0, \infty)^{d-1}} \xdf(\bx) \, \diff \bx_{\setminus k} = 1$, where the integral is over all coordinates except for the $k$th one. There are two cases: $k \in I$ or $k \in J$.
	
	First, if $k \in I$, the integral of interest is
	\begin{equation*}
		\int_{(0, \infty)^{J}} 
		\xdf_{J}(\bx_J)
		\cbr{ 
    		\int_{(0, \infty)^{I \setminus k}} 
    		\prod_{i \in I} \xdf_{i|J}(x_i|\bx_J) \cdot
    		c_{I;J} \left( \xcdf_{i|J}(x_i|\bx_J), i \in I ; \bx_J \right)
    		\diff \bx_{I \setminus k} 
    		}
		\diff \bx_{J}.
	\end{equation*}
	From the inner integral, the factor $\xdf_{k|J}(x_k|\bx_J)$ can be taken out. The function that remains in the inner integral has integral equal to one: substitute $u_i = \xcdf_{i|J}(x_i|\bx_J)$ for $i \in I \setminus k$ and use the fact that the copula density $c_{I;J}$ has uniform margins. The integral simplifies to
	\[
	\int_{(0, \infty)^{J}}
	\xdf_{J}(\bx_J) \cdot \xdf_{k|J}(x_k|\bx_J) \,
	\diff \bx_J
	= 
	\int_{(0, \infty)^{J}}
	\xdf_{k \cup J}(x_k, \bx_J) \,
	\diff \bx_J
	=
	1,
	\]
	as $\xdf_{k \cup J}$ is a tail copula density.
	
	Second, if $k \in J$, the integral of interest is
	\begin{equation*}
		\int_{(0, \infty)^{J \setminus k}} 
		\xdf_{J}(\bx_J)
		\cbr{
    		\int_{(0, \infty)^I} 
    		\prod_{i \in I} \xdf_{i|J}(x_i|\bx_J) \cdot
    		c_{I;J} \rbr{ \xcdf_{i|J}(x_i|\bx_J), i \in I ; \bx_J }
    		\diff \bx_I 
    		}
		\diff \bx_{J \setminus k}.
	\end{equation*}
	The inner integral is equal to one, as the integrand is a probability density function, by Sklar's theorem. The integral of interest thus simplifies to
	\[
	\int_{(0, \infty)^{J \setminus k}} 
	\xdf_{J}(\bx_J) \, 
	\diff \bx_{J \setminus k}
	= 1,
	\]
	since $\xdf_{J}$ is a tail copula density.
	
	The proof that $\xdf$ has the required margins $\xdf_{i \cup  J}$ for $i \in I$ is similar to the case $k \in I$ in the previous part of the proof.
	
	Finally, statement \eqref{eq:cIJ} easily follows by applying Sklar's theorem to $\xdf_{I|J}(\point|\bx_J)$.
\end{proof}

\begin{remark}
In the special case that $J$ is a singleton, $J = \cbr{j}$, we have $I = \dset \setminus j$, the $(d-1)$ tail copula densities $(\xdf_{i,j})_{i\in \dset \setminus j}$ are bivariate, and the common-margin assumption is automatic by~\eqref{eq:margin}; in particular, $\xdf_{i|j}(x_i|x_j) = \xdf_{i,j}(x_i,x_j)$. Further, the scale-invariance property \eqref{eq:cIJhomo} enforces the simplifying assumption, so that the construction~\eqref{eq:Sklarconstructive} becomes
\[
\xdf(\bx) := c \rbr{ \xcdf_{i|j}(x_i|x_j), i \in \dset \setminus j } \cdot \prod_{i \in \dset \setminus j} \xdf_{i,j}(x_i,x_j),
\]
with $c$ an arbitrary $(d-1)$-variate copula density. Any $(d-1)$-tuple of bivariate tail copula densities can thus be combined along any $(d-1)$-variate copula density to form a $d$-variate tail copula density. 
The X-vine construction in Section~\ref{sec:vine} is in the same spirit but without requiring a common pivot variable~$j$ and with a further break-down of $c$ into pair copulas.
\end{remark}

\subsection{Proofs for Section~\ref{sec:parametric}}
\label{sec:proof:param}

\begin{proof}[Proof of Proposition~\ref{prop:scdiri}\pnt]
	Since the lower dimensional margins of the scaled extremal Dirichlet model \eqref{eq:xdfscdiri} are of the same form as $\xdf$ itself, we find for the conditional probability density in the variables $\bx_I \in (0, \infty)^I$ the expressions
	\[
	r_{I|J}(\bx_I|\bx_J)
	= K \cdot
	\frac{\cbr{\sum_{i \in I} c(\alpha_i,\rho)^{1/\rho} x_i^{-1/\rho} + s(\bx_J)}^{-\rho-\sum_{k\in I \cup J} \alpha_k}}{s(\bx_J)^{-\rho-\sum_{j\in J}\alpha_J}} \cdot
	\prod_{i \in I} x_i^{-\alpha_i/\rho-1},
	\]
	where $K$ is a normalising constant that depends on the parameters $\alpha_1,\ldots,\alpha_d,\rho$ and the sets $I$ and $J$ but not on the arguments $\bx_{I \cup J}$, while $s(\bx_J) = \sum_{j \in J} c(\alpha_j,\rho)^{1/\rho} x_j^{-1/\rho}$. For a well-chosen scalar $t > 0$ depending on $\bx_J$, we have $s(t \bx_J) = 1$. By Lemma~\ref{lem:homo}, a scalar multiplication of the vector of conditioning variables $\bx_J$ does not affect the copula density $c_{I;J}(\point;\bx_J)$, i.e., $c_{I;J}(\point;t\bx_J) = c_{I;J}(\point;\bx_J)$. We obtain that $c_{I;J}(\point;\bx_J) \equiv c_{I;J}(\point)$ is equal to the copula density of the $|I|$-variate probability density function proportional to
	\[
	\bx_I \mapsto 
	\prod_{i \in I} x_i^{-\alpha_i/\rho-1} \cdot
	\cbr{\sum_{i \in I} c(\alpha_i,\rho)^{1/\rho} x_i^{-1/\rho} + 1}^{-\rho-\sum_{k\in I \cup J} \alpha_k}.
	\]
	This fact already implies that the simplifying assumption holds.
	
	To identify $c_{I;J}$, we can incorporate the constants $c(\alpha_i, \rho)$ into the variables $x_i$ by marginal scale transformations without changing the copula. Hence, $c_{I;J}$ is equal to the copula density of the $|I|$-variate probability density function proportional to
\begin{equation}
	\label{eq:pdfIJscdiri}
	\bx_I \mapsto 
	\prod_{i \in I} x_i^{-\alpha_i/\rho-1} \cdot
	\rbr{\sum_{i \in I} x_i^{-1/\rho} + 1}^{-\rho-\sum_{k\in I \cup J} \alpha_k}.
\end{equation}

Consider two cases: $\rho > 0$ or $\rho < 0$. Suppose first $\rho > 0$. Applying the component-wise \emph{decreasing} transformations $x_i \mapsto y_i = x_i^{-1/\rho}$, we see that the copula density $c_{I;J}$ is equal to the \emph{survival} copula density of the $|I|$-variate probability density function proportional to
\[
	\by_I \mapsto
	\prod_{i \in I} y_i^{\alpha_i-1} \cdot \rbr{\sum_{i \in I} y_i + 1}^{-\rho-\sum_{k\in I \cup J} \alpha_k}.
\]
Let $\bY = (Y_i)_{i \in I}$ be a random vector on $(0,\infty)^I$ with the latter density function and consider the random vector $t(\bY) = (\bW,S)$ with $S = \sum_{i \in I} Y_i$ and $W_i = Y_i / S$ for $i \in I$; note that $\sum_{i \in I} W_i = 1$, so we consider $\bW$ as a random vector on the $(|I|-1)$-dimensional unit simplex $\Delta_{I} = \cbr{\bw \in [0, 1]^I : \sum_{i \in I} w_i}$, where one of the coordinates, say $w_{i_0}$ for some fixed $i_0 \in I$, is written as $w_{i_0} = 1 - \sum_{i \in I \setminus \cbr{i_0}} w_i$. The inverse transformation is $t^{-1}(\bw;s) = (sw_i)_{i \in I}$. By the multivariate changes-of-variables formula, the probability density of $t(\bY)$ is
\[
	\prod_{i \in I} w_i^{\alpha_i-1} \cdot s^{\sum_{i \in I} \alpha_i - 1} (s + 1)^{-\rho-\sum_{k \in I \cup J} \alpha_k}
\]
for $(\bw, s) \in \Delta_{I} \times (0, \infty)$. The random vector $\bW$ thus has a Dirichlet distribution with parameter vector $(\alpha_i)_{i \in I}$ and is independent of the positive random variable $S$. Therefore, $\bY = S \bW$ has a Liouville distribution and its survival copula is a Liouville copula.

Second, suppose $\rho < 0$. We can then apply exactly the same component-wise transformations $x_i \mapsto y_i = x_i^{-1/\rho}$ as in the case $\rho > 0$, but now these transformations are \emph{increasing}, so that $c_{I;J}$ is equal to the copula density of the random vector $\bY$. As the survival copula of $\bY$ is a Liouville copula, $c_{I;J}$ itself is a survival Liouville copula.
\end{proof}

\begin{proof}[Calculating $c_{I;J}$ for the negative logistic model\pnt]
If $\alpha_1 = \ldots = \alpha_d = 1$ and $\rho > 0$, then, writing $\theta = 1/\rho > 0$, the probability density in \eqref{eq:pdfIJscdiri} is proportional to the function
\[
	\bx_I \mapsto 
	\prod_{i \in I} x_i^{-\theta-1} \cdot
	\rbr{1 + \sum_{i \in I} x_i^{-\theta}}^{-1/\theta-|I|-|J|},
\] 
for $\bx_I \in (0, \infty)^I$.
We have $1/\theta+|J| = 1/\theta_J$ with $\theta_J = \theta / \rbr{|J|\theta+1}$.
The copula density does not change if we apply the component-wise increasing transformations $(0, \infty) \to (0, 1) : x_i \mapsto u_i = (x_i^{-\theta}+1)^{-1/\theta_J}$ to the $|I|$ variables. Since 
$x_i^{-\theta} = u_i^{-\theta_J}-1$ and 
$\diff x_i / \diff u_i 
= (\theta_J/\theta) x_i^{1+\theta} u_i^{-\theta_J-1}$,
the resulting probability density is proportional to
\[
	\bu_I \mapsto 
	\prod_{i \in I} u_i^{-\theta_J-1} \cdot
	\rbr{1 - |I| + \sum_{i \in I} u_i^{-\theta_J}}^{-1/\theta_J-|I|},
\]
which is proportional to the density of the $|I|$-variate Clayton copula with parameter $\theta_J$.
\end{proof}

\begin{proof}[Calculating $c_{I;J}$ for the logistic model\pnt]
If $\alpha_1 = \ldots = \alpha_d$ and $\rho = -1/\theta$ with $\theta > 1$, then we seek the copula density of the probability density proportional to
\[
	\bx_I \mapsto 
	\prod_{i \in I} x_i^{\theta-1} \cdot
	\rbr{1 + \sum_{i \in I} x_i^{\theta}}^{1/\theta-|I|-|J|},
	\qquad \bx_I \in (0, \infty)^I.	
\] 
This time, we have $1/\theta - |J| = -1/\vartheta_J$ with $\vartheta_J = \theta / \rbr{|J|\theta-1}$.
We apply the component-wise \emph{decreasing} transformations $(0, \infty) \to (0, 1) : x_i \mapsto u_i = (x_i^{\theta}+1)^{-1/\vartheta_J}$. Since $x_i^\theta = u_i^{-\vartheta_J} - 1$ and 
$\diff x_i / \diff u_i 
= - (\vartheta_J/\theta) x_i^{1-\theta} u_i^{-\vartheta_J-1}$, 
the copula density we seek is equal to the one of the \emph{survival} copula of the probability density proportional to
\[
	\bu_I \mapsto 
	\prod_{i \in I} u_i^{-\vartheta_J-1} \cdot
	\rbr{1 - |I| + \sum_{i \in I} u_i^{-\vartheta_J}}^{-1/\vartheta_J-d},
	\qquad \bu_I \in (0, 1)^I.
\]
Again, this function is proportional to the density of the $|I|$-variate Clayton copula with parameter $\vartheta_J$.
\end{proof}

\begin{proof}[Proof of Proposition~\ref{prop:hr}\pnt]
	Write $K = I \cup J$. The \HR{} tail copula density $\xdf(\point,\Sigk)$ in \eqref{eq:HR} is marginally closed in the sense that the $|K|$-variate tail copula density $\xdf_K(\point;\Sigk)$ resulting from marginalization as in \eqref{eq:lambdaJ} is of the same parametric form but now with respect to the $(|K| \times |K|)$-dimensional variogram $\Gamma_{KK}$. The covariance matrix \eqref{eq:Sigk} with $\Gamma$ replaced by $\Gamma_{KK}$ is $\Sigk_{KK}$. We obtain
	\[
	\xdf_{K} \rbr{\bx; \Sigk}
	= \xdf \rbr{\bx; \Sigk_{KK}}
	= \rbr{\prod_{i \in K \setminus \cbr{k}} x_i^{-1}} \phi_{|K|-1} \rbr{ \bar{\bx}_{\setminus k}; \Sigk_{KK} }, 
	\]
	for $\bx \in (0, \infty)^K$, with $\bar{\bx}_{\setminus k} = \rbr{\log(x_i/x_k) - \half \Gamma_{ik}}_{i \in K \setminus \cbr{k}}$. For fixed $\bx_J \in (0, \infty)^J$, the function $\bx_I \mapsto \xdf_{I \cup J}(\bx_{I \cup J}; \Sigk)$ is proportional to an $|I|$-variate log-normal density with covariance matrix $\Sigk_{I|J}$ and some mean vector depending on $\bx_J$; this follows from a well-known property of conditional distributions of the Gaussian distribution and in particular from the block-matrix inversion formula of
	\begin{equation*}
		\Sigk_{KK} =
		\begin{bmatrix}
			\Sigk_{II} & \Sigk_{IJ} \\
			\Sigk_{JI} & \Sigk_{JJ}
		\end{bmatrix}.
	\end{equation*}
	Since the copula of a random vector is invariant with respect to component-wise increasing transformations, the copula density of the probability density $r_{I|J}(\point|\bx_J;\Sigk)$ is equal to the stated Gaussian copula density.
\end{proof}

\subsection{Proofs for Section~\ref{sec:vine}}

\begin{proof}[Proof of Theorem~\ref{thm:xvine} and Remark~\ref{rem:xdfdecomp-J}\pnt]
Applying the regular vine telescoping equations~\eqref{eq:Rvine-a} and~\eqref{eq:Rvine-a-cond} to $\gamma_J := \xdf_J(\bx_J)$, we find
\[
	\xdf(\bx)
	= \prod_{e \in \edges_1} \xdf_{a_e,b_e}(x_{a_e},x_{b_e}) \cdot
	\prod_{j=2}^{d-1} \prod_{e \in \edges_j}
	\frac{\xdf_{a_e,b_e|\cndg_e}(x_{a_e},x_{b_e}|\bx_{\cndg_e})}{\xdf_{a_e|\cndg_e}(x_{a_e}|\bx_{\cndg_e}) \cdot \xdf_{b_e|\cndg_e}(x_{b_e}|\bx_{\cndg_e})}.
\]
By Eq.~\eqref{eq:cIJ} in Sklar's theorem for tail copula densities (Proposition~\ref{lem:expdensity}) applied to $I = \cbr{a_e,b_e}$ and $J = \cndg_e$, the last fraction is equal to $c_{a_e,b_e;\cndg_e}\rbr{ u_{a_e}, u_{b_e} ; \bx_{\cndg_e} }$ with $u_{a_e} = R_{a_e|\cndg_e}(x_{a_e}|\bx_{\cndg_e})$ and $u_{b_e} = R_{b_e|\cndg_e}(x_{b_e}|\bx_{\cndg_e})$. Formula~\eqref{eq:xdfdecomp-J} in Remark~\ref{rem:xdfdecomp-J} follows in the same way from the more general product formula~\eqref{eq:telescope-f} in Remark~\ref{rem:telescope-f}.
\end{proof}

\begin{remark}
	\label{rem:xdfdecomp-J}
	The decomposition~\eqref{eq:xdfdecomp} is based on the vine telescoping product formulas \eqref{eq:Rvine-a}--\eqref{eq:Rvine-a-cond} in Lemma~\ref{lem:Rvine-a}. By applying the more general formula~\eqref{eq:telescope-f} in Remark~\ref{rem:telescope-f}, we find, for every edge $f \in \edges_j$ with $j \in \cbr{1,\ldots,d-1}$, a decomposition of the marginal tail copula density on the complete union $J = \cunn_f \subseteq \dset$, with cardinality $|J| = j+1$: for $\bx_J \in (0, \infty)^J$, we have
	\begin{equation}
		\label{eq:xdfdecomp-J}
		\xdf_J(\bx_J) = \prod_{\substack{e \in \edges_1\\ \cunn_e \subseteq J}} \xdf_{a_e,b_e}(x_{a_e},x_{b_e}) \cdot
		\prod_{i=2}^{|J|-1} \prod_{\substack{e \in \edges_i\\ \cunn_e \subseteq J}}
		c_{a_e,b_e;\cndg_e} \rbr{ 
			\xcdf_{a_e|\cndg_e}(x_{a_e}|\bx_{\cndg_e}), \xcdf_{b_e|\cndg_e}(x_{b_e}|\bx_{\cndg_e});
			\bx_{\cndg_e}
		}.
	\end{equation}
	The decomposition \eqref{eq:xdfdecomp-J} is true only for subsets $J \subseteq \dset$ of the form $J = \cunn_f$ for some edge $f$ in the regular vine sequence.
\end{remark}

\begin{proof}[Proof of Theorem~\ref{thm:recuni}\pnt]
Let $j_0 \in \cbr{2,\ldots,d-1}$ and $e_0 \in \edges_{j_0}$. The complete union $J_0 = \cunn_{e_0} \subset \dset$ has cardinality $j_0+1$. We can restrict the regular vine sequence $\Vine$ to $J_0$ in the obvious way to obtain a regular vine sequence $\Vine_0 = (\tree_{0,j})_{j=1}^{j_0}$ on $j_0+1$ elements: the first tree $\tree_{0,1}$ has node set $\nodes_{0,1} = J_0$ and edge set $\edges_{0,1} = \cbr{e \in \edges_1 : e \subset J_0}$, and so on. The edge $e_0$ is the single one in the deepest tree $\tree_{0,j_0}$ of $\Vine_0$. By applying Theorem~\ref{thm:xvine} to $\xdf_{J_0}$, we obtain the same decomposition as in \eqref{eq:xdfdecomp}, with the same bivariate (tail) copula densities, but restricted to $\Vine_0$. This means that, without loss of generality, we can assume that $e_0$ is the deepest edge of the original regular vine sequence $\Vine$, that is, $e_0$ is the unique element, say $e$, in $\edges_{d-1}$.

We have $e = (a_e,b_e;\cndg_e) = \cbr{f,g}$ with $f, g \in \nodes_{d-1} = \edges_{d-2}$. The complete union of $e$ is $\cunn_e = \dset$. Furthermore, upon switching the names of $f$ and $g$ if necessary, we have $\cunn_f = \cndg_e \cup \cbr{a_e} =  \dset \setminus \cbr{b_e}$ and $\cunn_g = \cndg_e \cup \cbr{b_e} = \dset \setminus \cbr{a_e}$. We can apply the decomposition~\eqref{eq:xdfdecomp} to both $\xdf$ and its $(d-1)$-dimensional marginal density $\xdf_{\cunn_g}$, as in the previous paragraph; see also Eq.~\eqref{eq:xdfdecomp-J}. The two decompositions are the same, except that the one for $\xdf$ has some additional factors, namely the ones that involve node $a_e$. For each tree, there is only one such additional factor, since in each tree $\tree_j$, there is only a single edge, say $e_j$, that contains $a_e$ in its complete union (Lemma~\ref{lem:uniqueness}). For these edges $e_j$, we have in fact $a_e \in \cndd_{e_j}$, so we can write $e_j = (a_e,b_{e_j};\cndg_{e_j})$. Node that $e_{d-1} = e$ and $b_{e_{d-1}} = b_e$, while $e_{d-2} = f$. It follows that
\begin{align*}
	r_{a_e|\cndg_e \cup b_e} \rbr{x_{a_e}|\bx_{\cndg_e \cup b_e}}
	&= \frac{r(\bx)}{r_{\cunn_g}(\bx_{\cunn_g})} \\
	&= r_{a_e,b_{e_1}} (x_{a_e}, x_{b_{e_1}})
	\cdot \prod_{j=2}^{d-1} c_{a_e,b_{e_j}|\cndg_{e_j}}
	\rbr{
		R_{a_e|\cndg_{e_j}} (x_{a_e}|\bx_{\cndg_{e_j}}),
		R_{b_e|\cndg_{e_j}} (x_{b_e}|\bx_{\cndg_{e_j}});
		\bx_{\cndg_{e_j}}
	}.
\end{align*}
Recall that $\cunn_f = \cndg_e \cup \cbr{a_e}$. Applying the same argument to edge $f$ via the regular vine sequence induced by $\Vine$ on $\cunn_f$ as in the first paragraph of the proof, we find
\[
	r_{a_e|\cndg_e} \rbr{x_{a_e}|\bx_{\cndg_e}} 
	= r_{a_e,b_{e_1}} (x_{a_e}, x_{b_{e_1}})
	\cdot \prod_{j=2}^{d-2} c_{a_e,b_{e_j}|\cndg_{e_j}}
	\rbr{
		R_{a_e|\cndg_{e_j}} (x_{a_e}|\bx_{\cndg_{e_j}}),
		R_{b_e|\cndg_{e_j}} (x_{b_e}|\bx_{\cndg_{e_j}});
		\bx_{\cndg_{e_j}}
	},
\]
where the product is empty if $d = 3$. Comparing the two identities, we get
\[
	r_{a_e|\cndg_e \cup b_e} \rbr{x_{a_e}|\bx_{\cndg_e \cup b_e}}
	= r_{a_e|\cndg_e} \rbr{x_{a_e}|\bx_{\cndg_e}}
	\cdot c_{a_e,b_{e}|\cndg_{e}}
	\rbr{
		R_{a_e|\cndg_{e}} (x_{a_e}|\bx_{\cndg_{e}}),
		R_{b_e|\cndg_{e}} (x_{b_e}|\bx_{\cndg_{e}});
		\bx_{\cndg_{e}}
	}.
\]
Since the partial derivative of $R_{a_e|\cndg_{e}} (x_{a_e}|\bx_{\cndg_{e}})$ with respect to $x_{a_e}$ is $r_{a_e|\cndg_e} \rbr{x_{a_e}|\bx_{\cndg_e}}$, the first of the two equalities in \eqref{eq:xcdfrecur} follows by integration. The second equality in \eqref{eq:xcdfrecur} is shown in the same way.
\end{proof}

\begin{proof}[Proof of Theorem~\ref{thm:xvineconstruct}\pnt]
The proof is by induction on $d$. For $d = 3$, we can, upon renumbering the nodes, assume the regular vine sequence has edge set $\edges_1 = \cbr{\cbr{1,2},\cbr{2,3}}$, and \eqref{eq:xvine} becomes
\[
	r(x_1,x_2,x_3)
	= r_{12}(x_1,x_2) \, r_{23}(x_2,x_3) \cdot
	c_{13;2} \rbr{R_{1|2}(x_1|x_2),R_{3|2}(x_3|x_2)}.
\]
By Proposition~\ref{lem:Sklar}, this function is a valid tail copula density with margins $r_{12}$ and $r_{23}$. Furthermore,
\[
	r_{1|23}(x_1|x_2,x_3)
	= \frac{r(x_1,x_2,x_3)}{r_{23}(x_1,x_2)}
	= r_{12}(x_1,x_2) \cdot
	c_{13;2} \rbr{R_{1|2}(x_1|x_2),R_{3|2}(x_3|x_2)}.
\]
Integrating over $x_1$ yields, after the change of variables $u_1 = R_{1|2}(x_1|x_2)$, the identity
\[
	R_{1|23}(x_1|x_2,x_3)
	= C_{1|3;2} \rbr{R_{1|2}(x_1|x_2), R_{3|2}(x_3|x_2)}.
\]
The proof of the recursive formula for $R_{3|12}(x_3|x_1,x_2)$ is entirely similar.

For the induction step, let $d \ge 4$ and assume the stated properties are true for X-vine specifications on node sets of cardinality up to $d-1$. To show that $\xdf$ defined in \eqref{eq:xvine} is a $d$-variate tail copula density, we check the marginal constraint~\eqref{eq:margin:pdf} and the homogeneity~\eqref{eq:lambdahomo}.

To show \eqref{eq:lambdahomo}, let $\bx \in (0, \infty)^d$ and $s \in (0, \infty)$.
Since $\xdf_{a_e,b_e}$ is a bivariate tail copula density for each $e \in \edges_1$, we have, by \eqref{eq:lambdahomo},
\[
	\prod_{e \in \edges_1} \xdf_{a_e,b_e}(sx_{a_e},sx_{b_e})
	=
	\prod_{e \in \edges_1} 
	\rbr{ s^{-1} \xdf_{a_e,b_e}(x_{a_e},x_{b_e}) }
	=
	s^{-(d-1)} 	\prod_{e \in \edges_1} \xdf_{a_e,b_e}(x_{a_e},x_{b_e}),
\]
as there are $d-1$ edges in the first tree. Further, by the induction hypothesis and by scale equivariance (Lemma~\ref{lem:homo}), we have 
\chng{$R_{a_e|\cndg_e}(sx_{a_e}|s\bx_{\cndg_e}) = R_{a_e|\cndg_e}(x_{a_e}|\bx_{\cndg_e})$} and similarly for $a_e$ replaced by $b_e$. It follows that $\xdf$ defined in \eqref{eq:xvine} satisfies \eqref{eq:lambdahomo}.

Next, we show that $\xdf$ defined in \eqref{eq:xvine} satisfies \eqref{eq:margin:pdf}. Let $e = \cbr{f,g} = (a_e,b_e;\cndg_e)$ be the single edge in the deepest tree $\tree_{d-1}$, with $f, g \in \edges_{d-2}$ and where the labels are such that $a_e \in \cndd_f$ and $b_e \in \cndd_g$. The regular vine sequence $\Vine$ on $\dset$ induces regular vine sequences $\Vine_f$ and $\Vine_g$ on the node sets 
\[
	\cunn_f 
	= \dset \setminus \cbr{b_e} 
	= \cndg_e \cup \cbr{a_e} 
	\quad \text{and} \quad
	\cunn_g 
	= \dset \setminus \cbr{a_e} 
	= \cndg_e \cup \cbr{b_e},
\]
respectively. The X-vine specification $(\Vine, \xdfset, \cpdfset)$ can be restricted in similar ways, by keeping only those bivariate (tail) copula densities associated to edges with complete unions contained in $\cunn_f$ and $\cunn_g$, respectively, yielding X-vine specifications $(\Vine_f, \xdfset_f, \cpdfset_f)$ and $(\Vine_g, \xdfset_g, \cpdfset_g)$. By the induction hypothesis, these two X-vine specifications produce $(d-1)$-dimensional tail copula densities $\xdf_{\cunn_f}$ and $\xdf_{\cunn_g}$ as in \eqref{eq:xvine} but with factors restricted to those edges whose complete unions are contained in $\cunn_f$ and $\cunn_g$, respectively. Let $b_f$ be the other node in $\cndd_f$ besides $a_e$, i.e., $\cndd_f = \cbr{a_e, b_f}$; similarly, let $a_g$ be the other node in $\cndd_g$ besides $b_e$, i.e., $\cndd_g = \cbr{a_g, b_e}$. We have
\[
	\cndg_e 
	= \cndg_f \cup \cbr{b_f} 
	= \cndg_g \cup \cbr{a_g}.
\]
By the induction hypothesis applied to $(\Vine_f, \xdfset_f, \cpdfset_f)$, we can obtain $\xcdf_{a_e|\cndg_e} = \xcdf_{a_e|\cndg_f \cup b_f}$ by the recursive formula \eqref{eq:xvine:recur}; the same is true for $\xcdf_{b_e|\cndg_e} = \xcdf_{b_e|\cndg_g \cup a_g}$.

For each $j \in \cbr{1,\ldots,d-1}$, there exists a unique edge $e_j \in \edges_j$ such that $a_e \in \cunn_{e_j}$ and then in fact $a_e \in \cndd_{e_j}$ (Lemma~\ref{lem:uniqueness}); clearly $e_{d-1} = e$ and $e_{d-2} = f$. For $j \in \cbr{1,\ldots,d-1}$, let $b_j = \cndd_{e_j} \setminus \cbr{a_e}$ be the other node in the conditioned set of $e_j$ besides $a_e$; we have $b_{d-1} = b_e$ and $b_{d-2} = b_f$. Isolating in \eqref{eq:xvine} the factors indexed by edges $e_1,\ldots,e_{d-1}$, we obtain
\chng{
\[
	\xdf(\bx) = \xdf_{\cunn_g}(\bx_{\cunn_g})
	\cdot r_{a_e,b_1}(x_{a_e},x_{b_1})
	\cdot \prod_{j=2}^{d-1} c_{a_e,b_j;\cndg_{e_j}} \rbr{
		\xcdf_{a_e|\cndg_{e_j}}(x_{a_e}|\bx_{\cndg_{e_j}}),
		\xcdf_{b_j|\cndg_{e_j}}(x_{b_j}|\bx_{\cndg_{e_j}})
	}.
\] 
}
By the chain rule, the derivative of the function
\[
	x_{a_e}	\mapsto 
	C_{a_e|b_e;\cndg_e} \rbr{
		\xcdf_{a_e|\cndg_e}(x_{a_e}|\bx_{\cndg_e}), \,
		\xcdf_{b_e|\cndg_e}(x_{b_e}|\bx_{\cndg_e})
	}
\]
is equal to the function
\[
	x_{a_e} \mapsto
	c_{a_e,b_e;\cndg_e} \rbr{
		\xcdf_{a_e|\cndg_e}(x_{a_e}|\bx_{\cndg_e}), \,
		\xcdf_{b_e|\cndg_e}(x_{b_e}|\bx_{\cndg_e})
	}
	\cdot
	\frac{\partial}{\partial x_{a_e}}
	\xcdf_{a_e|\cndg_e}(x_{a_e}|\bx_{\cndg_e}).
\]
Recall $\cndg_e \cup \cbr{a_e} = \cunn_f$, $a_e \in \cndd_f$, and $f = e_{d-2}$.
Applying the induction hypothesis to $(\Vine_f, \xdfset_f, \cpdfset_f)$, we find
\begin{align*}
	\frac{\partial}{\partial x_{a_e}}
	\xcdf_{a_e|\cndg_e}(x_{a_e}|\bx_{\cndg_e}) 
	&= \frac{\xdf_{a_e \cup \cndg_e}(\bx_{a_e \cup \cndg_e})}{\xdf_{\cndg_e}(\bx_{\cndg_e})}
	\\
	&= \xdf_{a_e,b_1}(x_{a_e},x_{b_1})
	\cdot \prod_{j=2}^{d-2} c_{a_e,b_j;\cndg_{e_j}} \rbr{
		\xcdf_{a_e;\cndg_{e_j}}(x_{a_e}|\bx_{\cndg_{e_j}}),
		\xcdf_{b_j;\cndg_{e_j}}(x_{b_j}|\bx_{\cndg_{e_j}})
	}.
\end{align*}
Combining these identities, we get
\begin{equation}
\label{eq:rrg}
	\xdf(\bx) = \xdf_{\cunn_g}(\bx_{\cunn_g})
	\cdot \frac{\partial}{\partial x_{a_e}}
	C_{a_e|b_e;\cndg_e} \rbr{
		\xcdf_{a_e|\cndg_e}(x_{a_e}|\bx_{\cndg_e}), \,
		\xcdf_{b_e|\cndg_e}(x_{b_e}|\bx_{\cndg_e})
	}.
\end{equation}
It follows that
\[
	\int_0^\infty r(\bx) \, \diff x_{a_e}
	= \xdf_{\cunn_g}(\bx_{\cunn_g}).
\]
Since $\xdf_{\cunn_g}$ is a $(d-1)$-variate tail copula density, we can fix variable $x_i \in (0, \infty)$ with $i \in \cunn_g$ and integrate further over all $(d-2)$ remaining variables in $\cunn_g$, to find that $\int_{(0, \infty)^{d-1}} r(\bx) \, \diff \bx_{\setminus i} = 1$. Repeating the same argument with the roles of $f$ and $g$ interchanged leads to
\[
	\xdf(\bx) = \xdf_{\cunn_f}(\bx_{\cunn_f})
	\cdot \frac{\partial}{\partial x_{b_e}}
	C_{b_e|a_e;\cndg_e} \rbr{
		\xcdf_{b_e|\cndg_e}(x_{b_e}|\bx_{\cndg_e}), \,
		\xcdf_{a_e|\cndg_e}(x_{a_e}|\bx_{\cndg_e})
	}
\]
and thus
\[
	\int_0^\infty r(\bx) \, \diff x_{b_e}
	= \xdf_{\cunn_f}(\bx_{\cunn_f}),
\]
from which also $\int_{(0, \infty)^{d-1}} r(\bx) \, \diff \bx_{\setminus i} = 1$ for $i \in \cunn_f$ and $x_i \in (0, \infty)$. As $\cunn_f \cup \cunn_g = \cunn_e = \dset$, we have shown \eqref{eq:margin:pdf}.

Finally, we show the recursive formulas \eqref{eq:xvine:recur}. From $\cndg_e \cup \cbr{b_e} = \cunn_g$, it follows that
\begin{align*}
	\frac{\partial}{\partial x_{a_e}}
	\xcdf_{a_e|\cndg_e \cup b_e}(x_{a_e}|\bx_{\cndg_e \cup b_e})
	&= \xdf_{a_e|\cndg_e \cup b_e}(x_{a_e}|\bx_{\cndg_e \cup b_e}) 
	= \frac{\xdf(\bx)}{\xdf_{\cunn_g}(\bx_{\cunn_g})} \\
	&= \frac{\partial}{\partial x_{a_e}}
	C_{a_e|b_e;\cndg_e} \rbr{
		\xcdf_{a_e|\cndg_e}(x_{a_e}|\bx_{\cndg_e}), \,
		\xcdf_{b_e|\cndg_e}(x_{b_e}|\bx_{\cndg_e})
	},
\end{align*}
where we used \eqref{eq:rrg} in the last step. Integrating over $x_{a_e}$ yields the first identity in \eqref{eq:xvine:recur}. The second identity follows in the same way. The proof of Theorem~\ref{thm:xvineconstruct} is complete.
\end{proof}

\begin{proof}[Proof of Proposition~\ref{prop:c2l}\pnt]
Let $\bx \in (0, \infty)^d$.
As there $d-1$ edges in the first tree, we have
\[
	t^{d-1} \cdot c(t \bx)
	= \prod_{e \in \edges_1} 
	\rbr{t \cdot c_{a_e,b_e}(t x_{a_e}, t x_{b_e})} 
	\cdot
	\prod_{j=2}^{d-1} \prod_{e \in \edges_j}
	c_{a_e,b_e;\cndg_e} \rbr{ 
		C_{a_e|\cndg_e}(tx_{a_e}|t\bx_{\cndg_e}), C_{b_e|\cndg_e}(tx_{b_e}|t\bx_{\cndg_e})
	}.
\]
The first product on the right-hand side converges to $\prod_{e \in \edges_1} \xdf_{a_e,b_e}(x_{a_e},x_{b_e})$ as $t \to 0$ by Assumption~\ref{ass:PCClambdae}(i). It remains to show that for every $j \in \cbr{2,\ldots,d-1}$, every $e \in \edges_j$ and every $i \in \cndd_e = \cbr{a_e,b_e}$, we have
\begin{equation}
\label{eq:c2l:toshow}
	\lim_{t \searrow 0} 
	C_{i|\cndg_e} \rbr{t x_i | t \bx_{\cndg_e}} 
	= \xcdf_{i|\cndg_e} \rbr{x_i | \bx_{\cndg_e}}.
\end{equation}
By properties of regular vine sequences as explained in Section~\ref{sec:Rvine}, there exists an edge $f = (a_f, b_f ; \cndg_f) \in \edges_{j-1}$ such that $f \in e$ and $i \in \cndd_f = \cbr{a_f,b_f}$ and $\cunn_f = \cndg_e \cup \cbr{i}$. The proof of \eqref{eq:c2l:toshow} is by recursion on $j$.

If $j = 2$, then $f \in \edges_1$ so that $\cndg_f = \varnothing$ while $\cndg_e$ is a singleton, say $\cndg_e = \cbr{d_e}$, and thus $f = \cbr{i, d_e}$. But then \eqref{eq:c2l:toshow} follows from Assumption~\ref{ass:PCClambdae}(i) and Lemma~\ref{lem:frompdftocdf2} below.

Let $j \ge 3$ and assume \eqref{eq:c2l:toshow} has been shown for edges in $\edges_1 \cup \cdots \cup \edges_{j-1}$. Let $k$ be the other node in $\cndd_f$ but $i$, that is, $\cbr{a_f,b_f} = \cndd_f = \cbr{i, k}$. Then by \eqref{eq:PCC:recur} applied to $f$, we have
\[
	C_{i,k|\cndg_f} \rbr{tx_i, tx_k | t \bx_{\cndg_f}}
	=
	C_{i,k;\cndg_f} \rbr{
		C_{i|\cndg_f} \rbr{tx_i | t\bx_{\cndg_f}},
		C_{k|\cndg_f} \rbr{tx_k | t\bx_{\cndg_f}}
	}
\]
and thus
\begin{align*}
	C_{i|\cndg_e} \rbr{tx_i|t\bx_{\cndg_e}}
	&= 
	C_{i|\cndg_f \cup k} \rbr{tx_i|t\bx_{\cndg_f \cup k}}
	\\
	&=
	C_{i|k;\cndg_f} \rbr{
		C_{i|\cndg_f} \rbr{tx_i|t\bx_{\cndg_f}},
		C_{k|\cndg_f} \rbr{tx_k|t\bx_{\cndg_f}}
	}.
\end{align*}
By the induction hypothesis, the two arguments of $C_{i|k;\cndg_f}$ converge to $R_{i|\cndg_f}(x_i|\bx_{\cndg_f})$ and $R_{k|\cndg_f}(x_k|\bx_{\cndg_f})$, respectively. 
As $f \in \edges_{j-1}$ and $j \ge 3$, the pair copula density $c_{i,k;\cndg_f}$ is continuous by Assumption~\ref{ass:PCClambdae}(ii). In view of Corollary~\ref{cor:cScheffe} below, $C_{i|k;\cndg_f}$ is continuous in its two arguments too, and thus
\[
	\lim_{t \searrow 0} C_{i|\cndg_e} \rbr{tx_i|t\bx_{\cndg_e}}
	=
	C_{i|k;\cndg_f} \rbr{
		\xcdf_{i|\cndg_f} \rbr{x_i|\bx_{\cndg_f}},
		\xcdf_{k|\cndg_f} \rbr{x_k|\bx_{\cndg_f}}
	}.
\]
By the recursive property~\eqref{eq:xvine:recur}, the last expression is equal to $\xcdf_{i|\cndg_f \cup k} \rbr{x_i | \bx_{\cndg_f \cup k}}$. As $\cndg_f \cup k = \cndg_e$, the convergence~\eqref{eq:c2l:toshow} follows.
\end{proof}

\begin{lemma}
	\label{lem:frompdftocdf2}
	Let $c$ be a bivariate copula density and $\xdf$ a bivariate tail copula density. Assume that
	\[
	\lim_{t \searrow 0} t \cdot c(tx, ty) 
	= \xdf(x, y), \qquad (x, y) \in (0, \infty)^2.
	\]
	The copula $C$ of $c$ and the tail copula $\xcdf$ of $\xdf$ then satisfy
	\[
	\begin{split}
		\lim_{t \to 0} C_{1|2}(tx|ty) &= \xcdf_{1|2}(x|y), \\
		\lim_{t \to 0} C_{2|1}(ty|tx) &= \xcdf_{2|1}(y|x).
	\end{split}
	\]
\end{lemma}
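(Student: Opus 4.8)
\textbf{Proof plan for Lemma~\ref{lem:frompdftocdf2}.}

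The plan is to deduce convergence of conditional distribution functions from convergence of densities by means of Scheffé's lemma, exploiting the homogeneity properties of $\xdf$. First I would fix $y > 0$ and consider, for each $t > 0$, the function $x \mapsto c_{1|2}(tx \mid ty)$, which is the conditional density of the first coordinate of a $C$-distributed random pair, rescaled. More precisely, writing $c_2 \equiv 1$ for the (uniform) marginal density of the second coordinate, we have $c_{1|2}(u_1 \mid u_2) = c(u_1,u_2)$, so $t \cdot c_{1|2}(tx \mid ty) = t \cdot c(tx, ty) \to \xdf(x,y) = \xdf_{1|2}(x \mid y) \cdot \xdf_2(y)$ as $t \searrow 0$, by hypothesis; here $\xdf_2(y) = 1$ by the marginal constraint~\eqref{eq:margin:pdf} in the one-dimensional case, so in fact $t \cdot c(tx,ty) \to \xdf_{1|2}(x \mid y)$ pointwise in $x$. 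The key observation is then that $C_{1|2}(tx \mid ty) = \int_0^{tx} c_{1|2}(s \mid ty)\,\diff s = \int_0^{tx} c(s, ty)\,\diff s$, and after the substitution $s = t v$ this equals $\int_0^{x} t\, c(tv, ty)\,\diff v$.

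The main step is to pass to the limit inside this integral. The integrand $v \mapsto t\,c(tv,ty)$ converges pointwise to $v \mapsto \xdf_{1|2}(v \mid y)$, which is a genuine probability density on $(0,\infty)$ by Proposition~\ref{lem:expdensity} (applied with $I = \{1\}$, $J = \{2\}$). Moreover, for each fixed $t$, the function $v \mapsto t\,c(tv, ty)$ is, up to the constant $\int_0^\infty t\,c(tv,ty)\,\diff v = \int_0^\infty c(s,ty)\,\diff s = C_{1|2}(\infty \mid ty) = 1$ (since $c_{1|2}(\cdot \mid ty)$ is a probability density for every $ty \in (0,1)$, using $t$ small enough that $ty < 1$), already a probability density on $(0,\infty)$ with total integral one. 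Thus I have a sequence of probability densities converging pointwise to a probability density, and Scheffé's lemma yields convergence in $L^1$, hence convergence of all the distribution functions uniformly, and in particular $\int_0^x t\,c(tv,ty)\,\diff v \to \int_0^x \xdf_{1|2}(v \mid y)\,\diff v = \xcdf_{1|2}(x \mid y)$ for every $x > 0$. This is exactly the first claimed limit; the second follows by the symmetric argument with the roles of the two coordinates interchanged.

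The main obstacle I anticipate is the bookkeeping around the total masses: one must be careful that $v \mapsto t\,c(tv,ty)$ really integrates to one and not merely to something bounded, which requires $ty < 1$ (so that $c_{1|2}(\cdot\mid ty)$ is defined as a conditional density of the copula), and this holds for all $t$ small enough since $y$ is fixed; and one must confirm that the pointwise limit $\xdf_{1|2}(\cdot \mid y)$ has total mass one, which is the content of Proposition~\ref{lem:expdensity}. Once these two normalisation facts are in place, Scheffé's lemma applies verbatim and the rest is immediate. (This is presumably the $d=2$ prototype of the more elaborate recursion in Corollary~\ref{cor:cScheffe}; I would state the two-dimensional version cleanly here and reuse it there.)
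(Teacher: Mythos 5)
Your proposal is correct and follows essentially the same route as the paper's proof: fix $y$, observe that $v \mapsto t\,c(tv,ty)$ and its pointwise limit $v \mapsto \xdf(v,y)$ both integrate to one over $(0,\infty)$ (the former for $t < 1/y$, the latter by the marginal constraint), apply Scheff\'e's lemma to get $L^1$ convergence, and conclude convergence of the conditional distribution functions. The only cosmetic difference is that the paper bounds $\abs{C_{1|2}(tx|ty) - \xcdf_{1|2}(x|y)}$ directly by the $L^1$ distance restricted to $[0,x]$ rather than invoking uniform convergence of distribution functions, which is the same estimate.
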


\begin{proof}[Proof of Lemma \ref{lem:frompdftocdf2}\pnt]
	By symmetry, it is sufficient to show the first relation.
	Fix $y \in (0, \infty)$ and define $f_t(x) = t \cdot c(tx, ty)$ for $x \in (0, \infty)$. By assumption, $\lim_{t \to 0} f_t(x) = \xdf(x,y)$ for all $x \in (0, \infty)$. Moreover, for all $t \in (0, 1/y)$, we have
	\[ 
	\int_0^{\infty} f_t(x) \, \diff x
	= \int_0^{\infty} c(u, ty) \, \diff u
	= 1 
	= \int_0^{\infty} \xdf(x,y) \, \diff x.
	\]
	By Scheffé's lemma applied to the functions $f_t$ and their point-wise limit $x \mapsto f(x) = \xdf(x, y)$, it follows that 
	\[
	\lim_{t \to \infty}
	\int_0^{\infty} 
	\abs{ t \cdot c(tx, ty) - \xdf(x, y) }
	\, \diff x = 0.
	\]
	But then
	\begin{align*}
		\abs{ C_{1|2}(tx|ty) - \xcdf_{1|2}(x|y) }
		&= \abs{ 
			\int_{0}^{tx} c(s, ty) \, \diff s 
			-
			\int_{0}^x \xdf(s, y) \, \diff s
		} \\
		&\le \int_0^{x} 
		\abs{ t \cdot c(ts, ty) - \xdf(x, y) }
		\, \diff s \to 0, \qquad t \to 0.
		\qedhere
	\end{align*}
\end{proof}

\begin{lemma}
	\label{lem:fScheffe}
	Let $I$ and $J$ be non-empty open intervals and let $f : I \times J \to [0, \infty)$ be a continuous function such that $f_2 : J \to [0, \infty) : y \mapsto \int_{I} f(x, y) \, \diff x$ is continuous. Then the function 
	\[ 
		(x, y) \mapsto 
		\int_{I} f(s, y) \1{\{s \le x\}} \, \diff s
	\]
	is continuous on $\overline{I} \times J$ too, where $\overline{I}$ is the closure of $I$ in $[-\infty,+\infty]$. 
\end{lemma}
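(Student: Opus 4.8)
\textbf{Proof plan for Lemma~\ref{lem:fScheffe}.}

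The statement asserts continuity of $g(x,y) := \int_I f(s,y)\1\{s \le x\}\,\diff s$ on $\overline{I} \times J$, given that $f$ is continuous and that the full partial integral $f_2(y) = \int_I f(s,y)\,\diff s$ is continuous (hence finite and locally bounded) on $J$. The natural approach is a Scheffé-type / dominated-convergence argument run along sequences: fix a sequence $(x_n,y_n) \to (x_0,y_0)$ in $\overline{I} \times J$ and show $g(x_n,y_n) \to g(x_0,y_0)$. The plan is to split the difference as
\[
	g(x_n,y_n) - g(x_0,y_0)
	= \int_I \bigl( f(s,y_n) - f(s,y_0) \bigr) \1\{s \le x_n\}\,\diff s
	+ \int_I f(s,y_0) \bigl( \1\{s \le x_n\} - \1\{s \le x_0\} \bigr)\,\diff s,
\]
and control the two terms separately.

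For the second term, pointwise continuity of $f$ in $s$ gives $f(s,y_0) \in L^1(I)$ (it is dominated by... well, more carefully: $f_2(y_0) < \infty$ means $f(\cdot,y_0) \in L^1(I)$ directly), and $\1\{s \le x_n\} \to \1\{s \le x_0\}$ for every $s \ne x_0$, i.e.\ Lebesgue-a.e.; so dominated convergence (dominating function $f(\cdot,y_0)$) makes this term vanish. For the first term, the key is that $f(s,y_n) \to f(s,y_0)$ pointwise by continuity of $f$, while $\int_I f(s,y_n)\,\diff s = f_2(y_n) \to f_2(y_0) = \int_I f(s,y_0)\,\diff s$ by continuity of $f_2$; this is exactly the hypothesis of Scheffé's lemma, which yields $\int_I |f(s,y_n) - f(s,y_0)|\,\diff s \to 0$, and since $0 \le \1\{s \le x_n\} \le 1$ this dominates the first term. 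Combining, $g(x_n,y_n) \to g(x_0,y_0)$, which establishes (sequential, hence, on the metric space $\overline{I}\times J$, topological) continuity.

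The main obstacle, and the point requiring care, is the behaviour at the endpoints of $\overline{I}$: when $x_0$ is a (possibly infinite) endpoint of $I$, one must check that $\1\{s \le x_n\} \to \1\{s \le x_0\}$ still holds for a.e.\ $s \in I$ — at the left endpoint $g \equiv 0$ and at the right endpoint $g(\cdot,y) = f_2(y)$, so these extreme cases can alternatively be checked directly, but the sequential argument above handles them uniformly provided one notes that the set of $s$ where the indicator fails to converge is at most $\{x_0\}$, a null set, regardless of whether $x_0$ is interior, finite-endpoint, or $\pm\infty$. A secondary point is that Scheffé's lemma as usually stated requires the integrals to be finite; this is guaranteed here because $f_2$ takes values in $[0,\infty)$ by hypothesis. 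No uniformity in $y$ over all of $J$ is needed since we only argue along convergent sequences with $y_n \to y_0 \in J$.
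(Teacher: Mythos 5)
Your proposal is correct and follows essentially the same route as the paper's proof: the same two-term decomposition, with Scheffé's lemma (driven by pointwise convergence from continuity of $f$ plus convergence of the total integrals from continuity of $f_2$) handling the $y$-variation, and dominated convergence with dominating function $f(\cdot,y_0)\in L^1(I)$ handling the indicator term. Your extra remark on the endpoint cases of $\overline{I}$ is a sound clarification of a point the paper's proof passes over silently.
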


\begin{proof}[Proof of Lemma~\ref{lem:fScheffe}\pnt]
	For $y \in J$, define $f_y : I \to [0, \infty) : x \mapsto f_y(x) := f(x, y)$. Then $\lim_{y' \to y} f_{y'}(x) = f_{y}(x)$ for all $(x, y) \in I \times J$ and $\int_I f_{y'}(x) \, \diff x = f_2(y') \to f_2(y) = \int_{I} f_{y}(x) \, \diff x$ as $y' \to y$ in $J$. By Scheff\'e's lemma, it follows that
	\begin{equation}
	\label{eq:y'toyscheffe}
		\lim_{y' \to y} \int_{I} 
		\abs{ f_{y'}(x) - f_y(x) }
		\, \diff x = 0.
	\end{equation}
	But then, as $(x', y') \to (x, y)$ in $\overline{I} \times J$, we have
	\begin{align*}
    \lefteqn{\abs{
			\int_{I} f(s, y') \1{\{s \le x'\}} \, \diff s
			- 
			\int_{I} f(s, y) \1{\{s \le x\}} \, \diff s
		}} \\
    	&\le
		\int_{I} \abs{ f(s, y') - f(s, y) } \, \diff s
		+
		\int_{I} f(s, y) 
		\abs{ \1{\{s \le x\}} - \1{\{s \le x'\}} }
		\, \diff s
		\to 0.
    \end{align*}
	The first term on the right-hand side converges to zero by \eqref{eq:y'toyscheffe} and the second term vanishes asymptotically by the dominated convergence theorem.
\end{proof}

\begin{corollary}
\label{cor:cScheffe}
	Let $c$ be a bivariate copula density and let $C_{1|2}(u|v) = \int_{0}^{u} c(w,v) \, \diff w$ for $(u, v) \in [0, 1] \times (0, 1)$. If $c$ is continuous, then so is $C_{1|2}(\point|\point)$. 
\end{corollary}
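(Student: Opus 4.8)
The plan is to recognize Corollary~\ref{cor:cScheffe} as an immediate special case of Lemma~\ref{lem:fScheffe}, so the proof should be a short application rather than an independent argument. First I would set $I = (0,1)$ and $J = (0,1)$ and take $f = c$, the bivariate copula density. The hypothesis that $c$ is continuous is exactly the continuity of $f$ on $I \times J$ required by Lemma~\ref{lem:fScheffe}. The second hypothesis of that lemma, namely that $f_2 : v \mapsto \int_0^1 c(w,v)\,\diff w$ is continuous, holds here for free: since $c$ is a copula density, its first marginal is uniform, so $f_2(v) = 1$ for every $v \in (0,1)$, a constant and hence continuous function.

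With both hypotheses of Lemma~\ref{lem:fScheffe} verified, I would invoke it to conclude that the map $(u,v) \mapsto \int_0^1 c(w,v)\,\1\{w \le u\}\,\diff w$ is continuous on $\overline{I} \times J = [0,1] \times (0,1)$. It only remains to observe that this map coincides with $C_{1|2}(\point|\point)$ on $[0,1] \times (0,1)$: for $u \in [0,1]$ we have $\int_0^1 c(w,v)\,\1\{w \le u\}\,\diff w = \int_0^u c(w,v)\,\diff w = C_{1|2}(u|v)$, which is precisely the definition given in the statement of the corollary. Hence $C_{1|2}(\point|\point)$ is continuous, as claimed.

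There is essentially no obstacle here; the only thing to be slightly careful about is matching the notation and the domain conventions, in particular that $\overline{I}$ in Lemma~\ref{lem:fScheffe} is the closure $[0,1]$ of $(0,1)$, so that the corollary's claim of continuity up to the boundary values $u \in \{0,1\}$ is indeed covered. No new ideas or computations are needed beyond citing the lemma and noting the uniform-margin identity $f_2 \equiv 1$.

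\begin{proof}[Proof of Corollary~\ref{cor:cScheffe}\pnt]
	Apply Lemma~\ref{lem:fScheffe} with $I = (0,1)$, $J = (0,1)$, and $f = c$. Since $c$ is assumed continuous, $f$ is continuous on $I \times J$. Moreover, because $c$ is a bivariate copula density, its first marginal is uniform, so that $f_2(v) = \int_{0}^{1} c(w,v) \, \diff w = 1$ for all $v \in (0,1)$; in particular $f_2$ is continuous on $J$. By Lemma~\ref{lem:fScheffe}, the function
	\[
		(u, v) \mapsto \int_{0}^{1} c(w,v) \, \1\{w \le u\} \, \diff w
	\]
	is continuous on $\overline{I} \times J = [0,1] \times (0,1)$. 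For $(u,v) \in [0,1] \times (0,1)$ this function equals $\int_{0}^{u} c(w,v) \, \diff w = C_{1|2}(u|v)$, so $C_{1|2}(\point|\point)$ is continuous.
\end{proof}
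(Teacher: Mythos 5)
Your proof is correct and takes exactly the same route as the paper, which likewise proves the corollary by applying Lemma~\ref{lem:fScheffe} with $f = c$ and $I = J = (0,1)$. You simply spell out the two hypothesis checks (continuity of $c$, and $f_2 \equiv 1$ by the uniform-margin property) that the paper leaves implicit.
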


\begin{proof}[Proof of Corollary~\ref{cor:cScheffe}\pnt]
	Apply Lemma~\ref{lem:fScheffe} to $f = c$ and $I = J = (0, 1)$.
\end{proof}

\subsection{Proofs for Section~\ref{sec:simulation}}
\label{sec:proof:simu}

\begin{proof}[Proof of Lemma~\ref{lem:perm}\pnt]
	The proof is by induction on $d$. If $d = 3$, we can, after relabeling the nodes if necessary, assume that $\edges_1 = \cbr{\cbr{1,2},\cbr{2,3}}$. Then $\edges_2 = \cbr{e_2}$ with $e_2 = (13;2)$. The three permutations are then as follows:
	\begin{itemize}
		\item For $j = 1$, we have $\sigma_1 = (1,2,3)$ with $e_{1,1} = \cbr{1,2}$.
		\item For $j = 2$, we have $\sigma_2 = (2,1,3)$ with $e_{2,1} = \cbr{1,2}$ or $\sigma_2 = (2,3,1)$ with $e_{2,1} = \cbr{2,3}$.
		\item For $j = 3$, we have $\sigma_3 = (3,2,1)$ with $e_{3,1} = \cbr{2,3}$.
	\end{itemize}
	
	Next, let $d \ge 4$ and assume the statement is true for regular vine sequence on $d-1$ elements.
	Let $e = (a_e,b_e;\cndg_e) = \cbr{f,g}$ be the single edge in $\edges_{d-1}$, with $f, g \in \edges_{d-2}$ labeled in such a way that $a_e \in \cndd_f$ and $b_e \in \cndd_g$. 
	The regular vine sequence $\Vine$ induces regular vine sequences $\Vine_f$ and $\Vine_g$ on the node sets $\cunn_f = \dset \setminus \cbr{b_e}$ and $\cunn_g = \dset \setminus \cbr{a_e}$, both of which have $d-1$ elements; the edges $\bar{e}$ of the two regular vine sequences $\Vine_f$ and $\Vine_g$ are those of $\Vine$ that satisfy $\cunn_{\bar{e}} \subseteq \cunn_f$ or $\cunn_{\bar{e}} \subseteq \cunn_g$, respectively.
	
	Let $j \in \dset$.
	If $j \neq b_e$, then we can apply the induction hypothesis to $\Vine_f$ to find a bijection $\sigma_{f,j} : \cbr{1,\ldots,d-1} \to \cunn_f = \dset \setminus \cbr{b_e}$ such that (i) $\sigma_{f,j}(1) = j$ and (ii) $\sigma_{f,j}(k)$ belongs to the conditioned set of some edge $e_{k-1}$ in the $(k-1)$th tree of $\Vine_f$ as well as $\cbr{\sigma_{f,j}(i): i = 1,\ldots,k} = \cunn_{e_{k-1}}$, for all $k \in \cbr{2,\ldots,d-1}$.
	Extend $\sigma_{f,j}$ to a map $\sigma_j$ on $\dset$ by $\sigma_{j}(d) = b_e$. Then $\sigma_j$ meets all the requirements.
	
	If $j = b_e$, then $j \ne a_e$, and we can apply the same argument to $\Vine_g$.
\end{proof}

\begin{proof}[Proof of Proposition~\ref{prop:simu}\pnt]
The distribution of $Z_j^{(j)}$ is that of $Z_j$ given that $Z_j < 1$, and this is the uniform distribution on $(0, 1)$. The fact that the distribution of $\bZ^{(j)}$ defined in \eqref{eq:IMPj} is a consequence of the inverse Rosenblatt transform.

Further, the permutation $\sigma_j$ in Lemma~\ref{lem:perm} is such that for every $k \in \cbr{2,\ldots,d}$, there exists an edge $e_{j,k-1} \in \edges_{k-1}$ such that $\sigma_j(k) \in \cndd_{e_{j,k-1}} = \cbr{a_{e_{j,k-1}}, b_{e_{j,k-1}}}$ and $\cbr{\sigma_j(i) : i = 1,\ldots,k} = \cunn_{e_{j,k-1}} = \cbr{a_{e_{j,k-1}},b_{e_{j,k-1}}} \cup \cndg_{e_{j,k-1}}$. These properties imply the stated representation of the conditional quantile function.
\end{proof}

\subsection{Proofs for Section~\ref{sec:estimation}}
\label{sec:proofs7}

\begin{proof}[Proof of Proposition~\ref{prop:Z2c}\pnt]
\emph{(i)} By Definition~\ref{def:invmultpar}, the density of $\IMP$ is $\bz \mapsto \xdf(\bz) \1_{\LL}(\bz) / \xcdf(\LL)$ with $\LL = \cbr{ \bx : \min(\bz) < 1 }$. Let $K \subset \dset$ be non-empty and let $\bz_K \in (0, \infty)^K$ be such that \chng{$\min \bz_{K} < 1$}. The marginal density of $\IMP_K = (\Imp_k)_{k \in K}$ evaluated in $\bz_K$ is
\[
	\int_{(0,\infty)^{\dset \setminus K}} 
		\frac{\xdf(\bz) \1_{\LL}(\bz)}{\xcdf(\LL)} \,
	\diff \bz_{\dset \setminus K}.
\]
Since $\min \bz_K < 1$, we also have $\min \bz < 1$ for $\bz_{\dset \setminus K} \in (0,\infty)^{\dset \setminus K}$. As a consequence, the indicator function in the integral is always equal to one and, by Eq.~\eqref{eq:lambdaJ}, the density of $\bZ_K$ evaluated in $\bz_K$ simplifies to $\xdf_K(\bz_K) / \xdf(\LL)$.
	
Applying this identity to $K = J$ and $K = I \cup J$, we find, for $\bz_I \in (0, \infty)^I$ and for $\bz_J \in (0, \infty)^J$ such that $\min \bz_J < 1$ and $\xdf_J(\bz_J) > 0$, the conditional density of $\IMP_I$ given $\IMP_J = \bz_J$ evaluated in $\bz_I$: 
\[
	\frac{\xdf_{I \cup J}(\bz_{I \cup J}) / \xcdf(\LL)}{\xdf_{J}(\bz_{J}) / \xcdf(\LL)}
	= \frac{\xdf_{I \cup J}(\bz_{I \cup J})}{\xdf_{J}(\bz_{J})}
	= \xdf_{I|J}(\bz_I | \bz_J).
\]
	
\emph{(ii)} Let $\bz_J \in (0, \infty)^J$ be such that $\min \bz_J < 1$ and $\xdf_J(\bz_J) > 0$. By statement~(i) and by Eq.~\eqref{eq:cIJ} in Sklar's theorem (Proposition~\ref{lem:expdensity}), the conditional density of $\rbr{\xcdf_{i|J}(\Imp_i|\IMP_J)}_{i \in I}$ given $\IMP_J = \bz_J$ is $c_{I;J}$. Since, by the simplifying assumption, the latter copula density does not depend on $\bz_J$, the random vector $\rbr{\xcdf_{i|J}(\Imp_i|\IMP_J)}_{i \in I}$ is independent of $\IMP_J$, conditionally on $\min \IMP_J < 1$.
\end{proof}

\section{Exponent measure densities as X-vines}
\label{app:expmeas}

Recall the exponent measure $\expmeas$ in \eqref{eq:Lam2nu} and its density $\nupdf$ in \eqref{eq:nupdf}. To facilitate linking up our results with the literature on multivariate extreme value theory, we present the main formulas in the paper in terms of $\nupdf$ too.

For a proper, non-empty subset $J \subset \dset$, the marginal exponent measure density is
\begin{equation*}
	\nupdf_J(\by_J)
	= \int_{\by_{\setminus J} \in (0,\infty)^{d-|J|}}
	\nupdf(\by) \, \diff \by_{\setminus J},
	\qquad \by_J \in (0, \infty)^{|J|}.
\end{equation*}
For non-empty, disjoint subsets $I, J \subset \dset$, we write
\[
	\nupdf_{I|J}(\by_I|\by_J)
	= \frac{\nupdf_{I \cup J}(\by_{I \cup J})}{\nupdf_J(\by_J)},
\]
for $\by_I \in (0, \infty)^I$ and $\by_J \in (0, \infty)^J$ such that $\nupdf_J(\by_J) > 0$. If $I = \cbr{i}$, we consider the conditional tail function
\[
	\bexpmeas_{i|J}(y_i|\by_J)
	= \int_{y_i}^{\infty} \nupdf_{i|J}(t|\by_J) \, \diff t
	= \xcdf_{i|J}(1/y_i \mid 1/\by_J)
\]
as well as its inverse $\bexpmeas_{i|J}^{-1}$ defined by
\[
	\bexpmeas_{i|J}^{-1}(u_i|\by_J) = y_i
	\iff 
	\bexpmeas_{i|J}(y_i|\by_J) = u_i
\]
for $u_i \in (0, 1)$. The copula density in \eqref{eq:cIJ} becomes
\begin{equation}
\label{eq:lam2c}
	c_{I;J}(\bu_I;1/\by_J)
	= \frac{\nupdf(\by_I|\by_J)}{\prod_{i \in I} \nupdf_{i|J}(y_i|\by_J)}
	\quad \text{with} \quad
	y_i = \bexpmeas_{i|J}^{-1}(u_i|\by_J).
\end{equation}

Under the simplifying assumption, the exponent measure density $\nupdf$ corresponding to the X-vine tail copula density $\xdf$ in \eqref{eq:xdfdecomp} is
\[
	\nupdf(\by) = 
	\prod_{j \in \dset} y_j^{2\deg_1(j) - 2}
	\cdot 
	\prod_{e \in \edges_1} \nupdf_{a_e,b_e}(y_{a_e},y_{b_e}) 
	\cdot \prod_{j=2}^{d-1} \prod_{e \in \edges_j}
	c_{a_e,b_e;\cndg_e} \rbr{
		\bexpmeas_{a_e|\cndg_e}(y_{a_e}|\by_{\cndg_e}), \,
		\bexpmeas_{b_e|\cndg_e}(y_{b_e}|\by_{\cndg_e})
	}.
\]
Here, $\deg_1(j)$ is the degree (number of neighbors) of node $j$ in tree $\tree_1$ and the pair copula densities can be found from $\nupdf$ by
\[
	c_{a_e,b_e;\cndg_e}(u_{a_e}, u_{b_e})
	= \frac{\lambda_{a_e,b_e|\cndg_e}(y_{a_{e}},y_{b_{e}}| \by_{\cndg_e})}{\lambda_{a_e|\cndg_e}(y_{a_{e}}|\by_{\cndg_e}) \cdot \lambda_{b_e|\cndg_e}(y_{b_{e}}|\by_{\cndg_e})}
\]
for $(u_{a_e}, u_{b_e}) \in (0, 1)^2$,
with $y_{a_e} = \bexpmeas_{a_e|\cndg_e}^{-1}(u_{a_e}|\by_{\cndg_e})$ and $y_{b_e} = \bexpmeas_{b_e|\cndg_e}^{-1}(u_{b_e}|\by_{\cndg_e})$.

\section{Estimating the copula parameters $\theta_{\cpdfset}$ in $\tree_2,\ldots,\tree_{d-1}$.}
\label{sec:estimationdetail}

Start with an edge $e = (a_e,b_e ; \cndg_e) \in \edges_2$, i.e., such that $\cndg_e$ is a singleton, $\cndg_e = \cbr{d_e}$, say. Then $f = \cbr{a_e, d_e}$ and $g = \cbr{b_e, d_e}$ are edges in $\edges_1$. By Proposition~\ref{prop:Z2c}(ii), \chng{$c_{a_e,b_e;\cndg_e}(\point,\point;\theta_e)$ is the density of the random pair
$\rbr{\xcdf_{a_e|d_e}(\Imp_{a_e}|\Imp_{d_e};\theta_f), \xcdf_{b_e|d_e}(\Imp_{b_e}|\Imp_{d_e};\theta_g)}$}.
The tail copula density parameters $\theta_f$ and $\theta_g$ have been estimated in the previous step, yielding estimates $\htheta_f$ and $\htheta_g$.
Define pseudo-observations $(\hU_{i,a_e;\cndg_e}, \hU_{i,b_e;\cndg_e})$ for $i \in \iset_{\cndg_e} = \iset_{d_e}$ from $c_{a_e,b_e;\cndg_e}$ by
$\hU_{i,a_e ; \cndg_e}
    = \xcdf_{a_e | \cndg_e} (\hZ_{i,a_e} \mid \hZ_{i,d_e} ; \; \htheta_{f})$ and $\hU_{i,b_e ; \cndg_e}  
    = \xcdf_{b_e | \cndg_e} (\hZ_{i,b_e} \mid \hZ_{i,d_e} ; \; \htheta_{g})$.
We estimate the parameter (vector) $\theta_{e} \in \Theta_{e}$ by maximising the pseudo-likelihood
\begin{equation}
\label{eq:lik2}
	\lik_{\cpdfset, e} \rbr{
		\theta_e ; \; 
		\rbr{
			\hU_{i,a_e ; \cndg_e}, 
			\hU_{i,b_e ; \cndg_e}
		}, i \in \iset_{\cndg_e} 
	}
	=  \prod_{i \in \iset_{\cndg_e}} 
	c_{a_e,b_e ; \cndg_e} \rbr{ 
		\hU_{i,a_e ; \cndg_e}, \hU_{i,b_e ; \cndg_e} ; \,
		\theta_{e} 
	}.
\end{equation}
Other estimators are possible based on parametric relationships between \chng{dependence coefficients such as Kendall's tau or Spearman's rho and} the parameters of corresponding bivariate copula densities \citep{czado2019analyzing}. 

Parameters (or parameter vectors) associated with edges $e = (a_e, b_e; \cndg_e)$ in $\edges_j$ for layers $j \in \cbr{3,\ldots,d-1}$ can be estimated similarly, proceeding recursively.
Suppose that all parameters associated with edges in $\edges_{1:(j-1)} = \edges_1 \cup \cdots \cup \edges_{j-1}$ have already been estimated; let $\theta(\edges_{1:(j-1)})$ denote the parameters and $\htheta(\edges_{1:(j-1)})$ their estimate.
By Proposition~\ref{prop:Z2c}(i), \chng{$c_{a_e,b_e; \cndg_e}$ is the density of the random pair
$\rbr{
		\xcdf_{a_e|\cndg_e} \rbr{
			\Imp_{a_e} \mid \IMP_{\cndg_e}; \;
			\theta(\edges_{1:j-1})
		}, \;
		\xcdf \rbr{
			\Imp_{b_e} \mid \IMP_{\cndg_e}; \;
			\theta(\edges_{1:j-1})
		}
	}$.} Define pseudo-observations $(\hU_{i,a_e;\cndg_e}, \hU_{i,b_e;\cndg_e})$ for $i \in \iset_{\cndg_e}$ from $c_{a_e,b_e; \cndg_e}$ by
\begin{equation*}
	\hU_{i,a_e ; \cndg_e}
	= \xcdf_{a_e | \cndg_e} \rbr{\hZ_{i,a_e} \mid \hbZ_{i,\cndg_e} ; \; \htheta(\edges_{1:j-1})}, \quad
	\hU_{i,b_e ; \cndg_e}  
	= \xcdf_{b_e | \cndg_e} \rbr{\hZ_{i,b_e} \mid \hbZ_{i,\cndg_e} ; \, \htheta(\edges_{1:j-1})}.
\end{equation*}
\chng{The conditional distribution functions} $\xcdf_{a_e|\cndg_e}$ and $\xcdf_{b_e|\cndg_e}$ are calculated recursively via Eq.~\eqref{eq:xvine:recur} with $e$ in that formula replaced by the edges $f, g \in \edges_{j-1}$ that are joined by $e$, that is, such that $e = \cbr{f, g}$.
The parameter (vector) $\theta_{e}$ is then obtained by maximising the pseudo-likelihood~\eqref{eq:lik2}.

\section{Tail dependence coefficients}
\label{app:chi}

\subsection{\chng{Tail dependence coefficients for parametric families of tail copulas}}\label{sec:chi}

Dependence measures facilitate the comparison of parameter estimates across different models on a consistent scale.
In particular, closed-form expressions for the bivariate tail dependence coefficient $\chi = \xcdf(1,1)$ are well-established for widely used parametric families of tail copula densities $\xcdf$, albeit it in the form of exponent measures. 

\begin{itemize}
	\item \HR{} model, $\xdf(x_1,x_2;\theta)=x_{1}^{-1}\frac{1}{\sqrt{2\pi\theta}}\exp\left[-\tfrac{1}{2}\left\{\frac{\log(x_{1}/x_{2})-\theta/2}{\sqrt{\theta}}\right\}\right]$ for $\theta > 0$: we have $\chi=2-2\Phi(\sqrt{\theta}/2)$ where $\Phi$ is the standard normal cumulative distribution function.
	\item Negative logistic model, $\xdf(x_1,x_2;\theta)=(1+\theta)(x_{1}x_{2})^{-(\theta+1)}(x_{1}^{-\theta}+x_{2}^{-\theta})^{-1/\theta-2}$ for $\theta > 0$: we have $\chi=2^{-1/\theta}$.
	\item Logistic model, $\xdf(x_1,x_2;\theta)=(\theta-1)(x_{1}x_{2})^{\theta-1}(x_{1}^{\theta}+x_{2}^{\theta})^{1/\theta-2}$ for $\theta > 1$: we have $\chi=2-2^{1/\theta}$.
	\item Dirichlet model, $\xdf(x_1,x_2;\theta)=\frac{2\Gamma(2\theta)}{\Gamma(\theta)^2}\left(x_{1}+x_{2}\right)^{-2\theta-1}(x_{1}x_{2})^{\theta}$ for $\theta > 0$: we have $\chi=\int_{0}^{1} I_{\frac{1}{1+x}}(\theta+1,\theta) \, \diff x$ where $I_{\frac{1}{1+x}}$ is the regularised (incomplete) Beta function. 
\end{itemize}
For the parametric relationships between Kendall's tau $\tau_e$ and pair-copula parameters, see for instance \citet[Section~3.5]{czado2019analyzing}.

\subsection{\chng{Tail dependence coefficients as goodness-of-fit measures}}
\label{sec:taildepcoefgof}

In Sections~\ref{sec:simulation-study} and \ref{sec:casestudy}, we frequently use (pairwise) tail dependence coefficients to assess the goodness-of-fit of the estimated X-vine models. The aim of this section is to explain the difference between what we call \emph{model-based}, \emph{empirical}, and \emph{fitted} tail dependence coefficients, and in particular, to explain how these are computed. First, recall the definition of the dependence coefficient $\chi_J$ for sets $J \subseteq \dset$ with two or more elements in Section~\ref{sec:background}. Note that
\begin{align*}
	\chi_{J} & = \xcdf \rbr{ \left\{\bx \in [0,\infty)^d : \max_{j \in J} x_j < 1 \right\} }  
	=  \xcdf \rbr{ \left\{ \bx \in [0,\infty)^d : \max_{j \in J} x_j < 1 \right\} \cap \LL} \\
	& = \xcdf \rbr{\LL} \prob \rbr{\max_{j \in J} Z_j < 1}  = \prob \rbr{ \max_{j \in J \setminus k} Z_j < 1 \mid Z_k < 1},
\end{align*}
since $\prob \rbr{Z_k < 1} = 1/R(\LL)$ for any $k \in J$. Empirical estimates of $\chi_J$ will be based on the right-most expression in above equation.

\subsubsection{Model-based tail dependence coefficients} 
Model-based $\chi$'s, used in Section~\ref{sec:simulation-study}, are those corresponding to the true X-vine specification. The parametric expressions given in Section~\ref{sec:chi} can only be used for pairs of variables connected by an edge in the first tree of the X-vine, i.e., for $\chi_{ab}$ with $\{a,b\} \in E_1$. 
More generally, model-based $\chi_{J}$'s for sets $J \subseteq \dset$ with two or more elements will have to be calculated via Monte Carlo simulation:
\begin{compactenum}
	\item Simulate a sample of size $N$ from the inverted multivariate Pareto distribution associated with the true X-vine specification, $\bZ_1,\ldots,\bZ_N$. In Section~\ref{sec:simulation-study}, we used $N = 5000$.
	\item 
	Use this sample to estimate $\chi_J$ for non-empty $J \subseteq \dset$ as the average of the $|J|$ ratios
	\begin{equation}\label{eq:chiab}
	\frac{|\bigcap_{j \in J} Q_j|}{|Q_k|}, \qquad k \in J,
	\end{equation}
	where $Q_j = \cbr{i = 1, \ldots,N : Z_{i,j} <  1}$ for $j \in \dset$.
\end{compactenum} 

\subsubsection{Empirical tail dependence coefficients}
In Section~\ref{sec:casestudy}, we calculate the empirical tail dependence coefficients for the large flight delays data $\hat{\bx}_1,\ldots,\hat{\bx}_n$ as follows.
\begin{compactenum}
	\item Apply the rank transformation of Eq.~\eqref{eq:margins:ranks} to obtain a sample $\hat{\bu}_1,\ldots,\hat{\bu}_n$. 
	\item For non-empty $J \subseteq \dset$, estimate $\chi_J$ by
	\[
	\hchi_J = \frac{1}{k} \sum_{i=1}^n \prod_{j\in J} \1 \cbr{\hat{u}_{i,j} < k/n}.
	\]
\end{compactenum}
Using instead \eqref{eq:chiab} applied to the sample $\hat{\bz}_i = (n/k) \hat{\bu}_i$ for $i = 1,\ldots,n$ leads to the same estimate as $\hchi_{J}$. Indeed, replacing $Z_{i,j}$ by $\hat{z}_{i,j}$ in the definition of $Q_j$ gives the set 
$K_j = \{i = 1,\ldots,n : \hat{u}_{i,j} < 1\}$ and $|K_j| = k$ for all $j \in \dset$.

\subsubsection{Fitted tail dependence coefficients} 
\label{sss:fittedtaildepcoef}
Sections~\ref{sec:simulation-study} and \ref{sec:casestudy} both use fitted tail dependence coefficients. 
In the simulation study in Section~\ref{sec:simulation-study}, these are calculated as the model-based tail dependence coefficients described above; the only difference is that in step~1, we simulate a sample of size $N = 5000$ from the \emph{fitted} X-vine model.
However, in Section~\ref{sec:casestudy}, the fitted tail dependence coefficients are calculated as follows:
\begin{compactenum}
    \item Simulate a sample of size $N = 3600$ (approximately the same sample size as that of the data) from the inverted multivariate Pareto distribution associated with the \emph{fitted} X-vine model, $\bZ_1, \ldots, \bZ_N$.
    \item Apply the procedure outlined in the paragraph ``Empirical tail dependence coefficients'' to the multivariate Pareto sample $\bZ_1, \ldots, \bZ_N$. 
\end{compactenum}
The adjusted sample size of $N=3600$ and our treatment of the multivariate Pareto sample as if stemming from an unknown distribution ensure that the fitted tail dependence coefficients are comparable to the empirical ones in terms of bias.

\section{Additional numerical results}
\label{sec:additional}

\subsection{Additional simulation results to those in Section~\ref{sec:simulation-study}}\label{sec:simu:additional}

\subsubsection{\chng{Box-plots of dependence measure estimates based on pre-specified bivariate parametric families for each edge}}
\label{sec:simu:addtional:dep}

\chng{
Similar to Fig.~\ref{fig:boxplot_Fam}, but this time assuming the same vine structure and the pre-specified bivariate parametric families in the
X-vine specification shown in Fig.~\ref{fig:5d_Xvine_a}, we create box-plots of dependence measure estimates in Fig.~\ref{fig:boxplot_dep_2}.
As expected, we see lower variability and less bias, especially in deeper trees (e.g. $\tree_3$ and $\tree_4$).}

\begin{figure}[h!]%
    \centering
  
  \includegraphics[width=0.5\textwidth]{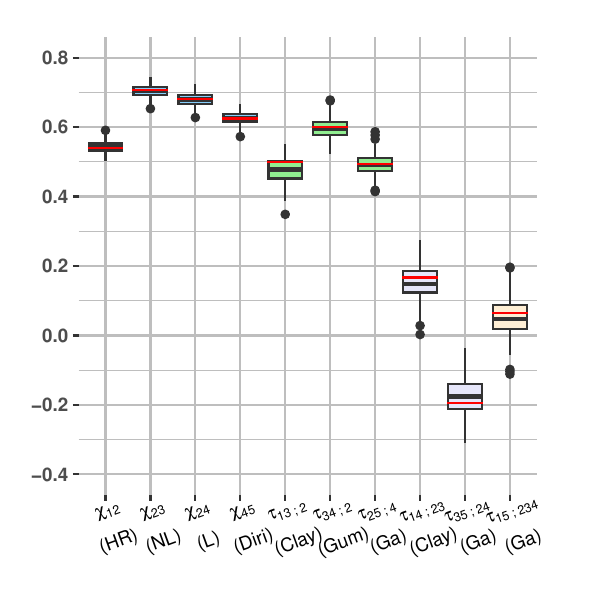}%
    \caption{\label{fig:boxplot_dep_2}
	Box-plots (\textcolor{lightskyblue}{$\blacksquare$} for $\tree_1$
    \textcolor{lightgreen}{$\blacksquare$} for $\tree_2$
    \textcolor{lavender(web)}{$\blacksquare$} for $\tree_3$
    \textcolor{papayawhip}{$\blacksquare$} for $\tree_4$) of dependence measure estimates from specified bivariate parametric families for each edge across the four trees in the X-vine specification in Fig.~\ref{fig:5d_Xvine_a}.}.
\end{figure}

\subsubsection{Box-plots of dependence measure estimates with varying $(n,k)$}
\label{sec:simu:addtional:nk}

Given the X-vine specification in Fig.~\ref{fig:5d_Xvine_a}, Fig.~\ref{fig1additional} shows box-plots of dependence measure estimates from the specified bivariate parametric families for each edge across the four trees, using various $(n,k)$, following the same set-up as in Section~\ref{sec:simu:addtional:dep}.
We consider three different sample sizes, $n \in \cbr{1\,000, 2\,000, 4\,000}$, and quantiles, $k/n \in \cbr{0.02,0.05,0.1}$. \chng{As expected, variability decreases as $k$ increases (for fixed $n$) or as $n$ increases (for fixed $k$).}

\begin{figure}[p]%
    \centering
    \subfloat[]{\includegraphics[width=0.33\textwidth]{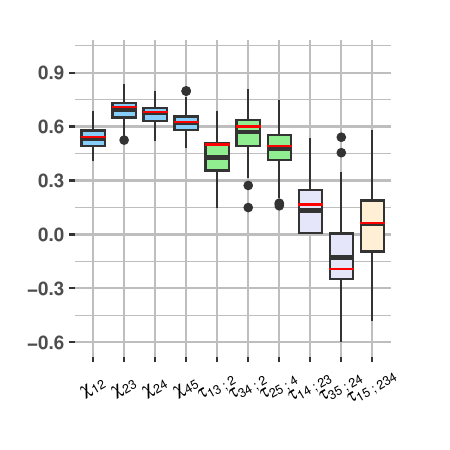}}%
    \enspace
    \subfloat[]{\includegraphics[width=0.33\textwidth]{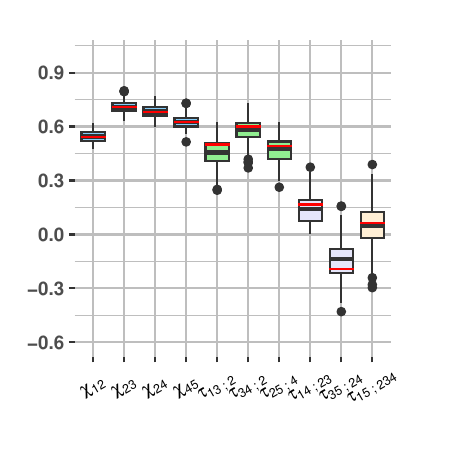}}%
    \enspace
    \subfloat[]{\includegraphics[width=0.33\textwidth]{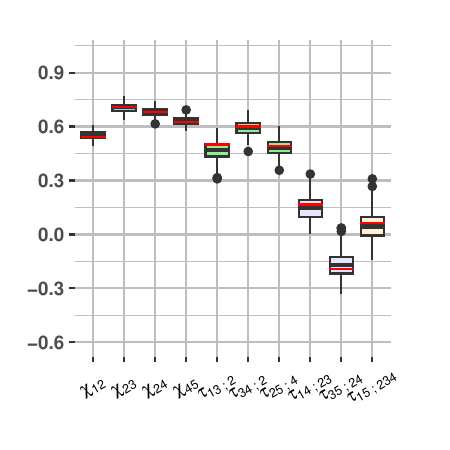}}%
    \\
    \subfloat[]{\includegraphics[width=0.33\textwidth]{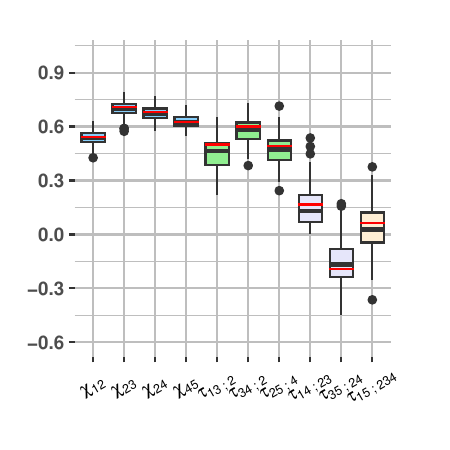}}%
    \enspace
    \subfloat[]{\includegraphics[width=0.33\textwidth]{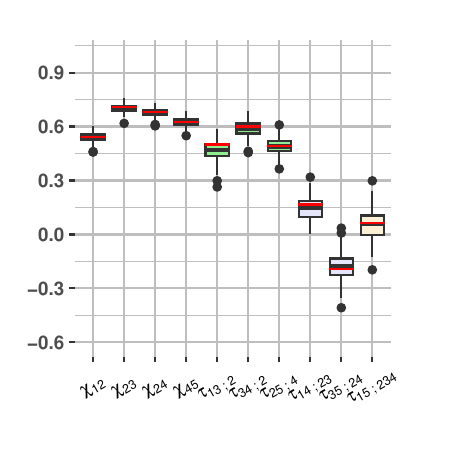}}%
    \enspace
    \subfloat[]{\includegraphics[width=0.33\textwidth]{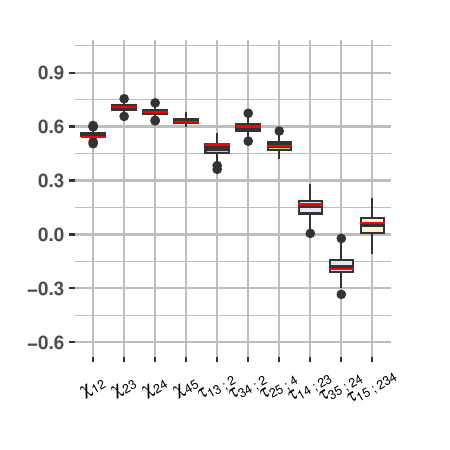}}%
    \\
    \subfloat[]{\includegraphics[width=0.33\textwidth]{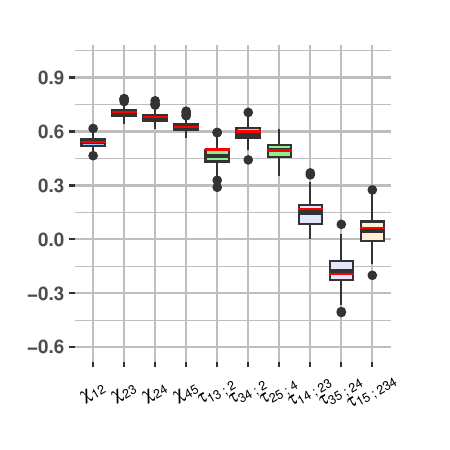}}%
    \enspace
    \subfloat[]{\includegraphics[width=0.33\textwidth]{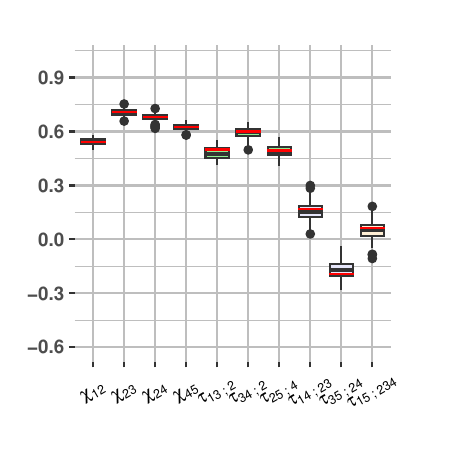}}%
    \enspace
    \subfloat[]{\includegraphics[width=0.33\textwidth]{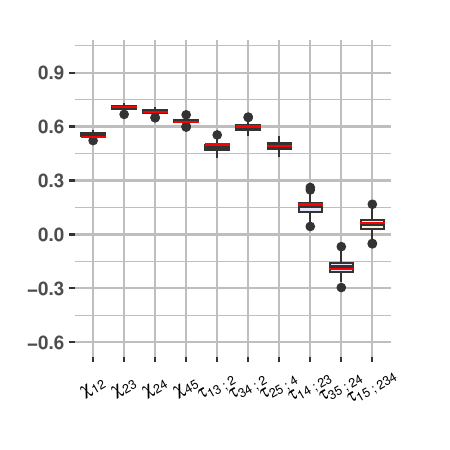}}%
    \caption{\label{fig1additional} Box-plots of dependence measure estimates from specified parametric models in the X-vine specification shown in Fig.~\ref{fig:5d_Xvine_a}, varying with $(n,k)$: (a) $(n,k)=(1\,000,20)$ (b) $(n,k)=(1\,000,50)$ (c) $(n,k)=(1\,000,100)$ (d) $(n,k)=(2\,000,40)$  (e) $(n,k)=(2\,000,100)$ (f) $(n,k)=(2\,000,200)$ (g) $(n,k)=(4\,000,80)$ (h) $(n,k)=(4\,000,200)$ (i) $(n,k)=(4\,000,400)$}%
\end{figure}

\subsubsection{Box-plots of pseudo-maximum likelihood estimates of parameters of bivariate tail copulas}
\label{sec:boxplotmle}

Using the same X-vine specification in Fig.~\ref{fig:5d_Xvine_a}, we evaluate bias and variance of maximum pseudo-likelihood parameter estimates of tail copula densities.
Specifically, Fig.~\ref{fig:bp_onepar} shows box-plots of estimates for the \HR{} tail copula density associated with the edge $e_{12} \in \edges_1$, using the same combinations of $(n,k)$ defined in Section~\ref{sec:simu:addtional:nk}.
An increase in the effective sample size $n_{D_e}$ reduces variability, while a decreasing threshold $k/n$ leads to smaller bias, as expected.
Similar results are observed with other tail copula densities (figures are omitted).

\begin{figure}[ht]%
    \centering
    \subfloat[]{\includegraphics[width=0.33\textwidth]{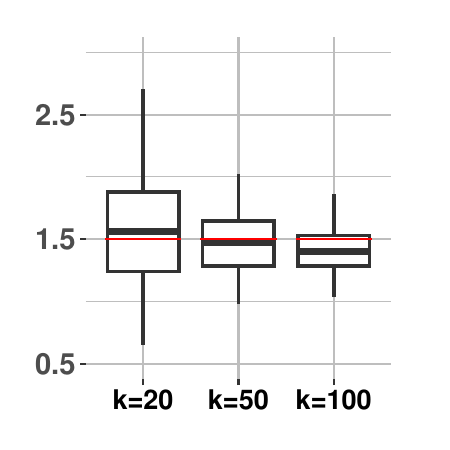}}%
    \enspace
    \subfloat[]{\includegraphics[width=0.33\textwidth]{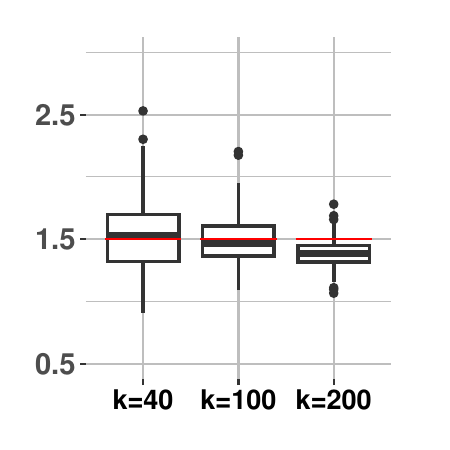}}%
    \enspace
    \subfloat[]{\includegraphics[width=0.33\textwidth]{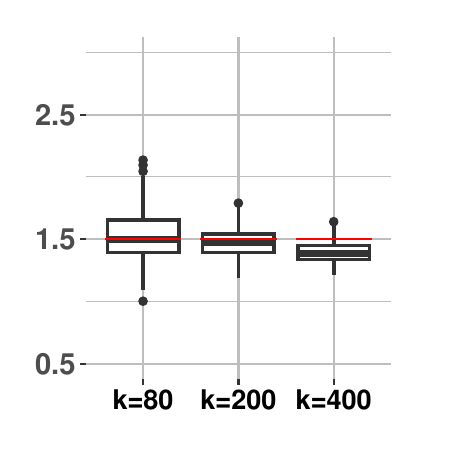}}%
    \caption{\label{fig:bp_onepar}Box-plots of maximum pseudo-likelihood parameter estimates for the bivariate \HR{} tail copula density with varying $(n,k)$: (a) $n=1\,000$ (b) $n=2\,000$, and (c) $n=4\,000$. The red line indicates the specified parameter value of $\theta=1.5$.}%
\end{figure}

\subsubsection{Averaged-effective sample sizes in different trees}\label{sec:avsamplesize}
We also investigate how effective sample sizes change over tree levels of Section~\ref{sec:simulation-study:select}.
We calculate the averaged-effective sample size over 200 repetitions for each edge in each tree and express it as a percentage of the sample size $n$:
\begin{center}
        \begin{tabular}{rrrrr}
        \midrule
        $\tree_1$ & 7.30\% & 6.47\% & 6.60\% & 6.87\% \\ 
        $\tree_2$ &  & 5.00\% & 5.00\% & 5.00\% \\ 
        $\tree_3$ &  &  & 3.52\% & 3.38\% \\ 
        $\tree_4$ &  &  &  & 3.12\% \\ 
        \bottomrule
        \end{tabular}
        \end{center}

        \chng{The order of the components from left to right and then from top to bottom} corresponds to the x-axis order of box-plots in Fig.~\ref{fig:boxplot_Fam}.
The overall averaged-effective sample size decreases in the first level tree when considering stronger extremal structures (result is omitted). 
This reduction is due to the fact that, for a high threshold $u$, the probability $\prob[X_a > u \text{ or } X_b > u]$ \emph{decreases} as the tail dependence between $X_a$ and $X_b$ \emph{increases}.
For the second level tree $\tree_2$, the averaged-effective sample is $k/n = 5\%$, since the use of the rank transform implies (in the absence of ties) the identity $\iset_j = k$ for all $j \in \dset$.

\subsection{Additional results for US flight delay data in Section~\ref{sec:casestudy}}
\label{sec:flight:additional}

\subsubsection{Estimated regular vine sequence of the X-vine model from $\tree_1$ to $\tree_7$}

Assigning $\hchi_e$ as edge weight for $e \in \edges_1$ and $\htau_e$ for $e \in \edges_j$ with $j \ge 2$, we obtain the sequence of maximum spanning trees that satisfy the proximity condition.
With the mBIC-optimal truncation level of $q^*=7$, Figure~\ref{fig:XvineFlightGraph} displays the trees of the X-vine model from $\tree_1$ to $\tree_7$ sequentially. For trees $\tree_2$ to $\tree_7$, the edges for which the independence copula is selected are not shown.

\begin{figure}[h!]%
    \centering
    \subfloat[]{\label{fig:XvineFlightGrapha}
\includegraphics[width=0.24\textwidth]{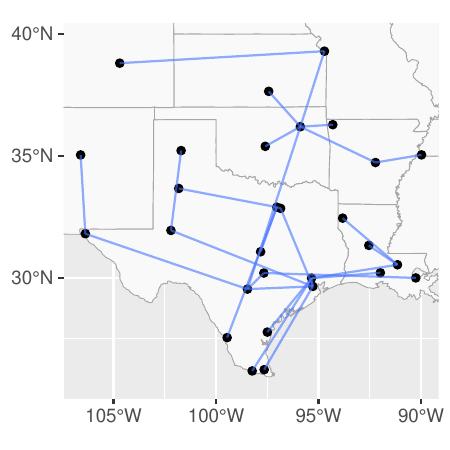}}%
    \enspace
    \subfloat[]{\includegraphics[width=0.24\textwidth]{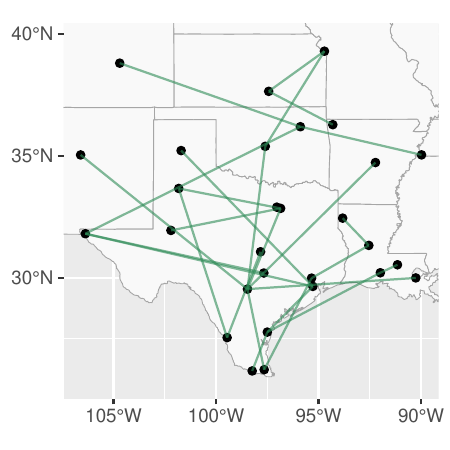}}%
    \enspace
    \subfloat[]{\includegraphics[width=0.24\textwidth]{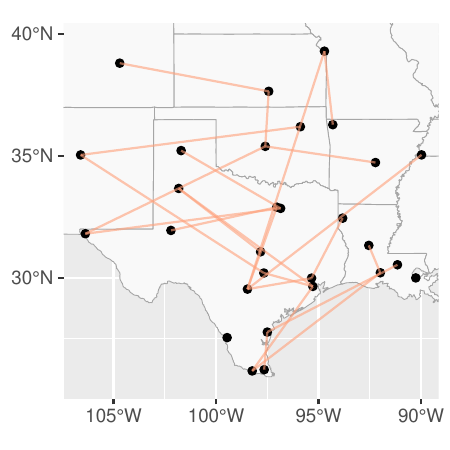}}%
    \enspace
    \subfloat[]{\includegraphics[width=0.24\textwidth]{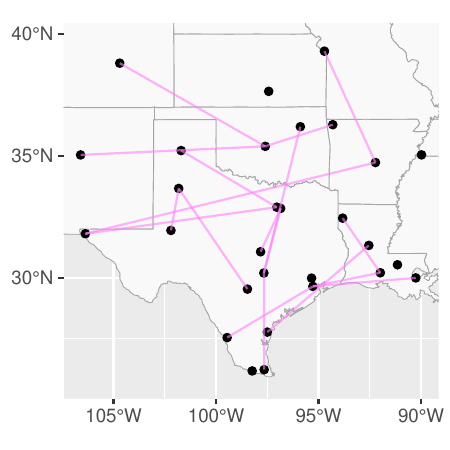}}%
    \\
    \subfloat[]{\includegraphics[width=0.24\textwidth]{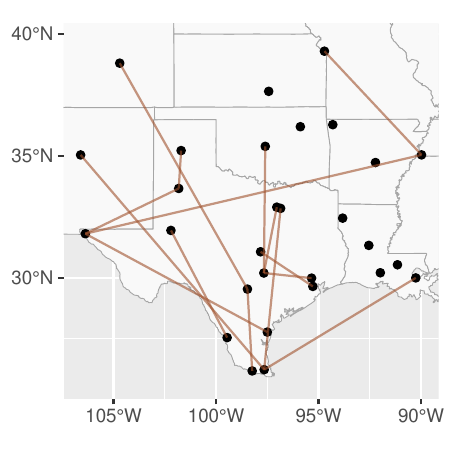}}%
    \enspace
    \subfloat[]{\includegraphics[width=0.24\textwidth]{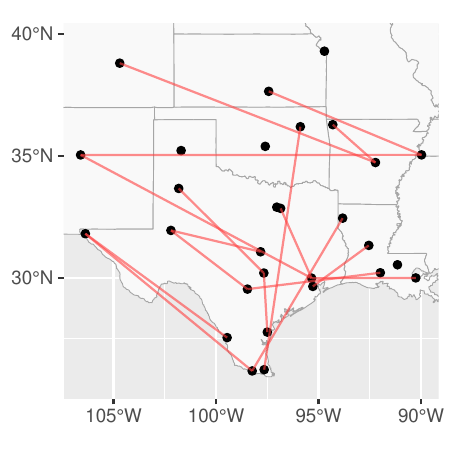}}%
    \enspace
    \subfloat[]{\includegraphics[width=0.24\textwidth]{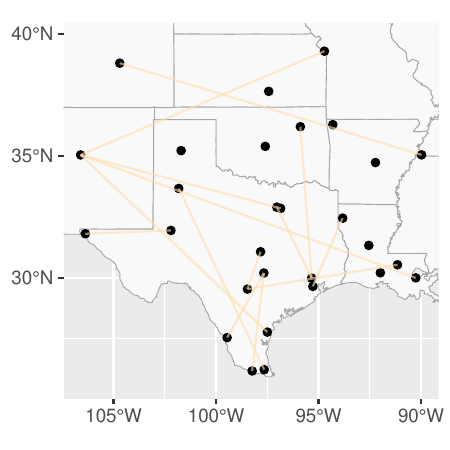}}%
    \caption{\label{fig:XvineFlightGraph}
Sequential display of subsequent trees of the truncated X-vine model. For trees $\tree_2$ to $\tree_7$, the edges for which the independence copula is selected are not shown. (a) $\tree_1$ (b) $\tree_2$ (c) $\tree_3$ (d) $\tree_4$ (e) $\tree_5$ (f) $\tree_6$ (g) $\tree_7$.}%
\end{figure}

\subsubsection{Goodness of fit for the fitted (truncated) X-vine model}
\label{sec:gofXvine}

To assess the goodness of fit for the full X-vine model, we compare empirical pairwise $\hchi$ values with those derived from the full X-vine model in Fig.~\ref{fig:ChiflightFullXvine}.
As an indicator of capturing a higher-dimensional extremal dependence, we use a trivariate (lower) tail dependence coefficient $\chi_{abc} = \xcdf_{abc}(1,1,1)$ for distinct $a,b,c\in \dset$. 
The empirical trivariate tail dependence coefficients \chng{(Section~\ref{sec:taildepcoefgof})} are compared with those derived from the full X-vine model in Fig.~\ref{fig:Chi3FlightFull} and from the truncated X-vine model in Fig.~\ref{fig:Chi3FlightTrunc}, respectively.

\begin{figure}[h!]%
    \centering
    \subfloat[\label{fig:ChiflightFullXvine}]{\includegraphics[width=0.33\textwidth]{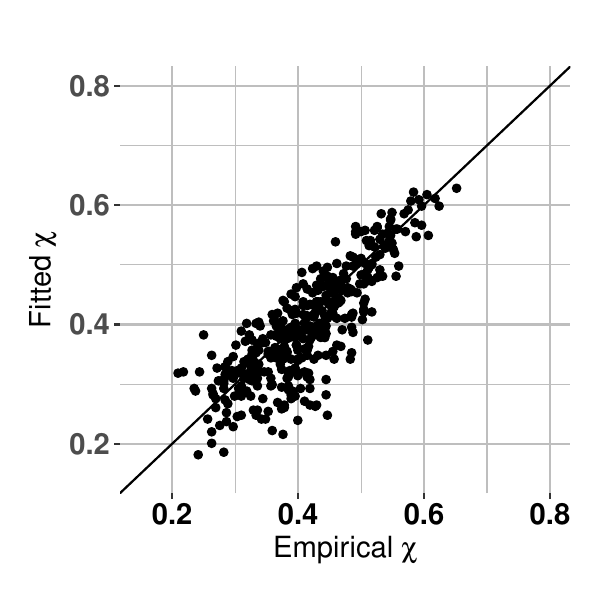}}%
    \enspace
    \subfloat[\label{fig:Chi3FlightFull}]{\includegraphics[width=0.33\textwidth]{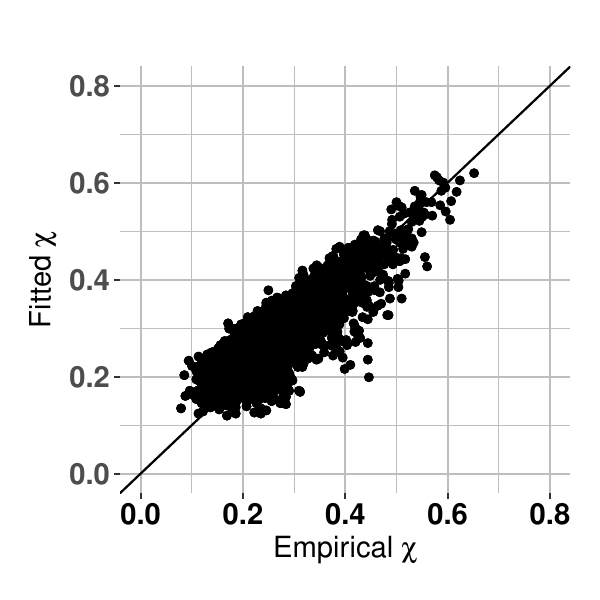}}%
    \enspace
    \subfloat[\label{fig:Chi3FlightTrunc}]{\includegraphics[width=0.33\textwidth]{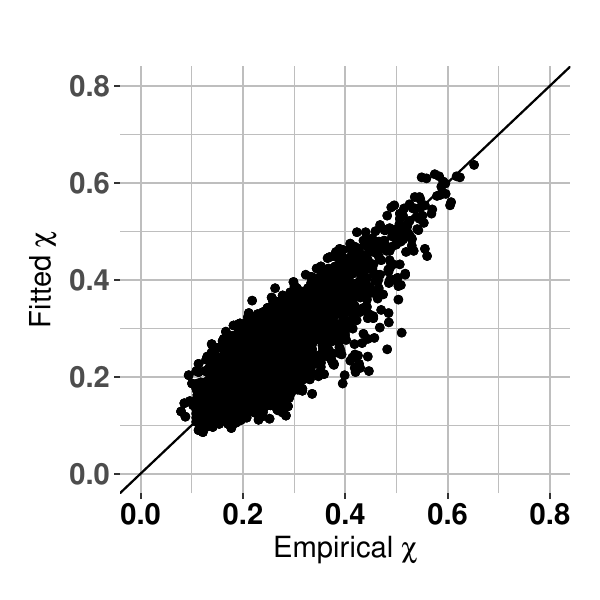}}%
    \caption{\label{fig:chifit}(a) A $\chi$-plot comparing empirical pairwise tail dependence coefficients $\hchi_{a,b}$ for distinct $a,b\in\dset$ with those derived from the full X-vine model. (b) A plot comparing empirical trivariate tail dependence coefficients $\hchi_{a,b,c}$ for distinct $a,b,c\in\dset$ with those obtained from the full X-vine model. (c) similar to (b), but for the truncated X-vine model with a truncation level of $q^*=7$.}
\end{figure}

\subsubsection{Sensitivity analysis for mBIC truncation levels over quantiles}

\chng{Fig.~\ref{fig:mBICFlight} illustrates the mBIC-optimal truncation level of $q^*=7$.
We also conduct a sensitivity analysis to observe how the mBIC-optimal truncation level changes over quantiles. Fig.~\ref{fig:mBICvsQuan} shows a} decline in the mBIC-optimal truncation level as the quantile $k/n$ decreases.
The range of mBIC-optimal truncation levels is from 4 to 8 with a median of 6.

\begin{figure}[h!]%
    \centering
    \subfloat[\label{fig:mBICFlight}]{\includegraphics[width=0.34\textwidth]{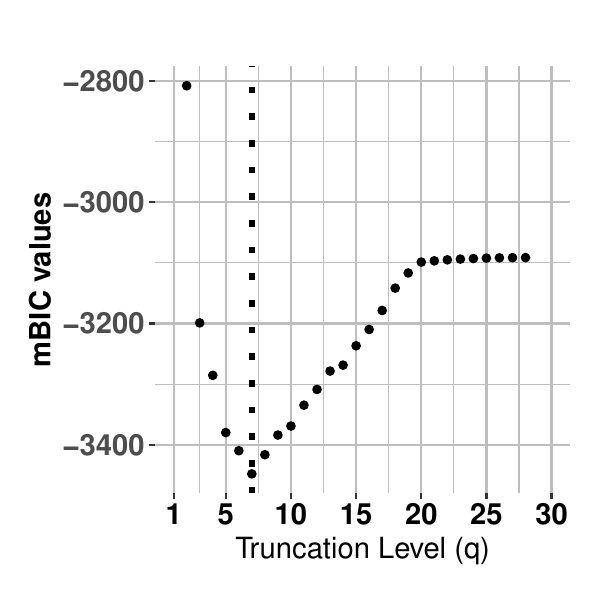}}%
    \qquad
    \subfloat[\label{fig:mBICvsQuan}]{\includegraphics[width=0.32\textwidth]{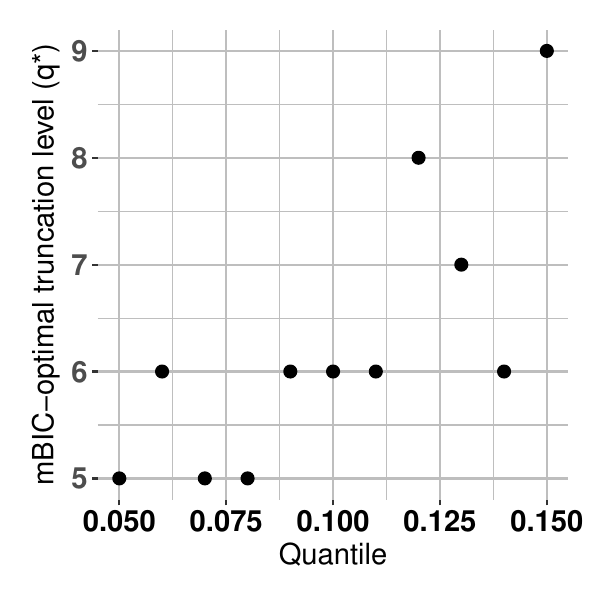}}%
    \caption{(a) mBIC-values plotted across tree levels with a dotted line indicating a selected mBIC-optimal truncation level of $q^*=7$.
    (b) Plot of the mBIC-optimal truncation level across a sequence of quantiles $k/n\in(0.05, 0.15)$.}%
\end{figure}

\subsubsection{Model comparison: X-vine \HR{} model versus \HR{} extremal graphical model from \textsf{EGlearn}}
\label{app:HRcomp}

Focusing on the \HR{} distribution, we compare model assessments between the X-vine \HR{}
model (all tail copulas in $\tree_1$ are \HR{} and all bivariate copulas in the subsequent trees are Gaussian) and the \HR{} extremal graphical model obtained with \textsf{EGlearn} algorithm \citep{engelke2021learning}.
In both models, within $\tree_1$, we use the empirical extremal variogram matrix $\hGamma$ as edge weight for the minimum spanning tree.
The resulting tree must be identical for both models, as confirmed in Fig.~\ref{fig:XvineHRgraph}.
As described in Section~\ref{sec:casestudy}, with the tuning parameter value of $\rho^*=0.1$ selected by the algorithm, the resulting sparse extremal graph has 148 edges, shown in Fig.~\ref{fig:ChiFlightEGlearn}.
For the \HR{} X-vine, 
the algorithm sets all bivariate copulas to the independence copula from tree level 19 on (i.e., the truncation level is $q=18$), selecting 217 independence copulas out of 406 edges in trees $\tree_1$ to $\tree_{28}$ (recall $d = 29$).
Using the mBIC, we further consider the truncated \HR{} X-vine with a lower truncated level.
The optimal mBIC truncation level of $q^*=7$ results in a total of 175 edges in edges $\tree_1$ to $\tree_7$, shown in Fig.~\ref{fig:mBICFlight}. In trees $\tree_2$ to $\tree_7$, there are $46$ edges out of $147$ for which the independence copula is selected.

We assess the goodness of fit between the different graphical models using the entire sub-sample, that is, we no longer split the data in a training and test set. 
\chng{As in Fig.~7 in \cite{hentschel2022statistical} and  Fig.~8(e) in \cite{engelke2024graphical}, an alternative approach to creating a $\chi$-plot for the \HR{} distribution is to convert the variogram elements $\Gamma_{ab}$ for distinct $a,b\in\dset$ to the tail dependence coefficients using the relationship $\chi_{ab}=2-2\Phi(\sqrt{\Gamma_{ab}}/2)$ for distinct $a,b\in\dset$.}

Fig.~\ref{fig:ChiComparisonHR} compares the tail dependence coefficients converted from the empirical variogram to those from the fitted variogram.
This comparision is done specifically for the extremal \HR{} graphical model based on the flight graph, the truncated \HR{} X-vine, and the extremal \HR{} graph derived from the \textsf{EGlearn} algorithm, respectively.
For the truncated \HR{} X-vine, a similar procedure to steps 1 and 2 in~Section~\ref{sss:fittedtaildepcoef} is employed to obtain the fitted variogram matrix and subsequently transform it into fitted $\chi$ values.
Similarly, like the truncated X-vine model, the truncated \HR{} X-vine exhibits more variability compared to the extremal \HR{} graphical model.

\begin{figure}[h!]%
    \centering
    \subfloat[]{\includegraphics[width=0.33\textwidth]{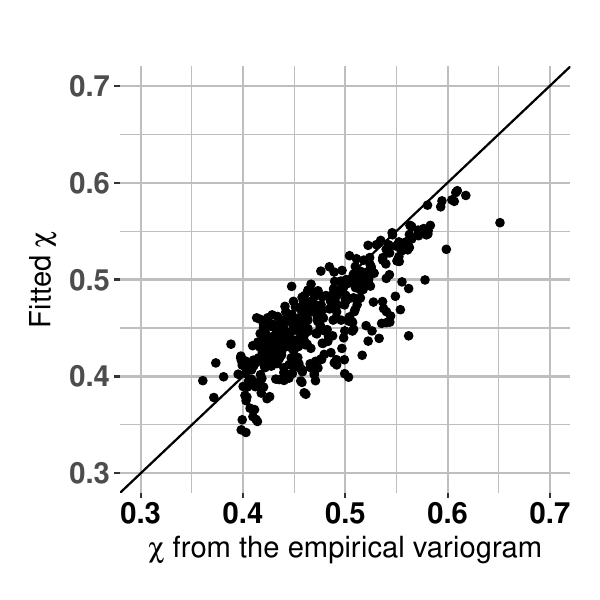}}%
    \enspace
    \subfloat[\label{fig:ChiFlightTruncXVineHR}]{\includegraphics[width=0.33\textwidth]{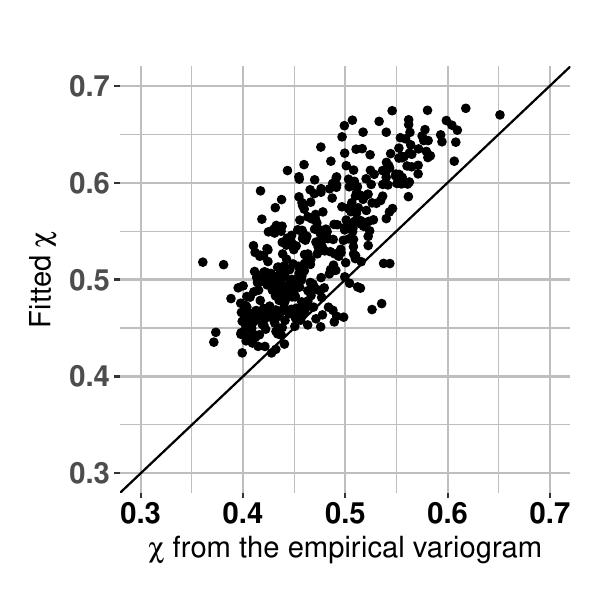}}%
    \enspace
    \subfloat[]{\includegraphics[width=0.33\textwidth]{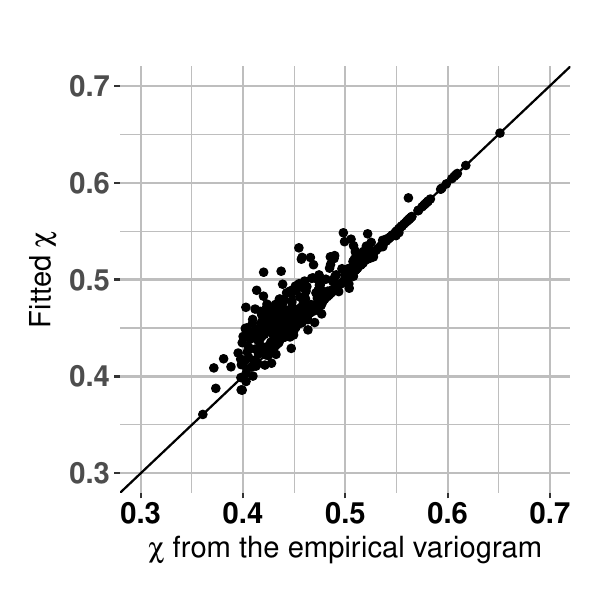}}%
\caption{\label{fig:ChiComparisonHR}
$\chi$-plots comparing \chng{pairwise $\hchi$ values converted from the empirical variogram to those from the fitted variogram for graphical models}: (a) the extremal \HR{} graphical model for the flight graph, (b) the truncated \HR{} X-vine with $q^*=7$, and (c) the extremal \HR{} graphical model with $\rho^*=0.1$ using the \textsf{EGlearn} algorithm.
}%
\end{figure}

\subsubsection{Estimated regular vine sequence of the truncated \HR{} X-vine}

In Fig.~\ref{fig:XvineHRgraph}, we sequentially display the selected trees $\tree_1$ to $\tree_7$ of the truncated \HR{} X-vine in Section~\ref{app:HRcomp}. For trees $\tree_2$ to $\tree_7$, there are $46$ out of $147$ edges for which the independence copula is selected, and these edges are not shown in the figure.

\begin{figure}[h!]%
    \centering
    \subfloat[]{\includegraphics[width=0.24\textwidth]{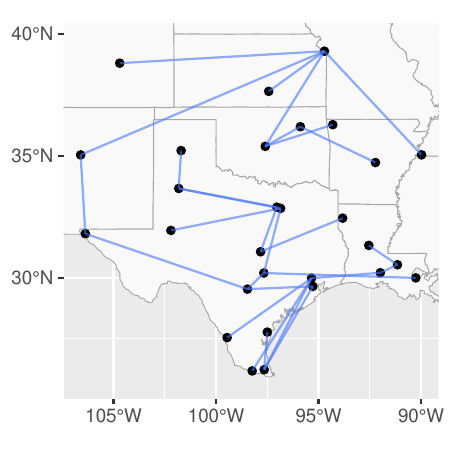}}%
    \enspace
    \subfloat[]{\includegraphics[width=0.24\textwidth]{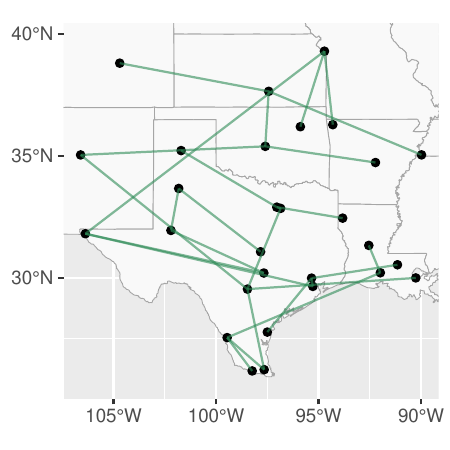}}%
    \enspace
    \subfloat[]{\includegraphics[width=0.24\textwidth]{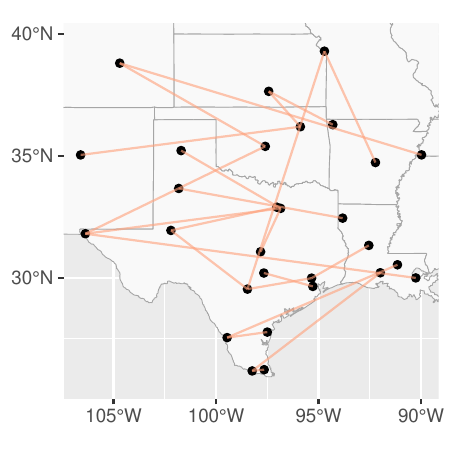}}%
    \enspace
    \subfloat[]{\includegraphics[width=0.24\textwidth]{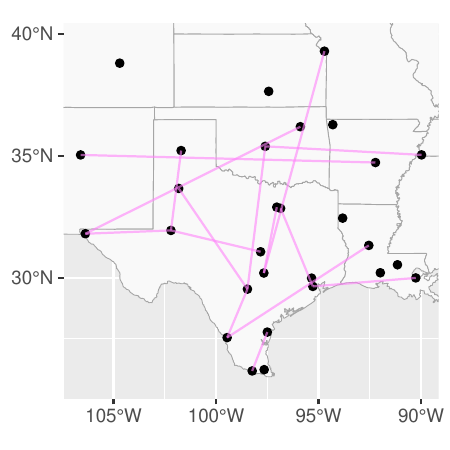}}%
    \\
    \subfloat[]{\includegraphics[width=0.24\textwidth]{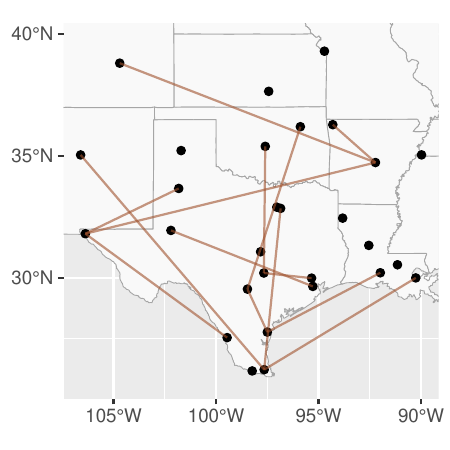}}%
    \enspace
    \subfloat[]{\includegraphics[width=0.24\textwidth]{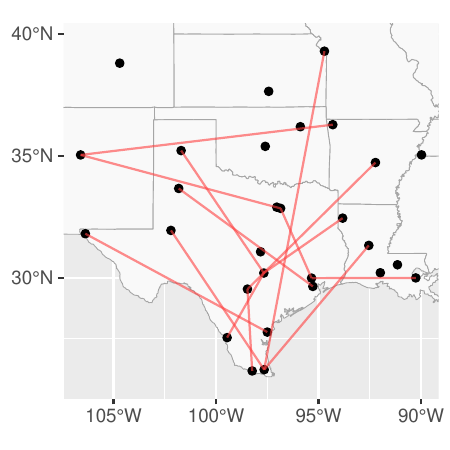}}%
    \enspace
    \subfloat[]{\includegraphics[width=0.24\textwidth]{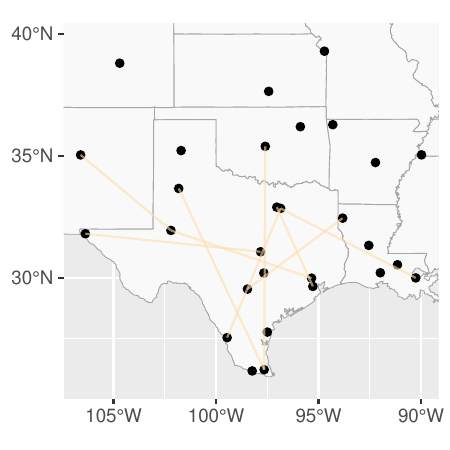}}%
    \caption{\label{fig:XvineHRgraph}
Sequential display of the first $q^* = 7$ trees of the truncated \HR{} X-vine fitted to the US flight data in Section~\ref{app:HRcomp}. For trees $\tree_2$ to $\tree_7$, the edges for which the independence copula is selected are not shown. (a) $\tree_1$ (b) $\tree_2$ (c) $\tree_3$ (d) $\tree_4$ (e) $\tree_5$ (f) $\tree_6$ (g) $\tree_7$. }%
\end{figure}

\end{document}